\documentclass[pra, showpacs, superscriptaddress, nofootinbib]{revtex4}
\usepackage{graphicx}
\usepackage{dcolumn}
\usepackage{bm}
\usepackage{multirow}
\usepackage{blkarray}
\usepackage{amssymb}
\usepackage{amsfonts}
\usepackage{amsmath}
\usepackage{amsthm}
\usepackage{color}
\usepackage{marvosym}
\usepackage{enumerate}
\newcommand{\mathbbm}[1]{\text{\usefont{U}{bbm}{m}{n}#1}}
\hyphenation{ALPGEN}
\hyphenation{EVTGEN}
\hyphenation{PYTHIA}
\setcounter{MaxMatrixCols}{20}

\usepackage{tikz}
\usetikzlibrary{matrix,positioning,decorations.pathreplacing}

\newtheorem{theorem}{Theorem}

\newtheorem{definition}[theorem]{Definition}
\newtheorem{lemma}[theorem]{Lemma}

\newcommand{\bra}[1]{\langle #1|}
\newcommand{\ket}[1]{|#1\rangle}

\newcommand{\C}{\ensuremath{\mathbbm C}}
\newcommand{\be}{\begin{equation}}
\newcommand{\ee}{\end{equation}}
\newcommand{\bea}{\begin{eqnarray}}
\newcommand{\eea}{\end{eqnarray}}

\newcommand{\kommentar}[1]{}

\newcommand{\identity}{\mathbbm{1}}

\newcommand{\forget}[1]{}

\newcommand{\one}{\mbox{$1 \hspace{-1.0mm}  {\bf l}$}}

\begin{document}

\title{The Entanglement Hierarchy of $2\times m \times n$ Systems}

\author{M. Hebenstreit}
\affiliation{Institute for Theoretical Physics, University of Innsbruck, Technikerstr. 21A,  6020 Innsbruck, Austria}
\author{M. Gachechiladze}
\author{O. G\"uhne}
\affiliation{Naturwissenschaftlich-Technische Fakult\"at,
Universit\"at Siegen, Walter-Flex-Str. 3, 57068 Siegen, Germany}
\author{B. Kraus}
\affiliation{Institute for Theoretical Physics, University of Innsbruck, Technikerstr. 21A,  6020 Innsbruck, Austria}


\begin{abstract}
We consider three-partite pure states in the Hilbert space $\C^2 \otimes \C^m \otimes \C^n$ and investigate to which states a given state can be locally transformed with 
a non-vanishing probability. Whenever the initial and final states are elements of 
the same Hilbert space, the problem can be solved via the characterization of the 
entanglement classes which are determined via stochastic operations and classical communication (SLOCC). In general, there are infinitely many SLOCC classes. However, 
when considering transformations from higher- to lower-dimensional Hilbert spaces, 
an additional hierarchy among the classes can be found. This hierarchy of SLOCC classes coarse grains SLOCC classes which can be reached from a common resource state of higher dimension.
We first show that a generic 
set of states in $\C^2 \otimes \C^m \otimes \C^n$ for $n=m$ is the union of infinitely many SLOCC classes, which can be parameterized by $m-3$ parameters. However, for 
$n \neq m$ there exists a single SLOCC class which is generic. Using this result, 
we then show that there is a full-measure set of states in  $\C^2 \otimes \C^m \otimes \C^n$ such that any state within this set can be transformed locally to a full 
measure set of states in any lower-dimensional Hilbert space. 
We also investigate resource states, which can be transformed to any state (not excluding any zero-measure set) in the 
smaller-dimensional Hilbert space. We explicitly derive a state in 
$\C^2 \otimes \C^m \otimes \C^{2m-2}$ which is the optimal common resource of 
all states in $\C^2 \otimes \C^m \otimes \C^m$. We also show that for any $n < 2m$ it is impossible to reach all states in $\C^2 \otimes \C^m \otimes \C^{\tilde{n}}$ whenever $\tilde{n}>m$. 
\end{abstract}

\pacs{03.65.Ta, 03.65.Ud}
\maketitle

\section{Introduction}

Entanglement, being a quantum mechanical feature without classical counterpart, 
has drawn a lot of attention in the last years. The existence of theoretical
protocols and practical applications in quantum information processing is mainly 
due to the subtle properties of multipartite entangled states. Thus, one of the 
main goals of quantum information theory is to gain a better understanding of 
the latter and an enormous amount of work has been devoted to the quantification 
and qualification of multipartite entanglement, for reviews see Refs.~\cite{hororeview, plenioreview, gtreview, siewertreview, zyczkowskireview}. Despite all the advances 
and the fact that bipartite entanglement is meanwhile well understood, we are still 
very far away form a complete characterization of multipartite entanglement.

In entanglement theory, quantum state transformations generated by local 
operations assisted by classical communication (LOCC) play a fundamental 
role. These are the transformations where each party is allowed to act locally
on its system, moreover, their actions can be correlated via classical 
communication. Entanglement is defined as the phenomenon that cannot be 
created with the restricted set of LOCC operations. In other words, entanglement 
theory is a resource theory where the free operations are LOCC transformations and the free states are the separable, i.e., the non-entangled states. Consequently, any entanglement measure has to be by definition non-increasing 
under LOCC.

If only two parties are considered, LOCC transformation can directly be 
characterized for pure states \cite{Nielsen}, but for mixed states many 
questions remain unclear \cite{kleinmann, ChLe12}. For the multipartite
case, even for pure states the characterization is difficult \cite{ChLe12}.
It is known that the set of LOCC transformations is strictly included in 
the mathematically simpler class of separable operations, but the latter is 
lacking a clear operational meaning. Moreover, it has been shown that the 
set of LOCC operations is not closed \cite{Ch11}. 

To gain nevertheless insight into multipartite entanglement, larger or smaller 
classes of operations have been considered. A smaller class of operations
are local unitary (LU) operations, and for these, equivalence of pure multi-qubit
states can directly be decided \cite{krausprl}. A larger class of operations
are stochastic LOCC (SLOCC) transformations. Here, one asks whether a state can be 
transformed into another locally, but requiring a non-vanishing success 
probability only \cite{DuVi00}. For pure states, one finds that two states are 
equivalent under SLOCC transformations, if there exists an invertible local 
operator which maps one state into the other. This leads to a coarse grained 
picture of state equivalences.
Moreover, a classification of entanglement obtained from single-party information only, so-called entanglement polytopes, has been studied \cite{WaDo13, SaOs14}.
 Whereas for three qubits there exist only two 
genuinely tripartite entangled SLOCC classes, the GHZ-class and the W-class \cite{DuVi00}, larger systems possess almost always infinitely many SLOCC classes \cite{VeDe02}. 

As there exist already infinitely many different SLOCC classes, an even more coarse grained classification of states is desireable.
The aim here is to group different SLOCC 
classes by considering resource states which belong to higher-dimensional 
Hilbert spaces. Let us explain the approach with a simple example. 
As mentioned above, there exist two fully entangled SLOCC classes for 
three qubits, the W-class and the GHZ-class. They are represented by the 
W-state 
\begin{equation}\label{W}
\ket{W} = \frac{1}{\sqrt{3}} (\ket{001}+\ket{010}+\ket{100})
\end{equation}
and the Greenberger-Horne-Zeilinger (GHZ) state
\begin{equation}\label{GHZ}
\ket{GHZ} = \frac{1}{\sqrt{2}} (\ket{000}+\ket{111}).
\end{equation}
Any fully entangled three-qubit state can be transformed via SLOCC into one
of these states, but these two states cannot be transformed into each other. 
One can ask, however, whether it is possible to find a common resource state 
for both classes by increasing the local dimension of one (or more) subsystems. 
A simple example would be to consider the state $\ket{\psi}=\ket{0}\ket{W}+\ket{1} \ket{GHZ}$ as a state in $\C^4\otimes \C^2 \otimes \C^2$. This state can be 
transformed into both classes, as Alice may make a $\sigma_z$-measurement on 
her ancilla qubit. 

The question we address here is whether such a state also exists in case infinitely 
many SLOCC classes exist and what is the smallest necessary increment in the dimension. Such a hierarchy of SLOCC classes has already been presented 
in Ref.~\cite{ChMi10}, where system sizes up to $\C^2 \otimes \C^3 \otimes \C^6$ have 
been investigated. Note that these considerations can lead to a kind of distance 
between SLOCC classes, as one may ask for the minimal dimension that is needed
to reach both SLOCC classes. States for which the local dimension of one Hilbert 
space needs to be increased only by one seem to be closer than those, for which 
a higher-dimensional Hilbert space is required. Note that there is an upper 
bound on the maximally required dimension, which is given by the increment of 
the dimension needed to perform teleportation. 

Here, we consider the three-particle case, where 
${\cal H}=\C^2\otimes \C^m\otimes \C^n$. Our choice of the system 
is motivated by the fact that it has been shown \cite{ChMi10} that 
for these dimensions, the SLOCC classification can be tackled with 
the help of so-called matrix pencils. A matrix pencil is essentially 
the set of all linear combinations of two matrices \cite{Ga59}.

We show that in order to obtain a generic set of states it is sufficient 
to increase the local dimension by just one. That is, a generic set of states 
in $\C^2\otimes \C^m\otimes \C^n$ is accessible by SLOCC operations from a state 
in $\C^2\otimes \C^m\otimes \C^{n+1}$. Surprisingly, there is not only one state 
which can be transformed into this generic set, but almost any state can be used 
for this purpose. More precisely, we show that there is a full-measure set of states in $\C^2\otimes \C^m\otimes \C^{n+1}$ such that any state in this set can be transformed 
into a generic set of states in $\C^2\otimes \C^m\otimes \C^n$. In order to derive 
this result, we characterize generic sets (in $\C^2\otimes \C^m\otimes \C^n$) of 
SLOCC classes of arbitrary dimensions. Let us mention here that we call a set 
generic if the complement of it is of lower dimension, i.e., if there exists a 
set of polynomial equations, in the coefficients of the state, which identifies 
the complement. Stated differently, a generic set is of full measure and almost 
all states belong to the generic set. We show that for $m=n$, the set of SLOCC 
classes, whose union leads to a generic set is parameterized by $m-3$ parameters. Interestingly, for the case that $m \neq n$ there exists a single SLOCC class 
which is generic. Moreover, any operator that can be applied by the party holding 
the qubit (Alice, $A$) can be performed locally by the other two. This resembles 
the bipartite case and is an interesting property when studying LOCC 
transformations.

The structure of the paper is the following. In Section II we first recall the 
relation between matrix pencils and three-partite pure states belonging to the 
Hilbert space $\C^{2}\otimes \C^{m}\otimes \C^{n}$. Then, we summarize some 
properties of matrix pencils. In particular we recall the so-called Kronecker 
normal form of matrix pencils, which can be considered as the analogue of 
the Jordan normal form for matrices. We then review the characterization of 
SLOCC classes in $2\times m\times n$ using matrix pencils and the conditions 
on more general state transformations \cite{ChMi10}. 

In Section III we consider generic states. We identify generic pencils and
the corresponding set of states, which are then of full measure. 
In Section IV we consider state transformations via local, however not 
invertible matrices. That is, we consider transformations from e.g. states in 
$2\times m\times n$ to states in $2\times m\times(n-1)$. We show there that a 
generic state in $2\times m\times n$ can be transformed into any state in a 
full measure set within $2\times m\times(n-1)$ for any $n$. As mentioned before, 
it turns out that this generic set of states (as the generic set of matrix pencils) 
is characterized by $m-3$ parameters in case $m=n$ and is a single SLOCC class in case 
$m \neq n$. 

In Section V we then study common resource states, i.e. states in $2\times m\times n$ which can be transformed into any state (not excluding zero measure sets) in a smaller-dimensional Hilbert space via 
local operations. We explicitly derive a state in $2\times m\times(2m-2)$ which is 
the common resource state of all states in $2\times m \times m$. We also show that
in $2\times m\times(2m-3)$ no common resource state for $2\times m \times m$ exists.
Finally, the Appendices contain the proofs of some statements in the main text 
and additional examples.

\section{Preliminaries}
In this section we introduce our notation and recall the relation between matrix pencils and three-partite states, where one of the constituent systems is a qubit. In particular, we review the findings presented in \cite{ChMi10} here. Throughout the paper we consider three--partite pure states belonging to the Hilbert space ${\cal H}=\C^{2}\otimes \C^{m}\otimes \C^{n}$, where w.l.o.g. $n\geq m$. Operators acting on the
first [held by Alice ($A$)], second [held by Bob ($B$)], and third system [held by Claire ($C$)] will be denoted by $A,B,C$ respectively.

\subsection{Linear Matrix pencils and their relation to pure states in $2\times m\times n$}

Two pure states $\ket{\psi}$ and $\ket{\phi}$ are said to be SLOCC--equivalent iff there exist local invertible operators $A$, $B$, and $C$ such that $\ket{\psi} = A\otimes B\otimes C \ket{\phi}$ \cite{DuVi00}. It can be easily seen that this indeed describes  an equivalence relation. In Ref. \cite{ChMi10} it has been shown that the SLOCC classes of pure states in $2\times m\times n$ can be characterized via linear matrix pencils. Moreover, a polynomial time algorithm deciding SLOCC equivalence of two pure states in $\mathcal{H}$ has been presented in Ref. \cite{ChMi10} \footnote{ In contrast to that, it has been shown that in case all systems belong to Hilbert spaces of dimension larger than two, then deciding whether two states are SLOCC equivalent or not is NP--hard \cite{Ha90}.}.

Any state, $\ket{\psi}$ in $\mathcal{H}$, can be written as
\begin{align}
	\ket{\psi} =  \ket{0}_A\ket{R}_{BC} + \ket{1}_A\ket{S}_{BC} =  \left[ \ket{0}_A (R \otimes \identity) + \ket{1}_A (S \otimes \identity) \right] \ket{\phi^+_n}_{BC} =  \left[ \ket{0}_A (\identity \otimes R^T) + \ket{1}_A (\identity \otimes S^T) \right] \ket{\phi^+_m}_{BC},
\end{align}
where $R$ and $S$ are complex $m \times n$ matrices and $\ket{\phi^+_k} = \sum_{i=0}^{k-1} \ket{i i}$ \footnote{Here and throughout the remainder of the paper, we ignore the normalization of states.}. To the pair of matrices $(R,S)$ a linear matrix pencil, $\mathcal{P}(R, S)$, which is a homogeneous matrix polynomial of degree 1 in variables $\mu$ and $\lambda$, i.e., $\mathcal{P}(R, S)=\mu R+\lambda S$ can be associated. That is, for given orthonormal bases in all local Hilbert spaces, there is a one-to-one correspondence between quantum states and matrix pencils. Throughout the remainder of the paper, we will denote matrix pencils either by $\mathcal{P}(R,S)$, $\mathcal{P}(\mu,\lambda)$, or simply by $\mathcal{P}$. SLOCC classes of $2\times m \times n$ states are then characterized by considering a normal form of the corresponding matrix pencils, which we will review below.

As the action of local invertible operators $A \otimes B \otimes C$ on quantum states is crucial for studying the SLOCC classes, we will now review how such an action transforms the corresponding matrix pencils. We will first consider the operator, $A$, acting on A's system and then consider the operators, $B,C$ acting on system $B$ and $C$ resp..
It is easy to verify that the operator $A=\begin{pmatrix} \alpha & \beta \\ \gamma & \delta \end{pmatrix}$, with $\det{A} \neq 0$, transforms the state,
$\ket{\psi} =  \ket{0}_A\ket{R}_{BC} + \ket{1}_A\ket{S}_{BC}$, to $(A\otimes \one \otimes \one)\ket{\psi}=\ket{0}_A (\alpha \ket{R}_{BC} + \beta \ket{S}_{BC}) + \ket{1}_A ( \gamma \ket{R}_{BC} + \delta \ket{S}_{BC})$. Hence, the corresponding pencil is transformed from $\mu R + \lambda S$ to $(\alpha \mu + \gamma \lambda) R + (\beta \mu + \delta \lambda) S$. Stated differently, the operator $A$, on the first system leads to a new matrix pencil, where the variables $\mu$ and $\lambda$ are transformed to the new variables $\hat{\mu}$ and $\hat{\lambda}$ via an invertible linear transformation. More precisely, $(\mu, \lambda)^T \rightarrow A^T (\mu, \lambda)^T=(\hat{\mu},\hat{\lambda})^T = (\alpha \mu + \gamma \lambda, \beta \mu + \delta \lambda)^T$, where $\alpha \delta - \beta \gamma \neq 0$. The operators on the second and third system transform the state $\ket{\psi}$ to  $(\identity \otimes B \otimes C) \ket{\psi} = \left[ \ket{0}_A (B R C^T \otimes \identity) + \ket{1}_A (B S C^T \otimes \identity) \right] 
\ket{\phi^+_n}_{BC}$. This corresponds to a transformation of the matrix pencil $\mu R + \lambda S$ to $B (\mu R + \lambda S) C^T$. From these observations it is evident that in order to identify the different SLOCC classes of states in $2\times m\times n$, a classification of matrix pencils is required. In the following subsections we first recall a normal form for matrix pencils and then review how it can be used to characterize the SLOCC classes. We will then use these results in order to analyze possible transformations from states in $2\times m\times n$ to states in a smaller dimensional Hilbert space.

\subsection{Normal form of matrix pencils}
\label{sec:pencils}
In this subsection we summarize some properties of matrix pencils that are used to characterize SLOCC classes. For an introduction to matrix pencils see \cite{Ga59} and \cite{ChMi10}. 
Let us begin by introducing some definitions and by recalling the normal form of matrix pencils. Here, and in the following we will denote by $GL_k$ the set of complex invertible $k\times k$ matrices. Moreover, we are using the following definition.
\begin{definition}
Two matrix pencils $\mathcal{P}(R,S)$  and $\mathcal{P}(R',S')$ of the same dimension, $m \times n$, are strictly equivalent to each other if there exist $B\in GL_m$ and $C\in GL_n$ such that $B\mathcal{P}(R,S) C^T= \mathcal{P}(R',S')$, i.e., $\forall \mu, \lambda \ B (\mu R + \lambda S) C^T = \mu R' + \lambda S'$.
\end{definition}
From the linearity of the pencil it follows that $\mathcal{P}(R,S)$  and $\mathcal{P}(R',S')$ are strictly equivalent iff there exist $B\in GL_m$ and $C\in GL_n$ such that $R'=BRC^T$ and $S'=BRC^T$. Hence, two states can be transformed into each other via operators applied by $B$ and $C$ iff the corresponding matrix pencil are strictly equivalent to each other.

Similarly to the well-known Jordan Normal Form (JNF) for matrices, a normal form for linear matrix pencils $\mathcal{P}(R,S)$ has been introduced, the so-called Kronecker Canonical Form (KCF) \cite{Kr90}. We denote a matrix pencil in KCF by $\mathcal{P}_{KCF}$ and discuss its structure in detail below.
A KCF of a matrix pencil $\mathcal{P}(R,S)$ has the following generalized block-diagonal form \footnote{Note that here a generalized block diagonal matrix denotes a matrix composed of rectangular blocks which are arranged diagonally. Moreover, note that $h$ or $g$ (or both) might be $0$.},
\begin{align}
\label{eq:kcf}
\mathcal{P}_{KCF}= \left\{0^{h\times g},  L_{\epsilon_1},\dots, L_{\epsilon_a}, L_{\nu_1}^T,\dots, L_{\nu_b}^T,J	\right\}.
\end{align}
The first block, $0^{h\times g}$, indicates that the first $h\geq 0$ rows as well as the first $g\geq 0$ columns of the matrix vanish. The other blocks will be defined below (see Eq. (\ref{eq:lblock}), Eq. (\ref{eq:jblock}), and Eq. (\ref{eq:mblock})). We restrict ourselves here to $2 \times m \times n$ states that are truly entangled in all dimensions\footnote{We assume everywhere that $n\leq 2m$.}, i.e., the local reduced density matrices, $\rho_A$, $\rho_B=R R^\dagger+ S S^\dagger, \rho_C=R^T R^\ast+ S^T S^\ast$, are of rank $2$, $m$, and $n$ respectively. Note that this implies that there does not exist a constant vector $\vec{v}$ which lies in the left or in the right nullspace of both $R$ and $S$. This is equivalent to the fact that there are no rows or columns identically 0 in the KCF of the pencil. Thus, we restrict ourselves to pencils for which $h=0$ and $g=0$, here and throughout the remainder of the article \footnote{Note, however, that the condition $h=g=0$ is not sufficient for the state to be truly entangled in all dimensions as Alice's 
system might still separate from Bob and Claire.}. 
A pencil is of rank $r$, if $r$ is the largest integer such that there exists an $r$-minor which is non-vanishing for some choice of $\mu$ and $\lambda$. Recall that an $r$-minor is given by the determinant of a matrix constructed by discarding all but $r$ rows and $r$ columns from the matrix pencil. We write $\operatorname{rk}\mathcal{P}(R,S) = r$ and in the following $r$ will denote the rank of the considered matrix pencil unless stated otherwise.

Before introducing the required definitions to explain the KCF, let us mention here that the blocks $L_{\epsilon}$ and $L_{\nu}^T$ are determined by the so-called \emph{minimal indices} of a matrix pencil, while $J$, which is itself a block-diagonal matrix, is constituted of blocks specified by the so-called \emph{elementary divisors} or, equivalently, through the so-called \emph{eigenvalues}, $x_i  \in \mathbb{C} \cup \{\infty\}$, and \emph{eigenvalue size signatures}, $s_i$, of a matrix pencil.

Let us now recall how the eigenvalues and corresponding size signatures as well as the minimal indices can be calculated and thus how the structure of the matrix pencil can be determined.
We define $D_k(\mu,\lambda)$ for $1 \leq k \leq r$, where $r = \operatorname{rk}\mathcal{P}(R,S)$, as the (polynomial) greatest common divisor of all the $k$-minors of a given matrix pencil $\mathcal{P}(R,S)$, i.e., $D_k(\mu,\lambda)=\gcd(\operatorname{minors}[\mathcal{P}(R,S),k])$. Furthermore, by convention $D_0 = 1$ and $D_k = 0$ for $k > r$.
The so-called \emph{invariant polynomials} $E_k(\mu,\lambda)$ are defined as $E_k(\mu,\lambda)=\frac{D_{k}(\mu,\lambda)}{D_{k-1}(\mu,\lambda)}$, where $1\leq k\leq r$. Since $D_{k-1}(\mu,\lambda)$ always divides $D_{k}(\mu,\lambda)$, $E_k(\mu,\lambda)$ is a homogeneous polynomial of $\mu$ and $\lambda$.
$D_r(\mu,\lambda)$ can be uniquely factorized as $D_r=\mu^{q-t} p_1 p_2\dots p_t$, where $p_i = \mu x_i + \lambda$, $t\leq q$, and $q \leq r$ is an integer that we will specify later. The $x_i$ occurring in $p_i$ are called the finite eigenvalues of the pencil, while the factor $\mu$, if present, corresponds to the eigenvalue $\infty$. From now on we will consider the set of distinct eigenvalues $\{x_i\}$. We will call the number of times that the term $x_i \mu + \lambda$ (the term $\mu$) is present in the factorization of $D_r$ the \emph{algebraic multiplicity} $e^i$ ($e^\mu$) of $x_i\neq \infty$ ($x_i = \infty$), respectively. The size signature corresponding to an eigenvalue $x_i \neq \infty$ is defined as the sequence of integers $s_i = (e_1^i, \ldots, e_r^i)$, where $e_j^i$ is the largest integer such that $(x_i \mu + \lambda)^{e_j^i}$ divides $E_j(\mu,\lambda)$. Likewise, $s_\mu$ is the signature of the eigenvalue $\infty$ and is given by $s_\mu = (e_1^\mu, \ldots, e_r^\mu)$, where $e_j^\mu$ is the 
largest integer such that $\mu^{e_j^\mu}$ divides $E_j(\mu,\lambda)$. The finite (infinite) elementary divisors corresponding to an eigenvalue $x_i$ are defined as $(x_i \mu + \lambda)^{e_j^i}$ in case $x_i \neq \infty$ ($\mu^{e_j^\mu}$ in case $x_i = \infty$), respectively.

Let us now also recall the notion of minimal indices of a matrix pencil.
To this end, let us consider the right null space of a matrix pencil $\mathcal{P}(R,S)$. More precisely, we consider the vectors $\vec{x'}_i$ of homogenous polynomials in $\lambda$ and $\mu$ with entries of coinciding degree, which fulfill
\begin{align}
(\mu R+\lambda S)\vec{x'}_i (\mu,  \lambda)=0.
\end{align}
We denote the degree of the entries of $\vec{x'}_i$ by $\epsilon'_i$, respectively. We can then write $\vec{x'}_i = \sum_{j=0}^{\epsilon'_i} \vec{x'}_{i j} \mu^{\epsilon'_i - j}\lambda^j$, where $\vec{x'}_{i j} \in \mathbb{C}^n$. Let furthermore $a$ denote the number of linearly independent vectors for which Eq. (\theequation) holds, i.e., the maximal number of $\vec{x'}_i$ obeying Eq. (\theequation) for which $q_1 \vec{x'}_1 + \ldots + q_k \vec{x'}_k=0$ with arbitrary polynomials $q_i$ has no non-trivial solution. By iteratively choosing linearly independent vectors $\vec{x}_i$ of minimal degree, $\epsilon_i$, and ordering them by degree in ascending order, one obtains the sequence of minimal indices, $(\epsilon_1, \ldots, \epsilon_a)$. 
Although the sequence $(\vec{x}_1, \ldots, \vec{x}_a)$ is not uniquely determined, their degrees $(\epsilon_1, \ldots, \epsilon_a)$ are \cite{Ga59}.
 Similarly, a sequence $(\nu_1, \ldots, \nu_b)$ can be obtained considering the equations $(\mu R^T+\lambda S^T)\vec{x}_i (\mu,  \lambda)=0$, i.e., considering the left null--space.

Using the definitions introduced above, we can now present the definition of the matrices in Eq. (\ref{eq:kcf}). The matrices denoted by $L_\epsilon$ are called right null-space blocks. $L_\epsilon$ has dimensions $\epsilon\times (\epsilon+1)$ and is defined as
\begin{equation}
\label{eq:lblock}
L_{\epsilon}=\begin{pmatrix}\lambda & \mu & 0 & \dots & 0\\
0 & \lambda & \mu & \ddots  & \vdots \\
\vdots & \ddots & \ddots & \ddots & 0 \\
0 & \hdots & 0 & \lambda & \mu
\end{pmatrix}.
\end{equation}
Conversely, $L^T_{\nu}$ is a left null-space block. It is given by the transpose of a right null-space block $L_\nu$. Note that $L_{\epsilon}$ has exactly one (linearly independent) vector in its right  null-space, which we denote by
\begin{align}
\label{eq:nullvector}
\vec{x}^\epsilon = \sum_{j=0}^{\epsilon} (-1)^j \mu^{\epsilon - j} \lambda^{j} \vec{e}_j,
\end{align}
where $\vec{e}_j$ denotes the $j$th unity vector. Hence, the sequences of minimal indices $(\epsilon_1, \ldots, \epsilon_a)$ and $(\nu_1, \ldots, \nu_b)$ determine the sizes of these blocks. As indicated in Eq. (\ref{eq:kcf}), there are $a$ $L$-blocks of respective sizes $\epsilon_i$ and $b$ $L^T$-blocks of respective sizes $\nu_i$.
The eigenvalues and their corresponding size signatures define the structure of the $J$ block as follows.
\begin{equation}
\label{eq:jblock}
J=\{M^{e_1^1}(x_1),\dots,  M^{e_r^l}(x_l), N^{e_1^\mu},N^{e_2^\mu},\dots, N^{e_r^\mu} \},
\end{equation}
where $l$ is the number of distinct finite eigenvalues and where $N^{e_j^\mu}$ and $M^{e_j^i}(x_i)$ are $e_j^\mu \times e_j^\mu$ and $e_j^i\times e_j^i$ matrices respectively given by
\begin{equation}
\label{eq:mblock}
N^{e_j^\mu}=\begin{pmatrix}\mu & \lambda & 0 & \dots & 0\\
0 & \mu & \lambda &  & \\
 &  & \ddots & \ddots & 0\\
 &   &  & \mu & \lambda\\
0 &  & \hdots & 0 & \mu
\end{pmatrix}\ \ \ \ \ \mbox{and }\quad M^{e_j^i}(x_i)=\begin{pmatrix}x_i \mu + \lambda & \mu & 0 & \dots & 0\\
0 & x_i \mu + \lambda & \mu &  &  \\
 &   & \ddots & \ddots & 0\\
 &   &  & x_i \mu + \lambda & \mu\\
0 &  & \hdots & 0 & x_i \mu + \lambda
\end{pmatrix}.
\end{equation}
Here, $e_j^\mu=0$ or $e_j^i=0$ indicates that the corresponding block is not present. Using the notation $J(x_i) = \bigoplus_j M^{e_j^i}(x_i)$ in case $x_i \neq \infty$ and $J(\infty) =  \bigoplus_j N^{e_j^\mu}$ we have $J = \bigoplus_i J(x_i)$.

Note that $J$ is a $q \times q$ matrix, i.e., the total size of $J$ is given by the degree of $D_r$ which can also be expressed as the sum of all signatures of the eigenvalues, $\sum_j(\sum_i e_j^i)+e_j^\mu$. The total dimensions for null-space blocks are then $(m-q) \times (m-q)$, which can be expressed using the minimal indices as $(b+ \sum_{i=1}^{a} \epsilon_i + \sum_{i=1}^{b} \nu_i) \times (a+ \sum_{i=1}^{a} \epsilon_i + \sum_{i=1}^{b} \nu_i)$ \footnote{Recall that $h=g=0$ in Eq. (\ref{eq:kcf}).}.
After properly defining an ordering of the blocks in a final step, a unique normal form for matrix pencils, the Kronecker Canonical Form, is obtained. It has been shown in Ref. \cite{Kr90} that any matrix pencil is strictly equivalent to its KCF. Hence, 
strict equivalence of two matrix pencils can then be decided by comparing the respective KCFs, as stated by the following lemma, which is proven in Ref. \cite{Kr90}.
\begin{lemma}\label{Kronecker}
Two matrix pencils are strictly equivalent to each other iff they have the same Kronecker Canonical Form.
\end{lemma}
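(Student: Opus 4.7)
The plan is to split the biconditional. The forward implication (same KCF $\Rightarrow$ strictly equivalent) is short and follows from the existence result cited just above: every pencil is strictly equivalent to a KCF, and strict equivalence is an equivalence relation (symmetry from invertibility of $B,C$, transitivity from composing pairs $B_2 B_1$, $C_1 C_2$). Two pencils sharing the same KCF are both strictly equivalent to it, hence to each other.

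The substantive direction is the converse: strictly equivalent pencils have the same KCF. My strategy is to show that the data $(\epsilon_i)$, $(\nu_i)$, $\{x_i\}$, $\{s_i\}$ determining the KCF is a complete set of invariants under strict equivalence. For the invariant polynomials, under $\mathcal{P}' = B \mathcal{P} C^T$ the Cauchy--Binet formula expresses each $k$-minor of $\mathcal{P}'$ as a bilinear combination of $k$-minors of $\mathcal{P}$ with scalar coefficients drawn from minors of $B$ and $C$; taking greatest common divisors, one obtains $D_k(\mathcal{P}') = c_k D_k(\mathcal{P})$ for a nonzero constant $c_k$. Thus the invariant polynomials $E_k = D_k / D_{k-1}$ agree up to scalars on strictly equivalent pencils, and so do their factorizations into eigenvalue factors — yielding invariance of the distinct eigenvalues together with their size signatures. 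For the minimal indices, I observe that if $\vec{x}(\mu,\lambda)$ is a polynomial right null-vector of $\mathcal{P}$ of degree $\epsilon$, then $(C^T)^{-1} \vec{x}$ is a polynomial right null-vector of $\mathcal{P}'$ of the same degree, and the map $\vec{x} \mapsto (C^T)^{-1} \vec{x}$ is a degree-preserving $\C[\mu,\lambda]$-module isomorphism between the two right null-modules. The iterative minimal-degree extraction defining $(\epsilon_i)$ therefore returns the same sequence for $\mathcal{P}$ and $\mathcal{P}'$; the sequence $(\nu_i)$ is handled by transposition.

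The main obstacle I anticipate is not the invariance of each piece of data, which reduces to the bookkeeping above, but the converse accounting: verifying that no two inequivalent KCFs carry the same data, so that the KCF is actually determined by these invariants. This requires a block-by-block computation on the canonical blocks of Eqs.~(\ref{eq:lblock})--(\ref{eq:mblock}): each $L_\epsilon$ contributes exactly one polynomial null-vector of degree $\epsilon$ (namely $\vec{x}^\epsilon$ of Eq.~(\ref{eq:nullvector})) and contributes no elementary divisors, since its $\epsilon$-minors have no common factor; each $L_\nu^T$ contributes symmetrically to the left null-space; and each $M^{e_j^i}(x_i)$ or $N^{e_j^\mu}$ contributes exactly the advertised elementary divisor and no polynomial null-vectors. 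Assembling the total contributions across blocks, matching them against the invariants of the full pencil (direct sums act additively on null-spaces and multiplicatively on characteristic minors), and fixing a canonical ordering of the blocks shows that the data $(\epsilon_i), (\nu_i), \{x_i\}, \{s_i\}$ pins down the KCF uniquely. Combined with the invariance established above, this yields the claim.
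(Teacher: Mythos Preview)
The paper does not give its own proof of this lemma: it states the result and cites Kronecker's original work \cite{Kr90} (and the textbook \cite{Ga59}) for the proof. So there is no argument in the paper to compare against.

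Your outline is the standard classical argument and is essentially correct. A small point of care in the Cauchy--Binet step: what the formula gives directly is that every $k$-minor of $B\mathcal{P}C^T$ is a $\C$-linear combination of $k$-minors of $\mathcal{P}$, whence $D_k(\mathcal{P})$ divides $D_k(\mathcal{P}')$; the equality up to a unit then follows by invoking invertibility of $B,C$ and applying the same divisibility in the reverse direction. Your phrasing ``$D_k(\mathcal{P}') = c_k D_k(\mathcal{P})$'' is the conclusion, not what Cauchy--Binet yields in one shot, so make the two-sided divisibility explicit. Similarly, in the minimal-index step, it is worth saying explicitly that $(C^T)^{-1}$ is a constant matrix and therefore preserves polynomial degree and $\C[\mu,\lambda]$-linear independence; you do this, but the ``module isomorphism'' language can be made more concrete. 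The block-by-block accounting in your final paragraph is exactly what is needed to close the argument, and matches the standard treatment in \cite{Ga59}.
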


Recall that the KCF is uniquely determined by the eigenvalues, their size signatures and the minimal indices of a matrix pencil. The statement could thus also be formulated the following way. Two matrix pencils are strictly equivalent iff their eigenvalues with corresponding size signatures and the minimal indices coincide.

Let us at this place summarize the procedure to determine the KCF of an arbitrary given matrix pencil. First, one calculates the greatest common divisors of the $k$-minors of the matrix pencil, $D_k(\mu,\lambda)=\gcd(\operatorname{minors}(\mathcal{P}(R,S),k))$. One then calculates the invariant polynomials $E_k(\mu,\lambda)=\frac{D_{k}(\mu,\lambda)}{D_{k-1}(\mu,\lambda)}$. One determines the eigenvalues and their size signatures by considering the factorizations of the invariant polynomials as explained above. In the next step the minimal indices of the matrix pencil are determined. To this end, one determines a homogenous polynomial vector of minimal degree, $\vec{x}_1$, which fulfills $\mathcal{P} \vec{x_1} = 0$. The degree of $\vec{x}_1$ is the first minimal index $\epsilon_1$. One iteratively determines $\vec{x}_i$ of minimal degree which is linearly independent (in the sense explained above) from $\vec{x}_1, \ldots, \vec{x}_{i-1}$ and fulfills $\mathcal{P} \vec{x_i} = 0$ until no such vector can be found 
any more. This procedure uniquely yields the sequence of minimal indices $(\epsilon_1, \ldots, \epsilon_a)$. Considering $\mathcal{P}^T \vec{x_i} = 0$ one similarly obtains the sequence of left minimal indices $(\nu_1, \ldots, \nu_b)$.

\subsection{Examples of computing the KCF}
\label{sec:kcfcomputation}
In this subsection, we first present the basic examples of matrix pencils corresponding to the well known W and GHZ states in $\mathbb{C}^2 \otimes \mathbb{C}^2 \otimes \mathbb{C}^2$. Then we show how the KCF can be computed for two other examples.

The matrix pencils $\mathcal{P}_W$ and  $\mathcal{P}_{GHZ}$ corresponding to the three-qubit states $\ket{W}$ and $\ket{GHZ}$ given in Eq. (\ref{W}) and (\ref{GHZ}) are given by
\begin{equation}
\mathcal{P}_{W}=\begin{pmatrix}\lambda & \mu\\
\mu & 0
\end{pmatrix}\quad\quad\mbox{and\quad\quad}\mathcal{P}_{GHZ}=\begin{pmatrix}\mu & 0\\
0 & \lambda
\end{pmatrix}.
\end{equation}
It can easily be verified that through column and/or row permutations, these matrix pencils can be brought into the respective KCF 
\begin{equation}
\mathcal{P}_{W,KCF}=  N^{2} =\begin{pmatrix}\mu & \lambda\\
0 & \mu
\end{pmatrix}\quad\quad\mbox{and\quad\quad}\mathcal{P}_{GHZ,KCF}= M^{1}(0) \oplus N^{1} = \begin{pmatrix}\lambda & 0\\
0 & \mu
\end{pmatrix}.
\end{equation}

Let us now show how to compute the KCF of the following matrix pencil
\begin{align}
\mathcal{P} = \begin{pmatrix}
\lambda & \mu & 0 & 0 & \lambda \\
\lambda & \lambda & \mu & \lambda + \mu & 0\\
3 \mu & -\lambda & -\mu & 2 \mu & 0\\
\mu & 0 & 0 & 0 & 2 \mu\\
\end{pmatrix}.
\end{align}
First, we calculate the $4$-minors of this matrix pencil. There are 5 of them which are obtained by calculating the determinant of the submatrix obtained by deleting one of the five columns of $\mathcal{P}$. The minors are
$- \mu^3 (3 \mu + \lambda)$,
$\mu (3 \mu + \lambda) (i \sqrt{2} \mu + \lambda) (-i \sqrt{2} \mu + \lambda)$,
$\mu^2 \lambda (3 \mu + \lambda)$, and two of the minors equal 
$2 \mu^3 (3 \mu + \lambda)$.
As their greatest common divisor is $\mu (3 \mu + \lambda)$, $D_4 = \mu (3 \mu + \lambda)$. Moreover, the rank of the pencil is 4. To determine $D_3$, we observe that the 3-minor obtained by deleting the first and fourth column and the third row of $\mathcal{P}$ equals $2 \mu^3$, and the 3-minor obtained by deleting the third and fourth column and the fourth row of $\mathcal{P}$ equals $- \lambda^2 (3 \mu + \lambda)$. Hence, their greatest common divisor is 1 and therefore $D_3=1$. From that it follows, as $D_{k-1}$ divides $D_k$, that $D_2=D_1=1$. We can now calculate the invariant polynomials $E_k = \frac{D_k}{D_{k-1}}$ and obtain $E_4 =  \mu (3 \mu + \lambda)$ and $E_3 = E_2 = E_1 = 1$. Hence, the elementary divisors are $3 \mu + \lambda$ and $\mu$. Thus, we know that the pencil has two distinct eigenvalues, $x_1 = 3$ corresponding to the divisor $(3 \mu + \lambda)$ and $x_2 = \infty$ corresponding to the divisor $\mu$. Hence a block $J = \begin{pmatrix}
3 \mu + \lambda & 0 \\
0 & \mu
\end{pmatrix}$ is present in the KCF of the pencil. Let us now determine the nullspace blocks present in the KCF of the pencil. One possibility to do so would be to determine the minimal indices of the pencil $\mathcal{P}$ by the procedure described above. However, in this example the nullspace structure can be derived using dimension arguments. From the dimensionality of the pencil it follows that the nullspace blocks form a block of size $2 \times 3$, as a $J$ block of size $2 \times 2$ is present. Due to the fact that the rank of the pencil is 4 we have that $h=0$. Now, as the only way to distribute $L_{\epsilon_i}$ and $L^T_{\nu_i}$ blocks in a $2 \times 3$ matrix is $L_2$, we have that the KCF of the pencil is
\begin{align}
\mathcal{P}_{KCF} = \begin{pmatrix}
\lambda & \mu & . & . & . \\
. & \lambda & \mu & . & .\\
. & . & . & 3 \mu + \lambda & .\\
. & . & . & . & \mu\\
\end{pmatrix}.
\end{align}

Let us now consider as a second example the following $m \times m$ matrix pencil, which will be relevant later
\begin{equation}
 \mathcal{P}=\begin{pmatrix}
\lambda & \mu & \cdot & \cdot & \cdot \\
\cdot & \ddots & \ddots & \cdot & \cdot\\
\cdot & \cdot & \lambda & \mu & \cdot\\
\cdot & \cdot & \cdot & \lambda & \mu\\
-a_0 \mu & \cdots & \cdot & -a_{m-2} \mu & -a_{m-1} \mu + \lambda &
\end{pmatrix}.
\end{equation}
To determine the pencil's KCF, we first evaluate the greatest common divisor of its $k$-minors, $D_k$, for $1\leq k\leq m$. First note that $D_k=1$ for all $1 \leq k \leq m-1$. This can be seen as follows. Among the $(m-1)$-minors, there is one minor equal to $\lambda^{m-1}$ and another one equal to $\mu^{m-1}$. Hence, their greatest common divisor equals 1 and therefore $D_{m-1} = 1$ which implies that $D_{k} = 1$ for all $1 \leq k \leq m-1$. The $m$-minor, however, equals the determinant of $\mathcal{P}$ which can be expressed as
\begin{equation}
\label{eq:dmwitha}
D_m(\mu,\lambda)= \lambda^{m}+\sum^{m-1}_{i=0}(-1)^{m-i}a_{i}\mu^{m-i}\lambda^{i}.
\end{equation}
As $D_m$ is of degree $m$, there must exist $m$ (not necessarily distinct) eigenvalues. Recall that this implies that the size of the $J$ block equals $q \times q$ with $q = m$. Hence, the KCF of the matrix pencil contains no nullspace blocks, only the $J$ block is present. Moreover, as $D_m(\mu=0, \lambda) \neq 0$, all the eigenvalues are finite. Denoting the (here possibly non-distinct) eigenvalues as $x_1, \ldots, x_m$, $D_m$ can, hence, be expressed as
 \begin{equation} \label{eq:dmwithx}
 D_m=(\mu x_1+\lambda)(\mu x_2+\lambda)\dots (\mu x_m+\lambda).
 \end{equation}
The coefficients, $a_0, a_1,\dots, a_{m-1}$, are thus given by
 \begin{equation}\label{coeff}
a_k = (-1)^{m-k} \bigg(\sum_{\underset{\sum_{j=1}^{m}i_{j}=(m-k)}{i_{1},i_{2},\dots,i_{m}\in\{0,1\}}}x^{i_1}_1x^{i_2}_2\dots x^{i_m}_m \bigg),
 \end{equation}
 which can be verified by comparing the coefficients of the polynomials in $\mu$ and $\lambda$ in Eq. (\ref{eq:dmwitha}) and Eq. (\ref{eq:dmwithx}).

Let us denote by $\{\tilde{x}_i\}_i$ the set of distinct eigenvalues. Considering that $D_{m-1}=1$ and therefore $E_k = 1$ for all $k \leq m-1$, the size signature corresponding to an eigenvalue $\tilde{x}_i$ is $s_i = (0, \ldots, 0, m_i)$, where $m_i$ is the algebraic multiplicity of eigenvalue $\tilde{x}_i$. Hence the KCF of the matrix pencil is of the form
\begin{align}
\label{equ:companionkcf}
\mathcal{P}_{KCF}=\begin{pmatrix}
\tilde{x}_1 \mu +  \lambda & \mu & \cdot & \cdot & \cdot & \cdot & \cdot\\
\cdot &\ddots & \ddots & \cdot & \cdot & \cdot & \cdot\\
\cdot & \cdot & \tilde{x}_1 \mu + \lambda & \mu & \cdot& \cdot & \cdot\\
\cdot & \cdot & \cdot & \tilde{x}_1 \mu + \lambda & \cdot & \cdot & \cdot\\
\cdot & \cdot & \cdot & \cdot & \tilde{x}_2 \mu + \lambda & \mu & \cdot\\
\cdot & \cdot  & \cdot & \cdot & \cdot & \ddots & \ddots
\end{pmatrix} =  \bigoplus_{i} J(\tilde{x}_i) =  \bigoplus_{i} M^{m_i}(\tilde{x}_i),
\end{align}
where the size of $J(\tilde{x}_i)$ is $m_i \times m_i$.

In case the eigenvalues $x_1, x_2, \dots, x_m \in \mathbb{C}$ are all distinct, the KCF is diagonal and one can easily determine the matrices $B$ and $C^T$ bringing the given matrix pencil to its KCF, i.e., $B  \mathcal{P} C^T = \mathcal{P}_{KCF}$. In order to see that, note that the matrix pencil $\mathcal{P}$ is of the form $\lambda \identity + \mu \mathcal{C}$, where $\mathcal{C}$ is the transpose of a so-called  companion matrix \cite{HoJo13}. We can hence use the Vandermonde matrix
\begin{equation}
\label{EqVandermonde}
V=\begin{pmatrix}
1 & x_{1} & x_1^2 & \ldots & x_1^{m-1}\\
1 & x_{2} & x_2^2 & \ldots & x_2^{m-1}\\
\vdots & \vdots & \vdots & \ddots & \vdots \\
1 & x_{m} & x_m^2 & \ldots & x_m^{m-1}\\
\end{pmatrix}
\end{equation}
to diagonalize $\mathcal{C}$ as $V \mathcal{C}^T V^{-1} = \operatorname{diag}(x_1, x_2, \ldots, x_m)$, where $\operatorname{diag}(x_1, x_2, \ldots, x_m)$ denotes a diagonal matrix with the entries $x_1, x_2, \ldots, x_m$ \cite{HoJo13}. Hence, the matrices bringing $\mathcal{P}$ to its KCF are $B={V^{-1}}^{T}$ and $C^T =  V^T$. 
Note that if the eigenvalues, $x_i$, are degenerate then $\mathcal{C}$ is not diagonalizable. In this case a similarity transformation $\tilde{V}$ can be used, which transforms $\mathcal{C}$ to a block diagonal matrix, where each block is a Jordan block of size given by the algebraic multiplicity of the eigenvalues. Hence, the matrix pencil, $\lambda \identity + \mu \mathcal{C}$, can be transformed with this similarity transformation $\tilde{V}$ to its KCF given in Eq. (\ref{equ:companionkcf}).

 \subsection{Characterization of SLOCC classes in $2\times m\times n$}
 \label{sec:alicedetails}

 Lemma \ref{Kronecker} implies that two states can be transformed into each other via some invertible operator of the form $\identity \otimes B \otimes C$ iff the corresponding matrix pencils have the same KCF. However, the effect of the invertible operator, $A$, applied on the first system (the qubit), remains to be taken into account. As
 explained above it transforms the variables of the matrix pencil $(\mu, \lambda)$ to some new variables $(\hat{\mu},\hat{\lambda})^T=A^T(\mu, \lambda)^T$. Let us now review how the matrix pencil, or more precisely, the KCF of a matrix pencil, changes under the action of Alice.

 It has been shown that the minimal indices of a matrix pencil cannot be altered by $A$ \cite{ChMi10, DeEd95}. However, the eigenvalues, $x_i$, of a matrix pencil can indeed change, while the size signatures $s_i$ remain unchanged. Denoting by $A=\begin{pmatrix} \alpha & \beta \\ \gamma & \delta \end{pmatrix}$, with $\det{A} \neq 0$ the operator applied to the first system, the finite eigenvalues $x_i \neq \infty$ transform to \cite{ChMi10}
  \begin{align}
   x_i \rightarrow \left\{\begin{array}{lr}
        \frac{\alpha x_i + \beta}{\gamma x_i + \delta}, & \text{if } \gamma x_i + \delta \neq 0\\
       \infty, & \text{if } \gamma x_i + \delta = 0\\
        \end{array}\right.,
 \end{align}
 while the infinite eigenvalue changes as
  \begin{align}
   \infty \rightarrow \left\{\begin{array}{lr}
        \frac{\alpha}{\gamma}, & \text{if } \gamma \neq 0\\
       \infty, & \text{if } \gamma = 0\\
        \end{array}\right..
 \end{align}
As can be seen from here, $A$'s action can always be used to bring a state to a form with no infinite eigenvalues \cite{ChMi10,Ga59}. Hence, in order to study SLOCC classes, only states with finite eigenvalues need to be considered. With this we are ready to formulate the theorem which relates matrix pencils to the SLOCC equivalence classes of $2 \times m \times n$-dimensional states.
 \begin{theorem}[\cite{ChMi10}]
 \label{theo:fractional}
Two  $2 \times m \times n$-dimensional pure states $\ket{\psi}$ and $\ket{\phi}$ for which their corresponding matrix pencils have only finite eigenvalues are SLOCC equivalent iff they have the same minimal indices, matching eigenvalue size signatures, and the eigenvalues  $\{x_i\}$ and $\{x_i'\}$, respectively, are related by a linear fractional transformation
\begin{equation}
\frac{\alpha x_i+ \gamma}{\beta x_i+\delta}=x_i', \text{ for some } \alpha, \beta, \gamma, \delta \in \mathbb{C}, \text{ where } \alpha \delta - \beta \gamma \neq 0.
\end{equation}
\end{theorem}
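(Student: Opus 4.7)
The plan is to combine the Kronecker characterization of strict equivalence (Lemma \ref{Kronecker}) with the already-described effect of Alice's action on the matrix pencil, obtaining both directions of the iff from a single structural observation: the KCF of a pencil is completely determined by the minimal indices, the eigenvalues, and their size signatures; under a general invertible $A\otimes B\otimes C$, the $B,C$ part is a strict equivalence preserving the whole KCF, while the $A$ part leaves minimal indices and size signatures invariant and moves the eigenvalues by a Möbius transformation.

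For the forward direction I would start from $\ket{\psi}=A\otimes B\otimes C\,\ket{\phi}$ with $A,B,C$ invertible. Decomposing the action into the variable change $(\mu,\lambda)^T\mapsto A^T(\mu,\lambda)^T$ on the pencil followed by the strict equivalence $\mathcal{P}\mapsto B\,\mathcal{P}\,C^T$, I would apply the facts recalled just before the theorem: the variable change preserves minimal indices and size signatures and transforms finite eigenvalues by the explicit linear fractional rule, and the strict equivalence preserves all of these invariants (Lemma \ref{Kronecker}). Since both $\mathcal{P}_\psi$ and $\mathcal{P}_\phi$ are assumed to have only finite eigenvalues, I only need the finite-eigenvalue case of the transformation rule, and the three listed conditions fall out immediately.

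For the backward direction I would construct $A\otimes B\otimes C$ explicitly. Given matching minimal indices, matching size signatures, and eigenvalues $\{x_i\}$ and $\{x_i'\}$ related by $x_i'=(\alpha x_i+\gamma)/(\beta x_i+\delta)$ with $\alpha\delta-\beta\gamma\neq 0$, I would first choose an invertible $A$ implementing this Möbius map (matching the parametrization of the text — a trivial relabelling of entries). Because all $x_i$ and $x_i'$ are finite by hypothesis, one can arrange that the chosen $A$ never sends any $x_i$ to $\infty$, so the pencil obtained from $\mathcal{P}_\psi$ by the $A$-variable change has exactly the eigenvalues $\{x_i'\}$ with unchanged signatures and unchanged minimal indices. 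Its KCF therefore coincides (after the usual reordering of blocks) with that of $\mathcal{P}_\phi$, so by Lemma \ref{Kronecker} there exist invertible $B,C$ with $B\,\mathcal{P}_\psi(A^T(\mu,\lambda))\,C^T=\mathcal{P}_\phi(\mu,\lambda)$. Translating back to states gives $\ket{\phi}\propto A\otimes B\otimes C\,\ket{\psi}$, i.e.\ SLOCC equivalence.

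The only step requiring real care is ensuring, in the backward direction, that the prescribed Möbius map is actually realised by an invertible $A$ and that no $x_i$ is mapped to the pencil's pole: this is why the hypothesis that the eigenvalues are \emph{finite} is used, and why the nondegeneracy condition $\alpha\delta-\beta\gamma\neq 0$ is stated. Everything else is bookkeeping: translating the transformation rules for $A$, $B$, $C$ on states into the corresponding rules on pencils, and invoking Lemma \ref{Kronecker} to identify strict equivalence of pencils with equality of KCFs.
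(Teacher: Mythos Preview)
The paper does not actually prove this theorem; it is cited from Ref.~\cite{ChMi10} and stated without proof, with the preceding paragraphs of Section~\ref{sec:alicedetails} merely collecting the ingredients (invariance of minimal indices and size signatures under $A$, the explicit M\"obius action on eigenvalues, and Lemma~\ref{Kronecker}). Your proposal assembles precisely those ingredients into a proof in the way the paper's exposition clearly intends, so it is correct and matches the route implicit in the paper's review material.
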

 Note that here the invertible operation $A=\begin{pmatrix} \alpha & \beta \\ \gamma & \delta \end{pmatrix}$ is applied by Alice.
As noted in \cite{ChMi10}, it is always possible to uniquely relate two triplets of distinct eigenvalues $\{x_1,x_2,x_3\}$ and $\{x_1',x_2',x_3'\}$ with a linear fractional transformation \cite{Brown}. In particular, this implies that a state with a corresponding matrix pencil that has only three distinct eigenvalues is SLOCC equivalent to any other state with a corresponding matrix pencil that has matching minimal indices, eigenvalue size signatures, but three arbitrary (distinct) eigenvalues. For more than three distinct eigenvalues this does not hold in general.

\subsection{Local transformations form larger- to smaller-dimensional Hilbert space}

In the previous two subsections we reviewed how the SLOCC classes in $2 \times m\times n$ can be characterized via matrix pencils. Here, we go one step beyond that and recall the necessary and sufficient condition for the existence of a local (non--invertible) transformation from a state $\ket{\psi}$ with local ranks $2 \times m \times n$ to a state $\ket{\phi}$ with local ranks $2 \times m \times k$ where $k<n$. In order to do so, we consider now a (non-invertible) SLOCC transformation of a state with local rank $2 \times m \times n$ to a state with local rank $2 \times m \times (n-1)$. The general case can be deduced from that by iterating the process. Note that such a transformation requires a non-invertible SLOCC operation performed by $C$. The necessary and sufficient conditions for the existence of this operation are stated in the following theorem which is presented in \cite{ChMi10}.
Here and in the following, we will denote the matrix pencil that is associated to a state $\ket{\psi}$ by $\mathcal{P}_\psi$.

\begin{theorem}[\cite{ChMi10}]\label{theorem:traforules} Let $\ket{\psi}$ and $\ket{\phi}$ be states with local ranks $2\times m\times n$ and $2\times m\times(n-1)$  and let $c_1, c_2, \dots, c_n$ denote the columns of the pencil $\mathcal{P}_{\psi}{(\mu, \lambda)}$. Then $\ket{\psi}$ can be mapped to $\ket{\phi}$ via some non-invertible SLOCC operators iff for some $1\leq i \leq n$, there exist constants $a_1, a_{i-1},a_{i+1},\dots, a_n$  and some invertible linear transformation $(\mu ,\lambda)\mapsto (\hat{\mu},\hat{\lambda})$ such that the pencil $\mathcal{P}_{\psi_i}{(\hat{\mu}, \hat{\lambda})}=[c_1+a_1c_i,\dots,  c_{i-1}+a_{i-1}c_i, c_{i+1}+a_{i+1}c_i, \dots, c_n+a_nc_i ]$ is strictly equivalent to $\mathcal{P}_{\phi}(\mu, \lambda)$.
\end{theorem}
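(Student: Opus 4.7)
The plan is to translate the SLOCC transformation $\ket{\phi}=(A\otimes B\otimes C)\ket{\psi}$ into a statement about matrix pencils. Since the output has local ranks $2\times m\times(n-1)$ while the input has $2\times m\times n$, $A$ must be an invertible $2\times 2$ matrix and $B$ an invertible $m\times m$ matrix (else Alice's or Bob's rank would decrease), while $C$ is a non-invertible $(n-1)\times n$ matrix of rank exactly $n-1$ (so that Claire's reduced state of $\ket{\phi}$ retains full rank $n-1$). Using the pencil transformation rules of Section~II.A, this is equivalent to $B\,\mathcal{P}_\psi(\hat{\mu},\hat{\lambda})\,C^T=\mathcal{P}_\phi(\mu,\lambda)$ with $(\hat{\mu},\hat{\lambda})^T=A^T(\mu,\lambda)^T$.

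The key observation is that only the column space $V\subseteq\C^n$ of $C^T$ is relevant: any invertible right multiplication of $C^T$ by $D^T\in GL_{n-1}$ produces an extra right action that can be absorbed into a strict equivalence. Since $C$ has rank $n-1$, $V$ is a codimension-one subspace, so $V=\ker\ell$ for some non-zero linear functional $\ell=\sum_j\ell_je_j^{*}$. Picking any index $i$ with $\ell_i\neq 0$ and rescaling so that $\ell_i=1$, the set $\{e_j-\ell_je_i\colon j\neq i\}$ is a basis of $V$. With $a_j:=-\ell_j$, the matrix $\tilde{C}^T$ whose columns are $e_j+a_je_i$ spans $V$, so $C^T=\tilde{C}^TD^T$ for some $D\in GL_{n-1}$. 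Right-multiplying $\mathcal{P}_\psi(\hat{\mu},\hat{\lambda})$ by $\tilde{C}^T$ then produces the columns $c_j(\hat{\mu},\hat{\lambda})+a_jc_i(\hat{\mu},\hat{\lambda})$ for $j\neq i$, i.e., precisely the pencil $\mathcal{P}_{\psi_i}(\hat{\mu},\hat{\lambda})$. Substituting, the transformation identity becomes $B\,\mathcal{P}_{\psi_i}(\hat{\mu},\hat{\lambda})\,D^T=\mathcal{P}_\phi(\mu,\lambda)$, which is the strict equivalence asserted by the theorem, so the condition is necessary.

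Conversely, given the index $i$, constants $\{a_j\}$, the invertible substitution $(\mu,\lambda)\mapsto(\hat{\mu},\hat{\lambda})$, and invertible $B',D'$ realizing the stated strict equivalence, I would recover $A\in GL_2$ from the linear substitution, define $\tilde{C}^T$ as above, and set $B=B'$, $C=D'\tilde{C}$. By construction $C$ has rank $n-1$, and a direct check shows $A\otimes B\otimes C$ maps $\ket{\psi}$ to $\ket{\phi}$. The main obstacle I expect is the careful bookkeeping of the right-multiplication freedom of $C^T$: a priori the action of a rank-$(n-1)$ map $C$ on the $n$ pencil columns is an arbitrary $(n-1)$-tuple of linear combinations, and one must argue cleanly that this reduces, modulo $GL_{n-1}$ on the right, to the pivot-and-add structure parametrized by a single index $i$ and scalars $\{a_j\}_{j\neq i}$ appearing in the theorem. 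Once this hyperplane parametrization is in place, the rest is a direct unpacking of the pencil transformation rules.
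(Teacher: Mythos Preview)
Your proposal is correct and follows the same approach as the paper. The paper does not give a full proof (the theorem is cited from \cite{ChMi10}) but remarks that the key idea is that any rank-$(n-1)$ matrix $\tilde{C}$ can be brought to the canonical form $C=P_{n-1}(\one+\ket{\phi}\bra{i})$ by invertible row operations (reduced row echelon form); your hyperplane/linear-functional parametrization of the column space of $C^T$ is exactly the same reduction, phrased dually, and the remainder of your argument (invertibility of $A$ and $B$ from the rank constraints, and absorbing the residual $GL_{n-1}$ factor into the strict equivalence) is the straightforward bookkeeping that the paper leaves implicit.
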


Let us remark here, that in the transformation from $\ket{\psi}$ to $\ket{\phi}$, the pencil $[c_1+a_1c_i,\dots,  c_{i-1}+a_{i-1}c_i, c_{i+1}+a_{i+1}c_i, \dots, c_n+a_nc_i ]$ is obtained by the third party, Claire, applying an operator $C$ that is given via the $n-1 \times n$ matrix 
\begin{align}
\label{eq:clairemap}
C=P_{n-1}(\one+ \ket{\phi}\bra{i}),
\end{align}
where $P_{n-1}$ denotes the projector onto the $n-1$ dimensional subspace spanned by all standard basis vectors but $\vec{e}_i$ and $\ket{\phi}=\sum_{j\neq i} a_j \ket{j}$. The intuition behind the theorem is that any matrix $\tilde{C}$ can be brought to its so-called \emph{reduced row echelon form} $C$ using invertible row operations, i.e., matrices that act on the left, only \cite{HoJo12}.

Similarly, transformations from states of local rank $2 \times m \times n$ to states with local rank $2 \times (m-1) \times n$ can be achieved by some non-invertible operation $B$ performed by Bob. Non-invertible operators applied by Alice are not considered as they would always leave Alice not entangled with the rest of the parties.

Let us illustrate this result with a simple example. Here and in the following we will say, similarly to Ref. \cite{ChMi10}, that a state has a State KCF (SKCF) if its corresponding matrix pencil has the KCF \footnote{Note that for simplicity, we do not fix the eigenvalues or the order of them, which implies that the SKCF is not unique here. Contrary to that a unique SKCF has been defined in Ref. \cite{ChMi10}.}. 
Let us consider the two states in SKCF
$\ket{\psi} = \ket{001} + \ket{012} + \ket{100} + \ket{111} + \ket{123} \in \mathbb{C}^2\otimes \mathbb{C}^3\otimes \mathbb{C}^4$ and
$\ket{\phi} = \ket{001} + \ket{012} + \ket{100} + \ket{111} + \ket{122} \in \mathbb{C}^2\otimes \mathbb{C}^3\otimes \mathbb{C}^3$.
Their corresponding matrix pencils are given by
\begin{align}
\mathcal{P}_\psi =  L_2 \oplus M^1(0) = \begin{pmatrix}
\lambda & \mu & \cdot & \cdot \\
\cdot & \lambda & \mu & \cdot \\
\cdot & \cdot & \cdot & \lambda \\
\end{pmatrix}
\text{ and }
\mathcal{P}_\phi =  M^3(0) = \begin{pmatrix}
\lambda & \mu & \cdot  \\
\cdot & \lambda & \mu  \\
\cdot & \cdot & \lambda \\
\end{pmatrix}.
\end{align}
It is easy to see that a matrix pencil that is strictly equivalent to $\mathcal{P}_\phi$ can be obtained (in particular, $\mathcal{P}_\phi$ can be obtained) by adding the last column of $\mathcal{P}_\psi$ to the third column of $\mathcal{P}_\psi$ and discarding the last column afterwards. According to Theorem \ref{theorem:traforules} this implies that $\ket{\psi}$ can be transformed to $\ket{\phi}$. The operator $C$ given in Eq. (\ref{eq:clairemap}), which performs the described action, reads
\begin{align}
C^T = \begin{pmatrix}
1 & . & . \\
. & 1 & . \\
. & . & 1 \\
a_1 & a_2 & a_3 \\
\end{pmatrix},
\end{align}
where $a_1 = a_2 = 0$ and $a_3 = 1$. As can be easily seen we have $\mathcal{P}_\psi C^T = \mathcal{P}_\phi$ and thus $\identity \otimes \identity \otimes C \ket{\psi} = \ket{\phi}$. Note that in general, operators $A$ and $B$ not equal to $\identity$ may be required. Note also that the operators are not unique.

\section{Characterization of generic states in $2\times m \times n$}

In this section we characterize a generic set of states in $2\times m \times n$. In order to do so, we first introduce some lemmata concerning a necessary and sufficient condition for a matrix pencil to be a direct sum of right null-space blocks $L_{\epsilon_i}$ only, and a sufficient condition on the matrix pencil corresponding to a state such that the operation applied by the first party can be inverted by a transformation of party $2$ and $3$. We will use these lemmata for both, the characterization of generic sets of states and to prove that a generic state can always be transformed into any state in a full measure set of states of smaller dimension. The latter is the main result of the subsequent section.
Using then the characterization of generic matrix pencils presented in Ref. \cite{DeEd95}, we show that the union of SLOCC classes of states corresponding to a generic set of matrix pencils is of full measure (see Theorem \ref{theo:genstates}). Interestingly, it turns out that this generic set of states (similarly to the generic set of matrix pencils) is characterized by $m-3$ parameters in case $m=n$ and has no free parameter in case $m\neq n$. 

Whereas the proof of Theorem \ref{theo:genstates} is presented in Appendix A, we present the proofs of some lemmata in the main text. The reason for that is that the proofs are illuminating for readers not that familiar with matrix pencils.

In the following lemma a characterization of matrix pencils being a direct sum of right null-space blocks is presented.
As stated before, we consider here only full ranked states in $2 \times m \times n$, i.e., states whose local rank is maximal. In particular, for all the KCF of the corresponding matrix pencils $h=g=0$ in Eq. (\ref{eq:kcf}). Recall that for any $m \times n$ matrix pencil, $D_m=1$ implies that the rank of the pencil is $m$.

\begin{lemma}\label{lemma:dmequalone} Let $\mathcal{P}(\mu, \lambda)$ denote a $m \times n$ matrix pencil, where $n>m$. Then the following two statements are equivalent.
\begin{enumerate}[(i)]
\item  $D_m=1$, 
\item $\mathcal{P}(\mu, \lambda)$ is strictly equivalent to a direct sum of right nullspace blocks $L_{\epsilon_i}$ only. 
\end{enumerate}
 In particular, for an $m\times (m+1)$ matrix pencil $\mathcal{P}(\mu, \lambda)$ we have $D_m=1$ iff $\mathcal{P}(\mu, \lambda)$ is strictly equivalent to $L_m$.
\end{lemma}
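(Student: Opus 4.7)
The plan is to reduce both implications to an analysis on the Kronecker Canonical Form of $\mathcal{P}$. By the Cauchy--Binet formula, a strict equivalence $\mathcal{P} \mapsto B \mathcal{P} C^T$ with $B \in GL_m$, $C \in GL_n$ rescales each $m$-minor by a nonzero constant (a product of $\det B$ and an $m$-minor of $C$), so $D_m$ is preserved up to a unit. It therefore suffices to prove both directions on the KCF of $\mathcal{P}$.

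For $(ii) \Rightarrow (i)$, I would take $\mathcal{P}$ in the form $\bigoplus_i L_{\epsilon_i}$ and compute $D_m$ directly. Because the KCF is block-diagonal and each $L_{\epsilon_i}$ has one more column than rows, an $m \times m$ submatrix (obtained by selecting $m$ of the $n$ columns) can be nonsingular only if one column is omitted from each $L_{\epsilon_i}$. A routine triangularity check shows that omitting the $k$-th column of $L_{\epsilon}$ yields, up to sign, the monomial $\lambda^{k-1}\mu^{\epsilon-k+1}$, so every nonzero $m$-minor is of the form $\pm\lambda^A \mu^B$ with $A+B=m$. Omitting the first column of every block produces $\pm\mu^m$, omitting the last produces $\pm\lambda^m$, and $\gcd(\mu^m,\lambda^m)=1$ gives $D_m=1$.

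For $(i) \Rightarrow (ii)$, I would exclude the remaining KCF block types. If some $L_{\nu}^T$ block is present, its $\nu+1$ rows have support on only $\nu$ columns of the whole KCF, so in any choice of $m$ columns these rows live in a space of dimension at most $\nu$ and are linearly dependent; hence every $m$-minor vanishes, contradicting $D_m = 1$. In the absence of $L^T$-blocks, the $J$-part of total size $q$ must be fully included in any square submatrix giving a nonzero $m$-minor (the $q$ rows of $J$ only meet the $q$ columns of $J$), so each nonzero $m$-minor factors as $(\text{minor of the }L\text{-part}) \cdot \det(J)$. Since $\det(J) = \prod_{i,j}(x_i\mu+\lambda)^{e_j^i} \prod_j \mu^{e_j^\mu}$ has degree $q$, it divides $D_m$; $D_m = 1$ forces $q=0$. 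Hence the KCF is a direct sum of $L_{\epsilon_i}$ blocks. The particular $m\times(m+1)$ case then follows from the dimension constraints $\sum \epsilon_i = m$ and $\sum(\epsilon_i+1) = m+1$, which force a single block $L_m$.

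The main (mild) obstacle is the block-diagonal minor bookkeeping: verifying cleanly that every nonzero $m$-minor of $\bigoplus_i L_{\epsilon_i}$ is a monomial of bidegree $m$, and that, once $L^T$-blocks have been excluded, $\det(J)$ really appears as a factor of every nonzero $m$-minor. Both facts are elementary consequences of the block-diagonal KCF structure and a dimension count on rows versus columns per block; the rest reduces to polynomial gcd arithmetic.
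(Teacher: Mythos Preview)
Your approach is essentially the paper's: pass to the KCF, exhibit $\mu^m$ and $\lambda^m$ among the $m$-minors of $\bigoplus_i L_{\epsilon_i}$ for (ii)$\Rightarrow$(i), and for (i)$\Rightarrow$(ii) eliminate the other block types. The $L^T$-exclusion argument is identical to the paper's. The one genuine difference is how you exclude $J$: the paper does this first, invoking directly that for a rank-$m$ pencil $D_m$ is the product of all elementary divisors; you instead remove $L^T$-blocks first and then observe that in a block-diagonal $\big(\bigoplus_i L_{\epsilon_i}\big)\oplus J$ every nonzero $m$-minor carries $\det J$ as a factor. Both routes are short and correct, and yours has the mild advantage of not appealing to the elementary-divisor formula for $D_r$.

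One small correction to your opening sentence: right multiplication by $C^T\in GL_n$ does \emph{not} rescale each $m$-minor individually. By Cauchy--Binet, each $m$-minor of $B\mathcal{P}C^T$ equals $\det(B)$ times a $\mathbb{C}$-linear combination of the $m$-minors of $\mathcal{P}$ (the coefficients being $m$-minors of $C^T$). Since the transformation is invertible, the ideal generated by the $m$-minors---and hence their $\gcd$---is still preserved up to a unit, so your conclusion that $D_m$ is a strict-equivalence invariant stands; only the justification needs this adjustment.
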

\begin{proof}
If $\mathcal{P}(\mu,\lambda)$ is strictly equivalent to a direct sum of $L_{\epsilon_i}$ blocks, then it is straightforward to see that $D_m=1$, as one of the $m$-minors equals $\mu^m$ and another one equals $\lambda^m$ and their greatest common divisor is therefore 1. Let us now prove that the converse also holds. First note, that in case $D_m=1$ there is no $J$ block present in the pencil's KCF.
To see this, note that the rank of the matrix pencil is $m$, which implies that $D_m = \mu^{e^\mu} \prod_{i,x_i\neq \infty} (x_i \mu + \lambda)^{e^i}$, which can only equal 1 if there exists no eigenvalue and hence no $J$ block \footnote{One can also see this by considering the invariant polynomials. Let us assume that a $J$ block is present, which implies that there exist eigenvalues, either finite or infinite. Recall that if this is the case, then there exists an invariant polynomial $E_k = \frac{D_k}{D_{k-1}} = \mu^{e^\mu_{k}} \prod_{i,x_i\neq \infty} (x_i \mu + \lambda)^{e_k^i} \neq 1$. As $D_k$ divides $D_m$ for all $k \leq m$ and we have at least one $k$ for which $D_k\neq 1$, we also have that $D_m \neq 1$ which contradicts the assumption that $D_m = 1$. This implies that there is no $J$ block present.}. We hence have a direct sum of $L_{\epsilon_i}$ and $L_{\nu_i}^T$ blocks only. Due to dimensionality reasons, the number of $L_{\epsilon_i}$ blocks equals $n-m$ plus the number of $L_{\nu_i}^T$ blocks. 
We now have to show, however, that there cannot be any left nullspace blocks. Let us assume the contrary, i.e., there is at least one $L_{\nu}^T$ block present. We will show that in this case all $m$-minors vanish implying $D_m=0$, which contradicts $D_m=1$. Without loss of generality, in particular having the same set of $m$-minors, we can write the matrix pencil as  
\begin{equation}
\begin{tikzpicture}[baseline=(current  bounding  box.center),
style1/.style={
  matrix of math nodes,
  every node/.append style={text width=#1,align=center,minimum height=2ex},
  nodes in empty cells,
  left delimiter=(,
  right delimiter=),
  },
]
\matrix[style1=0.2cm] (1mat)
{
  \lambda & & & \vphantom{\lambda} & & & & & & & & & \\
  \mu & \lambda & & & & & & & & & & & \\
  & \mu &  & & & & & & & & & & \\
  & & & \lambda & & & & & & & & & \\
  \vphantom{\mu}& & & \mu & & & & & & & & & \\
  & & & & & & & & & & & & \\
  & & & & & & & & & & & & \\
  & & & & & & & & & & & & \\
  & & & & & & & & & & & & \\
  & & & & & & & & & & & & \\
  & & & & & & & & & & & & \\
};

\node 
  at ([xshift=0pt,yshift=0pt]1mat-3-3) {$\ddots$}; 
\node
  at ([xshift=0pt,yshift=0pt]1mat-4-3) {$\ddots$}; 
\node[font=\huge] 
  at ([xshift=0pt,yshift=0pt]1mat-9-9) {$\cdots$}; 

\draw[solid]
  (1mat-1-1.north west) -- (1mat-1-4.north east);
\draw[solid]
  (1mat-5-1.south west) -- (1mat-5-4.south east);
\draw[solid]
  (1mat-1-1.north west) -- (1mat-5-1.south west);
\draw[solid]
  (1mat-1-4.north east) -- (1mat-5-4.south east);
  
\draw[solid]
  (1mat-6-5.north west) -- (1mat-6-13.north east);
\draw[solid]
  (1mat-11-5.south west) -- (1mat-11-13.south east);
\draw[solid]
  (1mat-6-5.north west) -- (1mat-11-5.south west);
\draw[solid]
  (1mat-6-13.north east) -- (1mat-11-13.south east);

\draw[decoration={brace,raise=12pt},decorate]
  (1mat-1-13.north east) -- 
  node[right=15pt] {$\nu + 1$} 
  (1mat-5-13.south east);
\draw[decoration={brace,raise=12pt},decorate]
  (1mat-6-13.north east) -- 
  node[right=15pt] {$m - (\nu + 1)$} 
  (1mat-11-13.south east);  
\draw[decoration={brace,mirror,raise=5pt},decorate]
  (1mat-11-1.south west) -- 
  node[below=7pt] {$\nu$} 
  (1mat-11-4.south east);  
\draw[decoration={brace,mirror,raise=5pt},decorate]
  (1mat-11-5.south west) -- 
  node[below=7pt] {$n - \nu$} 
  (1mat-11-13.south east);

\node at ([xshift=-20pt,yshift=-1.2pt]1mat.west) {$\mathcal{P} =$};  
\end{tikzpicture}.
\end{equation}

For computing any of the $m$-minors, there are at most $\nu$ vectors which are non-vanishing in the first $\nu + 1$ components. Hence, this $(\nu+1)$-dimensional subspace can never be spanned by those vectors and hence, the minor vanishes. This completes the proof. The statement about $m \times (m+1)$ matrix pencils follows immediately from the fact that if $n=m+1$ there exists no other direct sum of right null space blocks which amounts to the required dimension. 
\end{proof}

Let us now show that states corresponding to matrix pencils, which are direct sums of (right and left) null-space blocks only are very special and somehow resemble bipartite states. In fact, as shown in the following lemma, for those states the transformation accomplished by applying an operator on the first system can also be achieved by party $2$ and $3$. 
\begin{lemma}
\label{lemma:undoalice}
 If a matrix pencil $\mathcal{P}_\psi(\mu, \lambda)$ corresponding to a state $\ket{\psi}$ consists only of null-space blocks, i.e., $\mathcal{P}_{\psi} = \left\{L_{\epsilon_1},\dots L_{\epsilon_a}, L_{\nu_1}^T,\dots L_{\nu_b}^T \right\}$, then any invertible action of Alice, $A$, can be undone by Bob and Charlie with some invertible matrices $B$ and $C$.
 \end{lemma}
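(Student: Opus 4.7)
The plan is to translate the claim into the language of matrix pencils and apply Lemma~\ref{Kronecker}. When Alice applies an invertible $A=\begin{pmatrix}\alpha&\beta\\\gamma&\delta\end{pmatrix}$, the pencil $\mathcal{P}_\psi(\mu,\lambda)=\mu R+\lambda S$ gets replaced, as a pencil in the formal variables $\mu,\lambda$, by $\mathcal{P}_\psi(\hat\mu,\hat\lambda)=\mu(\alpha R+\beta S)+\lambda(\gamma R+\delta S)$, since $(\hat\mu,\hat\lambda)^T=A^T(\mu,\lambda)^T$. What I need to produce is $B\in GL_m$ and $C\in GL_n$ with $B\,\mathcal{P}_\psi(\hat\mu,\hat\lambda)\,C^T=\mathcal{P}_\psi(\mu,\lambda)$, because unpacking this at the state level gives $(\one\otimes B\otimes C)(A\otimes\one\otimes\one)\ket\psi=\ket\psi$. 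By Lemma~\ref{Kronecker}, this reduces to showing that the two pencils have the same KCF.

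The first step is to observe that $\mathcal{P}_\psi(\hat\mu,\hat\lambda)$, viewed in the new variables, has no $J$-block. From the transformation rule for eigenvalues recalled in Section~\ref{sec:alicedetails}, Alice's action permutes the (finite and infinite) eigenvalues amongst themselves by a linear fractional transformation, which is a bijection of $\mathbb{C}\cup\{\infty\}$. Under the hypothesis that $\mathcal{P}_\psi$ consists only of null-space blocks, it carries no eigenvalues at all, so the transformed pencil carries none either and therefore contains no $M$- or $N$-blocks.

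The second step invokes the result of \cite{ChMi10,DeEd95} that Alice's action leaves the minimal indices $(\epsilon_1,\ldots,\epsilon_a)$ and $(\nu_1,\ldots,\nu_b)$ invariant. Combined with the first step, this means that $\mathcal{P}_\psi(\hat\mu,\hat\lambda)$ has exactly the same collection $\{L_{\epsilon_1},\ldots,L_{\epsilon_a},L_{\nu_1}^T,\ldots,L_{\nu_b}^T\}$ of KCF blocks as $\mathcal{P}_\psi(\mu,\lambda)$. Lemma~\ref{Kronecker} then supplies the required $B$ and $C$, completing the argument.

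The only potential snag, and hence what I view as the main obstacle, is ruling out that a $J$-block could appear in the transformed pencil "out of nothing." This is handled by the Möbius-bijection observation above: the number of eigenvalues (with multiplicities and size signatures) is preserved under $A$'s action, so starting from zero eigenvalues one ends with zero eigenvalues. Aside from this small check, everything else is a routine consequence of the invariance of the minimal indices and of the Kronecker uniqueness theorem.
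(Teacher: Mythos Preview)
Your proof is correct and follows essentially the same route as the paper: both rely on the invariance of the minimal indices under Alice's action (Lemma~3 of \cite{ChMi10}) together with Lemma~\ref{Kronecker} to conclude that the transformed pencil is strictly equivalent to the original. Your explicit argument that no $J$-block can appear after Alice acts (via the M\"obius bijection on eigenvalues) just spells out a step the paper leaves implicit when it asserts that the KCF of a null-space-only pencil is completely determined by its minimal indices.
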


Hence, if the premises of the lemma are satisfied then for any operators $A\in GL_2$ there exist operators $B\in GL_m, C\in GL_n$ such that $A\otimes \one \otimes \one \ket{\psi}=
\one \otimes B \otimes C \ket{\psi}$. This property resembles a property of bipartite states as for any bipartite state, $\ket{\phi}$, we have that for any operators $A$ there exist operators $B$ such that $A\otimes \one \ket{\phi}=\one \otimes B \ket{\phi}$.
\begin{proof}[Proof of Lemma \ref{lemma:undoalice}] To prove this lemma, we make use of Lemma 3 in Ref. \cite{ChMi10}, which states that the minimal indices of a matrix pencil are invariant under the action of Alice. However, the KCF of a pencil that only contains null-space blocks is completely determined by the minimal indices. This implies that for any action of Alice, there exist operators $B$ and $C$ for Bob and Claire which bring the matrix pencil corresponding to the state back to its KCF. This completes the proof.
 \end{proof}

 Let us here consider as a simple example the state $\ket{\psi} = \ket{001} + \ket{013} + \ket{024} + \ket{100} + \ket{112} + \ket{123}  \in \mathbb{C}^2 \otimes \mathbb{C}^3 \otimes \mathbb{C}^5$ whose corresponding matrix pencil is given by
 \begin{align}
 	\mathcal{P}_\psi = L_1 \oplus L_2 =
 	\begin{pmatrix}
\lambda & \mu & \cdot & \cdot & \cdot \\
\cdot & \cdot & \lambda & \mu & \cdot \\
\cdot & \cdot & \cdot & \lambda & \mu \\
\end{pmatrix}.
 \end{align}
Alice now applies for instance the operator $A = \begin{pmatrix}
1 & 1\\
1 & 0\\
\end{pmatrix}$ transforming the state to $\ket{\psi'} = (A \otimes \identity \otimes \identity) \ket{\psi} = \ket{000} + \ket{001} + \ket{012} + \ket{013} + \ket{023} + \ket{024} + \ket{101} + \ket{113} + \ket{124}$ with a corresponding matrix pencil
 \begin{align}
 	\mathcal{P}_{\psi'} =
 	\begin{pmatrix}
\mu & \mu + \lambda & \cdot & \cdot & \cdot \\
\cdot & \cdot & \mu & \mu + \lambda & \cdot \\
\cdot & \cdot & \cdot & \mu & \mu + \lambda \\
\end{pmatrix}.
 \end{align}
One can easily verify that the operations
\begin{align}
B = \begin{pmatrix}
 1 & 0 & 0 \\
 0 & 0 & 1 \\
 0 & 1 & 1 \\
\end{pmatrix} \text{ and }
C = \begin{pmatrix}
 -1 & 1 & 0 & 0 & 0 \\
 1 & 0 & 0 & 0 & 0 \\
 0 & 0 & 1 & -1 & 1 \\
 0 & 0 & -2 & 1 & 0 \\
 0 & 0 & 1 & 0 & 0 \\
\end{pmatrix}
\end{align}
transform the matrix pencil back to its original form, i.e., $B \mathcal{P}_{\psi'} C^T = \mathcal{P}_{\psi}$. Stated differently, $A \otimes B \otimes C$ is a symmetry of the state $\ket{\psi}$.

In Ref. \cite{DeEd95} a generic set of matrix pencils has been characterized. That is a set of matrix pencils $G=\{\mathcal{P}_i\}$ has been identified with the property that the union of the orbits of matrix pencils within this set is of full measure. That is, $ \bigcup_i {\cal O}(\mathcal{P}_i)$ is of full measure, where ${\cal O}(\mathcal{P}_i)=\{B\mathcal{P}_iC^T, \mbox{ with } B\in GL_m,C\in GL_n\}$. In $2\times m\times m$ it is easy to verify that the set of matrix pencils with distinct eigenvalues is generic \cite{Jo09}. In $2\times m\times n$, with $d=n-m\geq 1$, however, there is only one matrix pencil, whose orbit is generic. It is given by a direct sum of (at most two different) null-space blocks. More precisely, the following theorem has been proven in Ref. \cite{DeEd95}.

\begin{theorem} \label{ThGenPencils} A generic matrix pencil of dimension $m\times m$ corresponds to the matrix pencil with $m$ distinct divisors. If $d=n-m\geq 1$, the generic matrix pencil is given by $\mathcal{P}(R,S)=(d-(m\ mod\ d))L_{\left\lfloor m/d \right\rfloor}\oplus (m\ mod\ d)L_{\left\lceil m/d\right\rceil }$, where $\lfloor . \rfloor$ ($\lceil . \rceil$) denotes the floor (ceiling) function, respectively.
\end{theorem}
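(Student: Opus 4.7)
The theorem is a classical result of \cite{DeEd95}, so my plan is to outline the reasoning rather than invent a new argument. The proof splits naturally into the square case $n = m$ and the rectangular case $d := n-m \geq 1$.

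For the square case, the maximal minor $D_m(\mu,\lambda) = \det\mathcal{P}(R,S)$ is a non-trivial homogeneous polynomial of degree $m$ in $(\mu,\lambda)$. Its discriminant, viewed as a polynomial in the entries of $R$ and $S$, is not identically zero; hence outside the proper algebraic set where this discriminant vanishes the $m$ roots of $D_m$ are pairwise distinct. For such a generic pencil $D_m$ is coprime to its derivative, so $D_{m-1}=1$ and thus $E_m=D_m$ is squarefree. The signature of each eigenvalue is therefore $(0,\ldots,0,1)$, and the KCF is the diagonal sum of $m$ blocks $M^1(x_i)$ with pairwise distinct $x_i$, as asserted.

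For the rectangular case the first step is to show that $D_m = 1$ generically. Indeed, $D_m$ is the gcd of the $\binom{n}{m}$ maximal minors, each a homogeneous polynomial of degree $m$ in $(\mu,\lambda)$. Requiring these to share a non-constant common factor is a non-trivial polynomial condition on the entries of $R$ and $S$, and therefore cuts out a proper algebraic subset. Hence generically $D_m = 1$, and by Lemma \ref{lemma:dmequalone} such a pencil is strictly equivalent to a direct sum $\bigoplus_{i=1}^{a} L_{\epsilon_i}$ of right null-space blocks only. Comparing dimensions forces $a = d$ and $\sum_{i=1}^{d}\epsilon_i = m$.

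It remains to pin down which partition of $m$ into $d$ parts is generic, and this is the main obstacle. The set of $m \times n$ pencils with $D_m = 1$ is stratified by finitely many $GL_m \times GL_n$-orbits, one per partition $(\epsilon_1,\ldots,\epsilon_d)$. These orbits have different dimensions, and the standard orbit-dimension computation of Kronecker strata carried out in \cite{DeEd95} shows that the orbit dimension is strictly maximized by the most balanced partition. Equivalently, in the closure order on KCF strata every other $L$-only stratum lies in the closure of the balanced one. The unique open dense orbit within the $D_m = 1$ locus is therefore the one corresponding to the partition with $(d - (m \bmod d))$ parts of size $\lfloor m/d \rfloor$ and $(m \bmod d)$ parts of size $\lceil m/d \rceil$, yielding precisely $(d - (m \bmod d))\,L_{\lfloor m/d \rfloor}\,\oplus\,(m \bmod d)\,L_{\lceil m/d \rceil}$, as claimed.
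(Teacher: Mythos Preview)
Your proposal is correct and, in essence, tracks the paper's treatment: the paper does not give its own proof of this theorem but cites \cite{DeEd95} and sketches the method in one paragraph, namely that the result follows by computing the codimension of the $GL_m\times GL_n$--orbits and identifying where this codimension is minimal (in fact zero for $d\geq 1$).

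The one genuine difference is in how you handle the square case. The paper simply remarks that for $n=m$ the codimension is minimal when the symmetry group $\{(B,C):B\mathcal{P}C^T=\mathcal{P}\}$ has smallest dimension, which occurs for pencils with $m$ distinct divisors. You instead give a self-contained discriminant argument: generically $\det(\mu R+\lambda S)$ is a squarefree degree-$m$ polynomial, and since the invariant factors $E_k$ divide one another, squarefreeness of $D_m=\prod_k E_k$ forces $E_1=\cdots=E_{m-1}=1$, hence the KCF is diagonal. This is more elementary and avoids the orbit-dimension machinery entirely for $m=n$. For the rectangular case your route is slightly more explicit than the paper's sketch: you first argue directly that $D_m=1$ generically and then invoke Lemma~\ref{lemma:dmequalone} to reduce to $L$-blocks only, before appealing to \cite{DeEd95} for the orbit-dimension comparison among partitions. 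The paper's discussion of Theorem~\ref{ThGenPencils} does not separate these two steps, but exactly this decomposition reappears in the paper's own proof of Theorem~\ref{theo:genstates} in Appendix~\ref{app:generic}, where the resultant is used to make the ``common factor is a closed condition'' step precise. So your argument is compatible with, and in places anticipates, what the paper does downstream.
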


Let us mention here that in order to prove this result the codimension of the orbit, $\mathcal{O}(\mathcal{P})$ has been computed. See \cite{DeEd95} and references therein. It is defined as the difference between the dimension of the whole space, i.e., $2 m n$ (counting complex dimensions), and the dimension of the 
orbit, $\operatorname{dim}\left\{\mathcal{O}(\mathcal{P})\right\}$.
It is evident that the codimension is minimal for $n=m$ if there are only distinct divisors (as for such a pencil the dimension of the symmetries, $S_\mathcal{P}=\{(B,C): B\mathcal{P}C^T=\mathcal{P}\}$, is the smallest). However, in case $d=n-m\geq 1$, it has been shown that the codimension is minimal, in fact vanishes, iff the matrix pencil contains right null-space blocks only and the dimension of the null-space blocks are chosen equal to each other or, if this is not possible, the difference between the two different dimensions of nullspace blocks is at most $1$.

In Appendix \ref{app:generic} we show that the union of SLOCC classes of states corresponding to a generic set of matrix pencils is of full measure. More precisely, we use the lemmata and the theorem above to prove there the following theorem.

\begin{theorem}\label{theo:genstates}
The set of full rank states in $2\times m \times n$ belonging to a SLOCC class with a representative whose corresponding matrix pencil is generic, is of full measure.
\end{theorem}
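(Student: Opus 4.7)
The plan is to convert Theorem \ref{ThGenPencils}, which characterizes generic matrix pencils up to strict equivalence, into the desired statement about SLOCC classes by absorbing Alice's action via Theorem \ref{theo:fractional} and Lemma \ref{lemma:undoalice}. Identifying the state space $\C^2 \otimes \C^m \otimes \C^n$ with $\C^{2mn}$ through the pair $(R,S)$ that defines the pencil $\mathcal{P}(R,S) = \mu R + \lambda S$, both ``generic pencil'' and ``generic state'' refer to full Lebesgue measure in the same ambient space. It therefore suffices to check that, after enlarging strict-equivalence orbits to SLOCC orbits, the union of classes generated from the generic pencil(s) still exhausts a full-measure subset of $\C^{2mn}$.

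I would treat the two cases of Theorem \ref{ThGenPencils} separately. For $n>m$, writing $d = n-m$, the generic pencil $\mathcal{P}^{\mathrm{gen}} = (d - (m\bmod d))\, L_{\lfloor m/d\rfloor} \oplus (m\bmod d)\, L_{\lceil m/d\rceil}$ consists of right null-space blocks only, and by Theorem \ref{ThGenPencils} its single strict-equivalence orbit is already of full measure in $\C^{2mn}$. By Lemma \ref{lemma:undoalice}, any operator applied by Alice to a state with such a pencil can be undone by Bob and Claire, so the SLOCC class of $\mathcal{P}^{\mathrm{gen}}$ coincides with its strict-equivalence orbit, yielding a single SLOCC class (no free parameters) of full measure. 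For $n=m$ the generic pencil instead has $m$ distinct finite eigenvalues. The locus where this fails---vanishing of the discriminant of $D_m(\mu,\lambda)$, or rank drop producing null-space blocks---is cut out by polynomial equations in the entries of $(R,S)$ and is therefore a proper Zariski-closed, hence measure-zero, subset of $\C^{2mn}$. On its complement, Theorem \ref{theo:fractional} reduces SLOCC equivalence to a M\"obius action on the multiset of eigenvalues; since any three distinct points on the Riemann sphere are mapped to any other three by a unique M\"obius transformation, each SLOCC orbit is labelled by the remaining $m-3$ eigenvalue invariants, producing precisely the $m-3$-parameter family claimed in the theorem.

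The main obstacle I anticipate is making the measure-zero argument fully rigorous, especially for $n > m$: the codimension-$0$ statement from \cite{DeEd95} invoked inside Theorem \ref{ThGenPencils} guarantees that the complement of the orbit of $\mathcal{P}^{\mathrm{gen}}$ is contained in a proper algebraic subvariety of $\C^{2mn}$, but to make this airtight one should exhibit explicit polynomial obstructions, e.g. non-vanishing of certain $k$-minors that distinguish the rank pattern and null-space-block sizes of the KCF. In the $n=m$ case the analogous step is essentially the classical fact that the discriminant of a polynomial vanishes on a measure-zero set; the only subtlety left is verifying that the M\"obius quotient is genuinely $(m-3)$-dimensional, i.e. that the stabilizer of a generic $m$-tuple in $PGL_2$ is finite, so that no hidden parameter collapse occurs in the SLOCC-class count.
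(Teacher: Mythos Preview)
Your proposal is correct and follows essentially the same strategy as the paper: split into $n=m$ and $n>m$, use discriminant-type polynomial conditions for the square case, and invoke Lemma \ref{lemma:undoalice} together with the codimension result of \cite{DeEd95} for the rectangular case.

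The one substantive difference lies in how the $n>m$ case is made rigorous. You lean directly on the codimension-$0$ statement from \cite{DeEd95} and then appeal to the general fact that the complement of a maximal-dimension orbit is a proper subvariety---which you correctly flag as the step needing care. The paper instead takes a more elementary two-stage route: it first shows by an explicit resultant computation that the union of \emph{all} SLOCC classes whose pencils are direct sums of right null-space blocks (equivalently, $D_m=1$ by Lemma \ref{lemma:dmequalone}) is of full measure, and only afterwards uses the dimension comparison from \cite{DeEd95} together with Lemma \ref{lemma:undoalice} to conclude that every non-generic class in this finite union has strictly smaller orbit dimension and is therefore null. This buys self-containedness---the polynomial obstruction is exhibited concretely via $\operatorname{res}(d_1,d_2)$ rather than inferred from orbit geometry---at the cost of a longer argument. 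Your route is shorter and equally valid once one accepts that a codimension-$0$ orbit under an irreducible group action on $\C^{2mn}$ has algebraic (hence measure-zero) complement; the paper's route is precisely the ``explicit polynomial obstruction'' you anticipated needing.
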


Stated differently, we have that for $n=m$ a generic set of states is given by the union of SLOCC classes whose representatives are given by
\bea
\label{GSneqm}
\ket{\Psi(x_1,\ldots,x_m)}=\ket{0} (D_1\otimes \one)\ket{\Phi_m^+}+ \ket{1} (\one \otimes \one)\ket{\Phi_m^+},
\eea

where $D_1=\operatorname{diag}(x_1,\ldots,x_m)$, where $x_i \neq x_j$ for $i \neq j$. Note that not all states $\ket{\Psi(x_1,\ldots,x_m)}$ correspond to different SLOCC classes, as, e.g., the entries of $D_1$ could be sorted differently even via local unitaries. Moreover, as mentioned before, the eigenvalues of the matrix pencil, $x_i$ can be altered by an operator applied by $A$. More precisely, three of the eigenvalues can be fixed, leading to the fact that the representatives constitute a $m-3$ parameter family. For $n>m$ a generic set of states is given by the SLOCC class of a single state corresponding to the matrix pencil given in Theorem \ref{ThGenPencils}. For instance, in case $m=7, n=10$, the generic matrix pencil is $L_2\oplus L_2\oplus L_3$ and the representative of the generic SLOCC class is given by
\bea
\label{egnneqm}
\ket{\Psi}=\ket{0} (\ket{01}+\ket{12}+\ket{24}+\ket{35}+\ket{47}+\ket{58}+\ket{69})+ \ket{1} (\ket{00}+\ket{11}+\ket{23}+\ket{34}+\ket{46}+\ket{57}+\ket{68}).
\eea

For $n=m+1$, i.e. $d=1$ the generic matrix pencil is the single nullspace block, $L_m$. Hence, in this case the generic SLOCC class is represented by the state

\bea
\label{GSnneqm}
\ket{\Psi}=\ket{0} \left(\sum_{i=1}^{m} \ket{i-1,i} \right) + \ket{1} \ket{\Phi_m^+}.
\eea

Note that there is only one generic SLOCC class in case $n\neq m$. However, there is a $(m-3)$--parameter family of SLOCC--classes in case $m=n$ whose union constitutes a generic set of states.

\section{Transformations from an arbitrary state in a full measure set in $2\times m \times n$  to a full measure set in $2\times m \times (n-1)$}

We consider here state transformations via local, however not invertible matrices. That is, we consider transformations from e.g., $2\times m\times n$ to $2\times m\times (n-1)$. We show here that a generic state in $2\times m\times n$ can be transformed into any state in a full measure set within $2\times m\times (n-1)$ for any $m,n$. In order to do so, we first show that the result is true for $n=m+1$ (Lemma \ref{lemmanulltogeneric}). There, the statement is proven by showing that any state in the generic set of states within $2\times m\times m$ can be reached. Hence, we need to show that any state which corresponds to a SLOCC class whose representative is given by Eq. (\ref{GSneqm}), i.e., corresponds to a matrix pencil with $m$ distinct eigenvalues, can be reached from a state in $2\times m\times (m+1)$ corresponding to the SLOCC class represented by the state given in Eq. (\ref{GSnneqm}). The second exceptional case, which will be studied separately here, concerns the transformations from $2\times m\times m$ to 
$2\times m\times (m-1)$. As both cases involve transformation to or from generic states in $2\times m\times m$, we treat them separately before we present the general case where we consider transformations from $2\times m\times n$ to $2\times m\times (n-1)$ for $n$ arbitrary and show that any generic state in $2\times m\times n$ can be transformed into a full measure set of states in $2\times m\times (n-1)$ (see Theorem \ref{GenericToGeneric} and Figure \ref{fig:generichierarchy}). In fact, we prove an even stronger result, which shows that also some non--generic states can be reached.

Let us first study transformations from $2\times m\times( m+1)$ to $2\times m\times m$. Let us consider a state $\ket{\psi}$ with local ranks $2 \times m \times (m+1)$ corresponding to a pencil that is strictly equivalent to the matrix pencil containing only a single block $L_m$. We will prove that $\ket{\psi}$ can always be transformed to states with local ranks $2 \times m \times m$ that correspond to matrix pencils with $m$ distinct eigenvalues. Let us remark here that these are not the only states that can be reached from $\ket{\psi}$. In particular, it is also possible to reach certain states corresponding to matrix pencils with degenerate eigenvalues. We elaborate on that after proving the following Lemma.

\begin{lemma}\label{lemmanulltogeneric} A $2 \times m \times (m+1)$ state $\ket{\psi}$ corresponding to the matrix pencil $\mathcal{P}_\psi = L_m$ can be mapped to any $2 \times m \times m$ state $\ket{\phi}$ corresponding to a (diagonal) matrix pencil with arbitrary $m$ distinct eigenvalues $x_1, \ldots, x_m$, i.e. $\mathcal{P}_\phi = \bigoplus_{i=1}^{m} J(x_i)$, via local operators.
\end{lemma}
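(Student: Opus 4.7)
The plan is to construct explicit operators $B \in GL_m$ and a rank-$m$ map $C \in \C^{m \times (m+1)}$ such that $B\, L_m\, C^T = \mathcal{P}_\phi$, with no operator required on Alice's side. Since the action of $\one_A \otimes B \otimes C$ transforms the pencil from $L_m$ to $B L_m C^T$, this will directly yield $(\one_A \otimes B \otimes C)\ket{\psi} = \ket{\phi}$, and thus establish the claim.

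The key observation is that $L_m$ acts as a shift operator on Vandermonde vectors. For $i=1,\ldots,m$, I take as the $i$-th column of $C^T$ the vector
\begin{equation}
v_i = (1, x_i, x_i^2, \ldots, x_i^m)^T \in \C^{m+1}.
\end{equation}
A direct computation from the explicit form of $L_m$ in Eq.~(\ref{eq:lblock}) shows that the $k$-th entry of $L_m v_i$ equals $\lambda x_i^{k-1} + \mu x_i^k = x_i^{k-1}(x_i\mu+\lambda)$ for $k=1,\ldots,m$, so that
\begin{equation}
L_m v_i = (x_i\mu+\lambda)\, w_i, \qquad w_i := (1, x_i, x_i^2, \ldots, x_i^{m-1})^T \in \C^m.
\end{equation}
Collecting the columns, $L_m C^T = W\, \mathcal{P}_\phi$, where $W$ is the $m \times m$ Vandermonde matrix with entries $W_{ki} = x_i^{k-1}$.

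Since $x_1, \ldots, x_m$ are pairwise distinct, $W$ is invertible, so $B:=W^{-1} \in GL_m$ is a valid operator for Bob and $B L_m C^T = \mathcal{P}_\phi$ as required. To confirm that $C$ is a legitimate rank-$m$ (non-invertible) operator for Claire, note that the top $m \times m$ submatrix of $C^T$ is again $W$, so the columns of $C^T$ are linearly independent and $C$ indeed has rank $m$. The reduced density matrix on Claire's side of the output state therefore has rank $m$, consistent with the local-rank profile of $\ket{\phi}$.

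There is no serious technical obstacle: once the Vandermonde ansatz is identified, everything reduces to invertibility of the Vandermonde matrix, which is immediate from distinctness of the eigenvalues. The only conceptual step is recognising that $L_m$, viewed as a parametric matrix, intertwines evaluation at $x$ with multiplication by $x\mu+\lambda$, essentially because $L_m$ is the pencil associated with a shift/companion-type structure; this is also the reason why an Alice operation is not needed, in agreement with Lemma~\ref{lemma:undoalice} (whose hypothesis applies to $L_m$).
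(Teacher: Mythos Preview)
Your proof is correct and is in fact more streamlined than the paper's argument. The paper first invokes Theorem~\ref{theorem:traforules}: it deletes the last column of $L_m$ after adding it to the others with coefficients $-a_0,\ldots,-a_{m-1}$, obtaining the companion-type pencil $\mathcal{P}_m$ of Eq.~(\ref{nulltogeneric}); it then appeals to the KCF computation of Section~\ref{sec:kcfcomputation} to identify the resulting eigenvalues, and only in a remark after the proof supplies the explicit diagonalising operators via the Vandermonde matrix $V$. Your construction collapses these steps: taking the columns of $C^T$ to be the Vandermonde vectors $v_i=(1,x_i,\ldots,x_i^m)^T$ and using $L_m v_i=(x_i\mu+\lambda)w_i$ gives $L_m C^T = W\,\mathcal{P}_\phi$ in one line, with $B=W^{-1}$. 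This avoids both the intermediate companion matrix and the separate KCF analysis, at the cost of requiring the key observation about how $L_m$ acts on geometric sequences. The two approaches are of course related (composing the paper's two Claire operators yields precisely your rectangular Vandermonde $C^T$), but your route makes the underlying mechanism more transparent. One minor point: like the paper, your construction handles finite eigenvalues; if one insists on allowing $x_i=\infty$, the paper closes this by invoking the SLOCC equivalence in Theorem~\ref{theo:fractional}, and the same remark applies to your argument.
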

\begin{proof}
We consider the matrix pencil consisting of a single $L_m$ block and use Theorem \ref{theorem:traforules} to prove the statement. Due to Theorem \ref{theorem:traforules} there exist local (non-invertible) transformations which map $L_m$ to the matrix pencil where the last column of $L_m$ is added to the others with coefficients $-a_0,-a_1, \ldots, -a_{m-1} \in \mathbb{C}$ that we will choose later on. That is, there exist local operators accomplishing the transformation 

\begin{equation}\label{nulltogeneric}
L_{m}=\overset{m+1}{\left.\overleftrightarrow{\begin{pmatrix}\lambda & \mu & \cdot & \cdot & \cdot & \cdot\\
\cdot & \lambda & \mu & \cdot & \cdot & \cdot\\
\cdot & \cdot & \lambda & \mu & \cdot & \cdot\\
\cdot & \cdot & \cdot & \ddots & \ddots & \cdot\\
\cdot & \cdot & \cdot & \cdot & \lambda & \mu
\end{pmatrix}}\right\updownarrow m}\quad\longmapsto\quad \mathcal{P}_m=\overset{m}{\left.\overleftrightarrow{\begin{pmatrix}
\lambda & \mu & \cdot & \cdot & \cdot \\
\cdot & \ddots & \ddots & \cdot & \cdot\\
\cdot & \cdot & \lambda & \mu & \cdot\\
\cdot & \cdot & \cdot & \lambda & \mu\\
-a_0 \mu & \cdots & \cdot & -a_{m-2} \mu & -a_{m-1} \mu + \lambda &
\end{pmatrix}}\right\updownarrow m}.
\end{equation}

In Section \ref{sec:kcfcomputation} we showed that the KCF of the resulting matrix pencil, $P_m$, is given by Eq. (\ref{equ:companionkcf}). In particular, it has been shown there that $D_k=1$ for all $k \leq m-1$ and $D_m(\mu,\lambda)$ is given by Eq. (\ref{eq:dmwitha}). Using that $D_m(\mu=0, \lambda) \neq 0$ which implies that there exist only finite eigenvalues, $x_1, \ldots, x_m$, we have seen in Section \ref{sec:kcfcomputation} that the coefficients, $a_0, a_1,\dots, a_{m-1}$, are given by
 \begin{equation}
a_k = (-1)^{m-k} \bigg(\sum_{\underset{\sum_{j=1}^{m}i_{j}=(m-k)}{i_{1},i_{2},\dots,i_{m}\in\{0,1\}}}x^{i_1}_1x^{i_2}_2\dots x^{i_m}_m \bigg).
 \end{equation}
This tells us how to choose appropriate coefficients $a_0, \ldots, a_{m-1}$ to reach states that correspond to a pencil with finite eigenvalues, $x_1, x_2,\dots, x_m$ and with a KCF given in Eq. (\ref{equ:companionkcf}). In particular, we can reach any state corresponding to a pencil with distinct, finite eigenvalues. Due to Theorem \ref{theo:fractional} (see also discussion above Theorem \ref{theo:fractional}) infinite eigenvalues do not need to be considered, as the corresponding states are always SLOCC equivalent to a state with finite eigenvalues. This completes the proof.
\end{proof}

From the proof above we can see that it is also possible to reach states whose corresponding pencils have non-distinct eigenvalues. As $D_{m-1}=1$, however, the corresponding size signatures cannot be chosen freely, they are given by $s_i = (0, \ldots, 0, m_i)$, where $m_i$ equals the algebraic multiplicity of eigenvalue $x_i$.

Let us remark here, that the actual map $A \otimes B \otimes C$ between the two states $\ket{\psi}$ and $\ket{\phi}$ from Lemma \ref{lemmanulltogeneric} given in SKCF can be easily constructed as follows.
First note that the operator $C$ which is given in Eq. (\ref{eq:clairemap}), acting on the state $\ket{\psi}$, transforms the corresponding matrix pencil as in Eq. (\ref{nulltogeneric}). We now have to find local operators bringing the pencil $P_m$ to KCF. As explained in Section \ref{sec:kcfcomputation}, because of the special form of $P_m$ and as the eigenvalues, $x_i$, are distinct, we can use the Vandermonde matrix given in Eq. (\ref{EqVandermonde}) to diagonalize the matrix pencil $P_m$. The local operators bringing $P_m$ to KCF are then $\identity \otimes {V^{-1}}^{T} \otimes V$. In case the matrix pencil corresponding to the target state has infinite eigenvalues, additional operators $A \otimes B \otimes C$ have to be used which bring the matrix pencil to a form with infinite eigenvalues as explained in Section \ref{sec:alicedetails}. Concatenating all the operations yields the required operators.

Let us now, conversely, start from an arbitrary state, $\ket{\psi}$, with local ranks $2 \times (m+1) \times (m+1)$ corresponding to a matrix pencil with $m+1$ distinct eigenvalues (see Eq. (\ref{GSneqm})) and consider transformations to states with local ranks $2 \times m \times (m+1)$. Here, we prove that a state $\ket{\phi}$ with $\mathcal{P}_\phi = L_{m}$ can be reached from any such state $\ket{\psi}$. Note that in contrast to before, we now consider a scenario where Bob instead of Claire applies a non-invertible operation. We do so in order to obtain the stair hierarchy presented in Figure \ref{fig:stairs}, but, obviously, one can also show that a state corresponding to the pencil $L^T_{m}$ can be reached from $\ket{\psi}$ if a non-invertible operation on Claire's side is considered instead.

\begin{lemma}\label{mainpencil} Any state $\ket{\psi}$ in $2\times( m+1)\times (m+1)$ corresponding to a $(m+1) \times (m+1)$ matrix pencil $\mathcal{P}_\psi$ with $m+1$ distinct eigenvalues can be transformed via local operations to the $2\times m\times (m+1)$ state $\ket{\phi}$ corresponding to a matrix pencil that consists of one single $L_m$ block, i.e., $\mathcal{P}_\phi = L_m$.
\end{lemma}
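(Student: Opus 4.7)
The plan is to invoke the Bob analogue of Theorem \ref{theorem:traforules} and reduce everything to showing that the resulting pencil has $D_m = 1$, so that Lemma \ref{lemma:dmequalone} immediately identifies it with $L_m$. First, by Lemma \ref{Kronecker} we may assume, after applying invertible operations of Bob and Claire and (if necessary) an invertible $A$ that makes all eigenvalues finite as in Section \ref{sec:alicedetails}, that $\mathcal{P}_\psi$ is already in its KCF. Since $\mathcal{P}_\psi$ has $m+1$ distinct (finite) eigenvalues, its KCF is the diagonal pencil $\mathcal{P}_\psi = \operatorname{diag}(x_1\mu+\lambda,\ldots,x_{m+1}\mu+\lambda)$.

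Next I would apply the row version of Theorem \ref{theorem:traforules}: pick the last row $r_{m+1}$ of the diagonal pencil, add $b_j\,r_{m+1}$ to each row $r_j$ for $j=1,\ldots,m$ with coefficients $b_j \in \mathbb{C}\setminus\{0\}$ to be fixed, and discard $r_{m+1}$. The resulting $m\times(m+1)$ pencil has the form
\begin{equation}
\tilde{\mathcal{P}}=\begin{pmatrix}
x_1\mu+\lambda & 0 & \cdots & 0 & b_1(x_{m+1}\mu+\lambda) \\
0 & x_2\mu+\lambda & \cdots & 0 & b_2(x_{m+1}\mu+\lambda) \\
\vdots & & \ddots & & \vdots \\
0 & 0 & \cdots & x_m\mu+\lambda & b_m(x_{m+1}\mu+\lambda)
\end{pmatrix}.
\end{equation}
Its $m+1$ many $m$-minors, obtained by deleting one column at a time, are (up to sign) $\prod_{k=1}^{m}(x_k\mu+\lambda)$ from deleting the last column, and $b_i(x_{m+1}\mu+\lambda)\prod_{k\neq i,\,k\leq m}(x_k\mu+\lambda)$ from deleting column $i$ for $i=1,\ldots,m$.

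Then I would argue that $D_m(\tilde{\mathcal{P}})=1$. For each $j\in\{1,\ldots,m\}$, the linear factor $x_j\mu+\lambda$ divides every minor except the $j$-th one, which carries no such factor since $b_j\neq 0$ and all eigenvalues $x_1,\ldots,x_{m+1}$ are distinct. Similarly, $x_{m+1}\mu+\lambda$ divides every minor except the one obtained by deleting the last column. Hence no linear factor is common to all minors, and since each minor is a product of $m$ such factors, $\gcd=1$, i.e.\ $D_m(\tilde{\mathcal{P}})=1$. By Lemma \ref{lemma:dmequalone}, $\tilde{\mathcal{P}}$ is strictly equivalent to $L_m=\mathcal{P}_\phi$, so a final application of invertible operators by Bob and Claire completes the transformation.

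The only nontrivial points are (i) verifying that the prescribed row operation is a legitimate non-invertible local operation on Bob's side, exactly parallel to the column construction in Eq.~(\ref{eq:clairemap}) and Theorem \ref{theorem:traforules}, and (ii) choosing the $b_j$ generically (all nonzero) so that none of the linear factors $x_k\mu+\lambda$ survives in the gcd of the minors. Both are routine, so no serious obstacle is expected; the essential insight is simply that one single row operation, with a generic choice of mixing coefficients, is enough to collapse a diagonal pencil with distinct eigenvalues to the unique nullspace-block pencil $L_m$.
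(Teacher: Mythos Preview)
Your proof is correct and follows essentially the same approach as the paper's own proof: apply the row (Bob) analogue of Theorem~\ref{theorem:traforules} to the diagonal pencil in KCF, compute the $m$-minors of the resulting $m\times(m+1)$ pencil, observe that their gcd is $1$ whenever the mixing coefficients are all nonzero and the eigenvalues are distinct, and invoke Lemma~\ref{lemma:dmequalone}. The only cosmetic difference is that the paper deletes the first row and adds it to the others (so the distinguished factor appears in the first column), whereas you delete the last row (so it appears in the last column); the two pencils are related by a row/column permutation and the argument is otherwise identical.
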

\begin{proof}
Let us denote the eigenvalues of $\mathcal{P}_\psi$ as $x_1, x_2, \ldots, x_m$. Due to Theorem \ref{theorem:traforules} there exists a non-invertible matrix $B$ which, applied to $\ket{\psi}$, transforms $\ket{\psi}$ to a state whose corresponding matrix pencil, $\mathcal{P'}$, can be obtained by erasing the first row of the initial pencil after adding it to the other rows with coefficients $a_2, a_3, \ldots, a_{m+1}$. We will prove that the resulting matrix pencil, $\mathcal{P'}$, is strictly equivalent to $\mathcal{P}_\phi = L_m$ for some choice of the coefficients $a_2, a_3, \ldots, a_{m+1}$. The explicit pencils read
\begin{equation}\label{mainpencilproof}
\mathcal{\mathcal{P}_\psi}={\begin{pmatrix}\mu x_{1}+\lambda & 0 & \cdot & \cdot\\
\cdot & \mu x_{2}+\lambda & \ddots & \cdot\\
\cdot & \cdot & \ddots & 0\\
\cdot & \cdot & \cdot & \mu x_{m+1}+\lambda
\end{pmatrix}} \text{ and } \mathcal{\mathcal{P}}'=\overset{m+1}{\left.\overleftrightarrow{
\begin{pmatrix}
a_{2}(\mu x_{1}+\lambda) & \mu x_{2}+\lambda & \cdot & \cdot & \cdot \\
a_{3}(\mu x_{1}+\lambda) & \cdot & \mu x_{3}+\lambda & \cdot & \cdot\\
\vdots & \cdot & \cdot & \ddots & \cdot\\
a_{m+1}(\mu x_{1}+\lambda) & \cdot & \cdot & \cdot & \mu x_{m+1}+\lambda
\end{pmatrix}}\right\updownarrow m}.
\end{equation}
It is straightforward to see that $D_m$ of $\mathcal{P}'$ is given by 
\begin{align}
D_m=\quad \operatorname{gcd} & \{(\mu x_{2}+\lambda)(\mu x_{3}+\lambda)\cdots (\mu x_{m+1}+\lambda),\nonumber\\
-&a_2(\mu x_{1}+\lambda)(\mu x_{3}+\lambda)\cdots (\mu x_{m+1}+\lambda),\nonumber\\
 &a_3(\mu x_{1}+\lambda)(\mu x_{2}+\lambda)(\mu x_{4}+\lambda)\cdots (\mu x_{m+1}+\lambda), \nonumber\\
&\vdots\nonumber\\
 (-1)^{m} &a_{m+1}(\mu x_{1}+\lambda)(\mu x_{2}+\lambda)\cdots (\mu x_{m}+\lambda)\}\nonumber\\
=\quad \operatorname{gcd}&\{(-1)^{j+1} a_j \prod_{i=1, i\neq j} ^{m+1} (\mu x_i +\lambda)\quad | \quad j=1, \dots , (m+1)\}=1,
\end{align}
where $a_1 = 1$. Note that the last equality holds iff none of the coefficients $a_i$ vanish. This is in fact the only requirement we impose on the choice of coefficients. Using Lemma \ref{lemma:dmequalone} it follows that $\mathcal{P}'$ is strictly equivalent to $L_m$, which proves the statement.
\end{proof}

\begin{figure}[h!]
\includegraphics[scale=0.6]{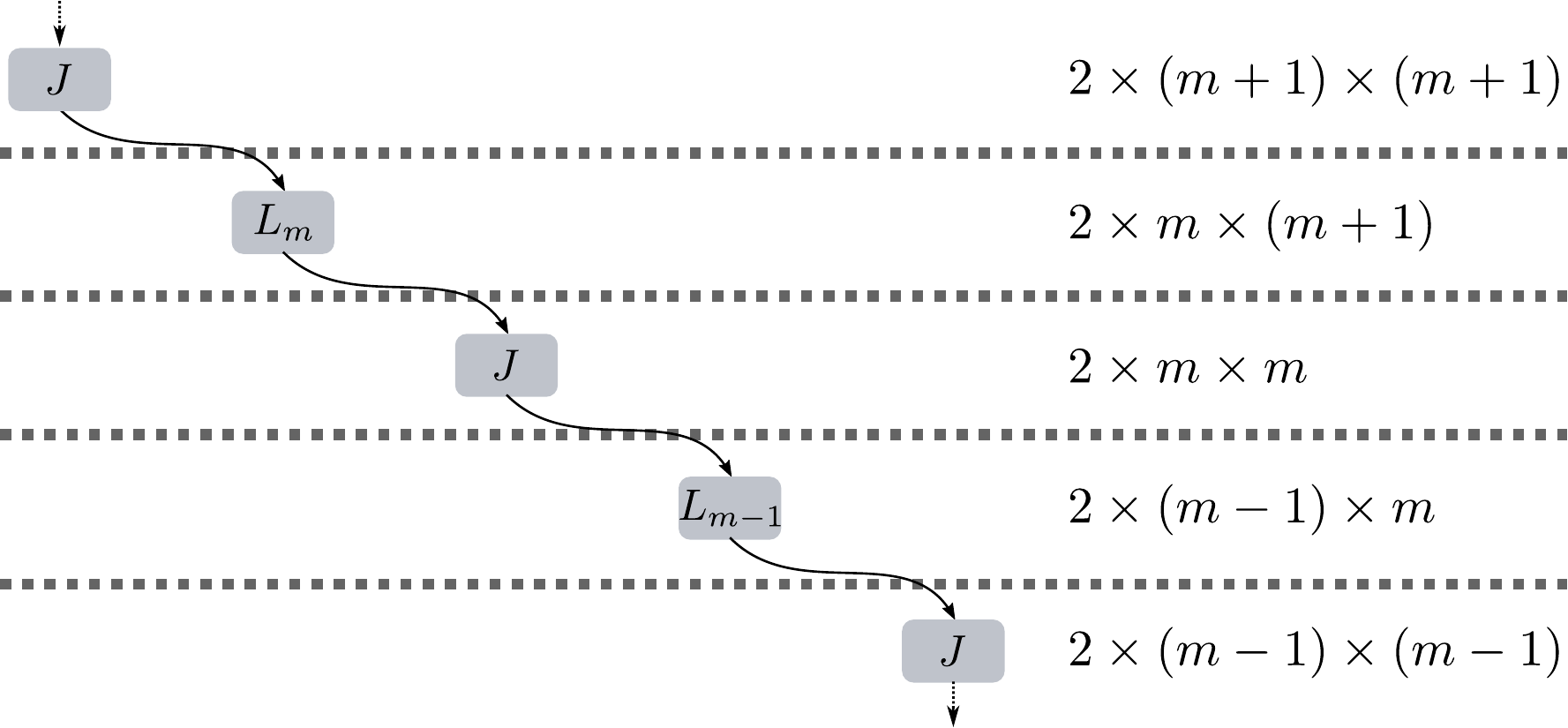}
\caption{Schematic representation of some SLOCC convertibility relations among states in the neighbouring dimensions $2 \times (m + 1) \times (m + 1)$, $2 \times m \times (m+1)$, $2 \times m \times m$, $2 \times (m-1) \times m$, and $2 \times (m-1) \times (m-1)$. The transformation rules proven in Lemma \ref{lemmanulltogeneric} and Lemma \ref{mainpencil} are concatenated to obtain this ``stair'' hierarchy.}
\label{fig:stairs}
\end{figure}

Using the lemmata above, we are now in the position to prove one of the main results of this paper, namely that a generic state in $2\times m \times  n$ can be transformed to any state in the generic set of states in $2\times m \times ( n-1)$, as stated in the following theorem.

\begin{theorem} 
\label{GenericToGeneric}
Any generic state $\ket{\psi}$ in $2 \times m \times n$ can be transformed locally to any generic state $\ket{\phi}$ in $2 \times m \times (n-1)$ for any $m$ and $n$.
\end{theorem}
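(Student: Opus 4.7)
The plan is to split the argument into three regimes based on how $n$ compares with $m$. When $n = m+1$, Theorem~\ref{ThGenPencils} identifies the source generic SLOCC class with the single right null-space block pencil $L_m$, while the target generic states correspond to diagonal pencils with $m$ distinct eigenvalues; this case is exactly Lemma~\ref{lemmanulltogeneric}. When $n = m$, the target space $2 \times m \times (m-1)$, relabeled using the $B \leftrightarrow C$ symmetry as $2 \times (m-1) \times m$, has generic class represented by $L_{m-1}$, while the source is diagonal with $m$ distinct eigenvalues; this is Lemma~\ref{mainpencil} with $m \to m-1$, where now the non-invertible operation is performed by Bob rather than Claire.

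For the remaining regime $n \ge m+2$, set $d = n-m$ and write $m = dq + r$ with $0 \le r < d$, and analogously $d' = d-1$ and $m = d'q' + r'$. By Theorem~\ref{ThGenPencils} both the source generic pencil $(d-r)L_q \oplus rL_{q+1}$ and the target generic pencil $(d'-r')L_{q'} \oplus r'L_{q'+1}$ are direct sums of right null-space blocks only. The plan is to invoke Theorem~\ref{theorem:traforules}: Claire performs a non-invertible map that deletes one column of the source pencil after adding it, with complex coefficients $a_1, \ldots, a_{n-1}$ of our choosing, into the remaining $n-1$ columns. To identify the resulting $m \times (n-1)$ pencil with the target up to strict equivalence, I combine two ingredients. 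First, by Lemma~\ref{lemma:dmequalone} it is enough that $D_m = 1$ on the result, ensuring that it decomposes as a direct sum of $L$-blocks; I would secure this by choosing a column connecting two adjacent $L$-blocks of the source and exhibiting both a pure $\lambda^m$-minor and a pure $\mu^m$-minor of the result, in the same spirit as the elementary reduction $L_2 \oplus L_2 \to L_4$. Second, the minimal indices of the result must match the balanced sequence appearing in the target.

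The latter step is the main obstacle. A naive column deletion that simply merges two adjacent $L_q$-blocks into $L_{2q}$ produces \emph{unbalanced} minimal indices — for instance, $L_2 \oplus L_2 \oplus L_2 \to L_4 \oplus L_2$ rather than the required $L_3 \oplus L_3$ when $m = 6$, $n = 9$. Hence the coefficients $a_i$ must be distributed across several remaining columns so that the null-space of the result is generated by polynomial vectors of degrees $q'$ and $q'+1$ rather than degrees that clump. I would handle this either by an explicit construction of the coefficients combined with a direct calculation of the minimal-degree null-space vectors of the resulting pencil, or by an algebraic-geometric argument: the set of $m \times (n-1)$ pencils reachable via such column operations from the source is constructible and closed under strict equivalence; it contains a pencil with $D_m = 1$ by the first step, and for generic coefficients it must meet the unique codimension-zero orbit of pure $L$-block pencils in $m \times (n-1)$ (characterized by Theorem~\ref{ThGenPencils}), which is precisely the target generic SLOCC class.
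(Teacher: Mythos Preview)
Your case split and the handling of $n=m+1$ (Lemma~\ref{lemmanulltogeneric}) and $n=m$ (Lemma~\ref{mainpencil} applied to Bob's side) exactly match the paper. You also correctly identify the obstacle in the regime $d = n-m \ge 2$: a naive column deletion merges two blocks into one large one and produces the wrong minimal indices, as your $L_2^{\oplus 3} \to L_4 \oplus L_2$ example shows.

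The gap is that neither of your two proposed resolutions for $d \ge 2$ is actually carried out. Option~(a), an explicit choice of coefficients followed by a computation of the minimal indices, is precisely what the paper does: Lemma~\ref{lemma:Ldistribution} constructs concrete matrices $C^T$ and $\tilde{B}$ (see Eqs.~(\ref{eq:mapc}) and~(\ref{eq:mapb}) in Appendix~\ref{app:Ldistribution}) that overlay copies of consecutive $L_{\epsilon_i}$-blocks within sectors of width $\epsilon'_i+1$, then verifies by a block-triangularization argument that the resulting pencil has full rank $m$ and hence that $\tilde{B}$ is invertible. You sketch the intent but do not supply the construction or the rank argument.

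Option~(b), the algebraic-geometric route, has a genuine hole. The reachable set in $m \times (n-1)$ pencils is indeed constructible and closed under strict equivalence, and you can show it contains something with $D_m = 1$. But the inference ``for generic coefficients it must meet the unique codimension-zero orbit'' is exactly what is at stake: you must show the image is not confined to the proper closed subvariety consisting of the non-generic $L$-block orbits. That requires either a dimension count on the image (which you have not provided and which is not immediate, since the stabilizer of $\mathcal{P}_\psi$ enters) or the exhibition of a single point of the image lying in the open orbit --- which is option~(a) again. Without this, the argument is circular. The paper avoids the issue entirely by giving the explicit map and checking the KCF of the output directly.
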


In order to prove this theorem, we make use of the following lemma, which is actually more general than needed for the proof of the theorem above.

\begin{lemma}
\label{lemma:Ldistribution}
A state $\ket{\psi}$ in $2 \times m \times n$ with $\mathcal{P}_\psi = \bigoplus_{i=1}^{n-m} L_{\epsilon_i}$ can be transformed to a state $\ket{\phi}$ in $2 \times m \times (n-1)$ with $\mathcal{P}_\phi = \bigoplus_{i=1}^{n-m-1} L_{\epsilon'_i}$ via local operations for any $n \geq m + 2$ if the following condition holds.
There exists $j \in \{1, \ldots, n-m-1\}$ such that for all $i \in \{1, \ldots, j-1\}$ $\epsilon_i = \epsilon_i'$ and for all $i \in \{j, \ldots, n-m-1\}$ $\epsilon_{i+1} \leq \epsilon_i'$, where we assume $(\epsilon_i)_i$ and $(\epsilon'_i)_i$ to be sorted in ascending order.
\end{lemma}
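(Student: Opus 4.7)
The plan is to invoke Theorem~\ref{theorem:traforules} and realize the transformation by a single column deletion on Claire's side (preceded and followed by invertible $B$ and $C$ operations to permute and rebase blocks). A first combinatorial observation is that the hypothesis on $(\epsilon_i)$ and $(\epsilon'_i)$ implies that the multiset $\{\epsilon'_i\}_{i=1}^{n-m-1}$ is obtained from $\{\epsilon_i\}_{i=1}^{n-m}$ by removing two entries $\epsilon_p$ and $\epsilon_q$ (with $p \leq q$) and inserting their sum $\epsilon_p + \epsilon_q$, then resorting ascending. To identify the pair, set $p = j$ and let $s^\ast := \max\{i \in \{j,\ldots,n-m-1\} : \epsilon'_i > \epsilon_{i+1}\}$; then $q$ is determined by $\epsilon_q = \epsilon'_{s^\ast} - \epsilon_j$, and the telescoping identity $\sum_{i \geq j}(\epsilon'_i - \epsilon_{i+1}) = \epsilon_j$ confirms consistency with the sorted-insertion profile. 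Hence it suffices to prove the lemma for the special case where $\mathcal{P}_\phi$ is obtained from $\mathcal{P}_\psi$ by merging the two blocks $L_{\epsilon_p}$ and $L_{\epsilon_q}$ into a single block $L_{\epsilon_p + \epsilon_q}$.

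Using the freedom of invertible local operations by Bob and Claire, which permute and rebase the blocks in the KCF without altering them, I would assume without loss of generality that the two blocks to be merged occupy positions $1$ and $2$. I then apply Theorem~\ref{theorem:traforules} with column index $i = \epsilon_p + 1$, the pure-$\mu$ last column of block $1$ (namely $\mu \vec{e}_{\epsilon_p - 1}$). Set all coefficients $a_k$ to zero except $a_{\epsilon_p + 2} = 1$, which adds $\mu \vec{e}_{\epsilon_p - 1}$ to the pure-$\lambda$ first column of block $2$ (originally $\lambda \vec{e}_{\epsilon_p}$). The modified column becomes $\mu \vec{e}_{\epsilon_p - 1} + \lambda \vec{e}_{\epsilon_p}$, which is precisely the linking column that fuses the bottom row of $L_{\epsilon_p}$ to the top row of $L_{\epsilon_q}$. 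Since no coefficients are added to columns of any later block, those blocks remain untouched.

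After deletion, direct inspection shows that the first $\epsilon_p + \epsilon_q$ rows together with the surviving columns of the first two blocks form a single $L_{\epsilon_p + \epsilon_q}$ block, while all other blocks are unaltered. A cleaner verification restricts attention to the $(\epsilon_p + \epsilon_q) \times (\epsilon_p + \epsilon_q + 1)$ sub-pencil spanning these rows and columns: one $(\epsilon_p + \epsilon_q)$-minor equals $\lambda^{\epsilon_p + \epsilon_q}$ and another equals $\pm\mu^{\epsilon_p + \epsilon_q}$, so their greatest common divisor is $1$, and Lemma~\ref{lemma:dmequalone} forces the sub-pencil to be strictly equivalent to the unique right-nullspace pencil of these dimensions, namely $L_{\epsilon_p + \epsilon_q}$. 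The main obstacle I anticipate is the rigorous combinatorial bijection stated in the first paragraph: although examples strongly suggest it, one must carefully argue that every admissible $(\epsilon'_i)$-profile arises from exactly one single pairwise merge, which amounts to a case analysis on where the merged entry lands in the ascending resort and on relating it to the nonzero pattern of $d_i := \epsilon'_i - \epsilon_{i+1}$.
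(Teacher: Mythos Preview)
Your merge construction for fusing $L_{\epsilon_p}$ and $L_{\epsilon_q}$ into $L_{\epsilon_p+\epsilon_q}$ is correct and clean. The gap is in the combinatorial reduction you flag yourself: the claim that every target $(\epsilon'_i)$ satisfying the hypothesis arises from $(\epsilon_i)$ by a single pairwise merge is \emph{false}. A concrete counterexample is $\epsilon=(2,3,5)$ and $\epsilon'=(4,6)$. The hypothesis holds with $j=1$ since $\epsilon_2=3\le 4=\epsilon'_1$ and $\epsilon_3=5\le 6=\epsilon'_2$, but the three possible merges of $(2,3,5)$ yield $(5,5)$, $(3,7)$, $(2,8)$, none of which is $(4,6)$. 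Since the minimal indices determine the SLOCC class when only $L$-blocks are present, you cannot correct this afterwards by invertible $B,C$; your route simply never lands on $L_4\oplus L_6$.

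The paper's proof avoids this by not reducing to a pairwise merge. It builds a single $C^T$ that copies each $L_{\epsilon_i}$ and places two overlapping copies ($L_{\epsilon_i}$ and $L_{\epsilon_{i+1}}$) inside the $i$th ``sector'' of width $\epsilon'_i+1$; the inequality $\epsilon_{i+1}\le\epsilon'_i$ guarantees both copies fit and the identity $\sum_{k\neq i,i+1}\epsilon_k = m-\epsilon_i-\epsilon_{i+1}\ge m-\epsilon'_i$ guarantees they overlap. It then shows $\mathcal P_\psi C^T$ has full rank by a block-triangularisation argument, and exhibits an explicit $\tilde B$ with $\tilde B\mathcal P_\phi=\mathcal P_\psi C^T$, forcing $\tilde B$ invertible. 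In effect the paper merges \emph{all} consecutive blocks simultaneously and redistributes their sizes, which is strictly more general than your pairwise fusion. Your approach does give a short proof of Theorem~\ref{GenericToGeneric} in the special case where the generic target happens to be a merge of the generic source (e.g.\ $m\bmod d=0$), but it does not establish Lemma~\ref{lemma:Ldistribution} in the stated generality.
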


The proof of this lemma can be found in Appendix \ref{app:Ldistribution}. Let us now use it to prove Theorem \ref{GenericToGeneric}.

\begin{proof}[Proof of Theorem \ref{GenericToGeneric}]
Let us write $d = n - m$, where $d\geq 0$. The cases $d=0$ and $d=1$ were discussed and proven above in Lemma \ref{mainpencil} and Lemma \ref{lemmanulltogeneric}, respectively. Let us now prove the statement for the case $d\geq 2$. To this end, we consider the matrix pencils $\mathcal{P}_{\psi}$ and $\mathcal{P}_{\phi}$ corresponding to the states $\ket{\psi}$ and $\ket{\phi}$ which we assume w.l.o.g. to be in SKCF.
As shown in Theorem \ref{theo:genstates}, the matrix pencils corresponding to generic states are given by 
$\mathcal{P}_\psi = (d-(m\ \operatorname{mod}\ d))L_{\left\lfloor m/d \right\rfloor}\oplus (m\ \operatorname{mod}\ d)L_{\left\lceil m/d\right\rceil }$
and $\mathcal{P}_\phi = (d-1-(m\ \operatorname{mod}\ (d-1))L_{\left\lfloor m/(d-1) \right\rfloor}\oplus (m\ \operatorname{mod}\ (d-1))L_{\left\lceil m/(d-1)\right\rceil }$.
Note that the matrix pencils contain only $L_{\epsilon_i}$ blocks. More preciscely,
 $\epsilon_1 = \ldots = \epsilon_{(d-(m\ \operatorname{mod}\ d))} = \left\lfloor m/d \right\rfloor$, and if $m\ \operatorname{mod}\ d \neq 0$, $\epsilon_{(d-(m\ \operatorname{mod}\ d) + 1 )} = \ldots = \epsilon_d = \left\lceil m/d\right\rceil$ for $\mathcal{P}_\psi$ and  
   $\epsilon'_1 = \ldots = \epsilon'_{(d-1-(m\ \operatorname{mod}\ (d-1)))} = \left\lfloor m/(d-1) \right\rfloor$, and if $m\ \operatorname{mod}\ (d-1) \neq 0$, $\epsilon'_{(d-1-(m\ \operatorname{mod}\ (d-1)) + 1 )} = \ldots = \epsilon'_{(d-1)} = \left\lceil m/(d-1)\right\rceil$ for $\mathcal{P}_\phi$. Hence, the sizes of the blocks are distributed in such a way that Lemma \ref{lemma:Ldistribution} applies proving that $\ket{\psi}$ can be transformed to $\ket{\phi}$, which completes the proof.

\end{proof}
\subsection*{Examples of transformations from generic states to lower-dimensional generic states   }

Let us now consider a simple example of such a transformation. Consider the two generic $2 \times 7 \times 10$ and $2 \times 7 \times 9$ states  
\begin{align}
   \ket{\psi}=\ket{0} (\ket{01}+\ket{12}+\ket{24}+\ket{35}+\ket{47}+\ket{58}+\ket{69})+ \ket{1} (\ket{00}+\ket{11}+\ket{23}+\ket{34}+\ket{46}+\ket{57}+\ket{68}) \text{ and}
\end{align}
\begin{align}
   \ket{\phi}=\ket{0} (\ket{01}+\ket{12}+\ket{23}+\ket{35}+\ket{46}+\ket{57}+\ket{68})+ \ket{1} (\ket{00}+\ket{11}+\ket{22}+\ket{34}+\ket{45}+\ket{56}+\ket{67}),
\end{align}
respectively. The matrix pencils corresponding to these states are $\mathcal{P}_\psi = L_2 \oplus L_2\oplus L_3$ and $\mathcal{P}_\phi= L_3\oplus L_4$. We will now show how the operators $B$ and $C$ such that $B \mathcal{P}_\psi C^T = \mathcal{P}_\phi$ and thus $\identity \otimes B \otimes C \ket{\psi} = \ket{\phi}$ are explicitly constructed in this example. In order to do so, we use the proof of Lemma \ref{lemma:Ldistribution}, which is presented in Appendix \ref{app:Ldistribution}. The matrices $C^T$ and $\tilde{B}$ as defined there (see Eq. (\ref{eq:mapc}) and Eq. (\ref{eq:mapb})) are
\begin{align}
\label{eq:gentrafoexample}
C^T=\begin{pmatrix}
1 & \cdot & \cdot & \cdot & \cdot & \cdot & \cdot & \cdot & \cdot\\
\cdot & 1 & \cdot & \cdot & \cdot & \cdot & \cdot & \cdot & \cdot\\
\cdot & \cdot & 1 & \cdot & \cdot & \cdot & \cdot & \cdot & \cdot\\
\cdot & 1 & \cdot & \cdot & 1 & \cdot & \cdot & \cdot & \cdot\\
\cdot & \cdot & 1 & \cdot & \cdot & 1 & \cdot & \cdot & \cdot\\
\cdot & \cdot & \cdot & 1 & \cdot & \cdot & 1 & \cdot & \cdot\\
\cdot & \cdot & \cdot & \cdot & \cdot & 1 & \cdot & \cdot & \cdot\\
\cdot & \cdot & \cdot & \cdot & \cdot & \cdot & 1 & \cdot & \cdot\\
\cdot & \cdot & \cdot & \cdot & \cdot & \cdot & \cdot & 1 & \cdot\\
\cdot & \cdot & \cdot & \cdot & \cdot & \cdot & \cdot & \cdot & 1\\
\end{pmatrix}, \
\tilde{B}=\begin{pmatrix}
1 & \cdot & \cdot & \cdot & \cdot & \cdot & \cdot\\
\cdot & 1 & \cdot & \cdot & \cdot & \cdot & \cdot\\
\cdot & 1 & \cdot & 1 & \cdot & \cdot & \cdot\\
\cdot & \cdot & 1 & \cdot & 1 & \cdot & \cdot\\
\cdot & \cdot & \cdot & \cdot & 1 & \cdot & \cdot\\
\cdot & \cdot & \cdot & \cdot & \cdot & 1 & \cdot\\
\cdot & \cdot & \cdot & \cdot & \cdot & \cdot & 1\\
\end{pmatrix}.
\end{align}
It can be easily verified that $\tilde{B}$ is invertible and that indeed $B \mathcal{P}_\psi C^T = \mathcal{P}_\phi$, where $B = \tilde{B}^{-1}$.

Let us also present a different method to show that the matrix pencil $\mathcal{P}_\psi C^T$ is indeed equivalent to $\mathcal{P}_\phi$.
To this end, let us consider the action of $C^T$ on the nullspace vectors of the matrix pencil $\mathcal{P}_\phi$.
Let $\{x_1, x_2, x_3\}$ ($\{x'_1, x'_2\}$) denote the nullspace vectors of minimal degree for $\mathcal{P}_\psi$ ($\mathcal{P}_\phi$), respectively. We have
\begin{equation}
x_{1}=\begin{pmatrix}\mu^{2}\\
-\mu\lambda\\
\lambda^{2}\\
0\\
0\\
0\\
0\\
0\\
0\\
0
\end{pmatrix},\ x_{2}=\begin{pmatrix}0\\
0\\
0\\
\mu^{2}\\
-\mu\lambda\\
\lambda^{2}\\
0\\
0\\
0\\
0
\end{pmatrix},\ x_{3}=\begin{pmatrix}0\\
0\\
0\\
0\\
0\\
0\\
\mu^{3}\\
-\mu^{2}\lambda\\
\mu\lambda^{2}\\
-\lambda^{3}
\end{pmatrix}\ \ \ \mbox{\ensuremath{\mbox{ and}}  }\ \ \ x'_{1}=\begin{pmatrix}\mu^{3}\\
-\mu^{2}\lambda\\
\mu\lambda^{2}\\
-\lambda^{3}\\
0\\
0\\
0\\
0\\
0
\end{pmatrix},\ x'_{2}=\begin{pmatrix}0\\
0\\
0\\
0\\
\mu^{4}\\
-\mu^{3}\lambda\\
\mu^{2}\lambda^{2}\\
-\mu\lambda^{3}\\
\lambda^{4}
\end{pmatrix}.
\end{equation}
The matrix $C^T$ is constructed in such a way, that the nullspace vectors $\{x_i'\}$ transform under it as follows
\begin{align}
C^T x_{1}'=\mu\begin{pmatrix}
\mu^{2}\\
-\mu\lambda\\
\lambda^{2}\\
0\\
0\\
0\\
0\\
0\\
0\\
0
\end{pmatrix}-\lambda \begin{pmatrix}0\\
0\\
0\\
\mu^{2}\\
-\mu\lambda\\
\lambda^{2}\\
0\\
0\\
0\\
0\\
\end{pmatrix} = \mu x_1 - \lambda x_2 \quad   \mbox{ and } \quad  C^T x_{2}'=\mu^2\begin{pmatrix}0\\
0\\
0\\
\mu^{2}\\
-\mu\lambda\\
\lambda^{2}\\
0\\
0\\
0\\
0\\
\end{pmatrix}-\lambda \begin{pmatrix}
0\\
0\\
0\\
0\\
0\\
0\\
\mu^{3}\\
-\mu^{2}\lambda\\
\mu\lambda^{2}\\
-\lambda^{3}
\end{pmatrix} = \mu^2 x_2 - \lambda x_3.
\end{align}
Note that $C^T x'_1$ ($C^T x'_2$) can be expressed as a superposition of the nullspace vectors $x_1$ and $x_2$ ($x_2$ and $x_3$), respectively. Note further that $C^T x'_1$ and $C^T x'_2$ are linearly independent. As $\mathcal{P}_\psi x_i = 0$, where $i \in \{1,2,3\}$, we have that $\mathcal{P}_\phi C^T x'_i = 0$, hence $x'_1$ and $x'_2$ are two linearly independent vectors in the nullspace of $\mathcal{P}_\psi C^T$ with degree 3 and 4, respectively. We will now show that these are the minimal indices of $\mathcal{P}_\psi C^T$, which implies that $\mathcal{P}_\psi C^T$ is strictly equivalent to $\mathcal{P}_\phi$. First note that $D_m=D_7=1$ for $\mathcal{P}_\psi C^T$, as there is one 7-minor equal to $\mu^7$ and another 7-minor equal to $\lambda^7$. Using Lemma \ref{lemma:dmequalone} this implies that the KCF of $\mathcal{P}_\psi C^T$ is a direct sum of right null space blocks only. Due to dimensionality reasons, the KCF contains exactly two $L$ blocks. In fact, the possibilities are $L_1 \oplus L_6$, $L_
2 \oplus L_5$, and $L_3 \oplus L_4$. As the minimal indices are $(1,6)$, $(2,5)$, and $(3,4)$, respectively, it is easy to see that $\mathcal{P}_\psi C^T$ has to be strictly equivalent to $\mathcal{P}_\phi = L_3 \oplus L_4$, as the minimal indices corresponding to the other choices are not compatible with the degrees of $x'_1$ and $x'_2$ (see Lemma \ref{lemma:minimalindicesbound} in Appendix \ref{app:Ldistribution}).

Note that also in general, the matrix $C^T$ is constructed in such a way that the linearly independent vectors $C^T x'_i$ can be individually expressed as a superposition of $x_{i}$ and $x_{i+1}$ and thus $\mathcal{P}_\psi C^T x'_i =0$ for all $i \leq d-1$. Recall the proof of Lemma \ref{lemma:Ldistribution}, where we have shown that for a matrix pencil strictly equivalent to $\mathcal{P}_\psi C^T$, there exists a non-vanishing $m$-minor equal to $\mu^m$. One can transform $\mathcal{P}_\psi C^T$ to a upper block triangular matrix, which can be achieved in a iterative process similarly as described in the proof of Lemma \ref{lemma:Ldistribution}, where one starts from the last block, $L_{\epsilon_d}$, instead and obtains an $m$-minor equal to $\lambda^m$. $D_m$ is invariant under such a transformation and must thus divide both $\mu^m$ and $\lambda^m$. Hence, we have that $D_m = 1$. As in the example above, Lemmata \ref{lemma:dmequalone} and \ref{lemma:minimalindicesbound} can be used to show that $\mathcal{P}_\psi C^T$ is strictly equivalent to $\mathcal{P}_\phi$ and, thus, obtain an alternative proof of Lemma \ref{lemma:Ldistribution}. 

Using the matrices $\tilde{B}$ and $C^T$ given in Eq. (\ref{eq:gentrafoexample}), one finds the following operations $B = \tilde{B}^{-1}$ and $C$ such that $\identity \otimes B \otimes C \ket{\psi} = \ket{\phi}$, where
\begin{align}
   C = \ket{0}\bra{0} + \ket{1}\bra{1} + \ket{2}\bra{2} + (\ket{1}+\ket{4})\bra{3} + (\ket{2}+\ket{5})\bra{4} + (\ket{3}+\ket{6})\bra{5} + \ket{5}\bra{6} + \ket{6}\bra{7} + \ket{7}\bra{8} + \ket{8}\bra{9},
   \end{align}  
\begin{align}
   B = \ket{0}\bra{0} + (\ket{1}-\ket{3})\bra{1} + \ket{3}\bra{2} + \ket{2}\bra{3} + (-\ket{2}+\ket{4})\bra{4} + \ket{5}\bra{5} + \ket{6}\bra{6}. 
\end{align} 

With Theorem \ref{GenericToGeneric} we have thus proven that the SLOCC hierarchy for generic states in $2 \times m \times n$ is of the form shown in Figure \ref{fig:generichierarchy}. In particular, any generic state can be converted to a generic set of states contained in an lower-dimensional Hilbert space.

\begin{figure}[h!]
\includegraphics[scale=0.7]{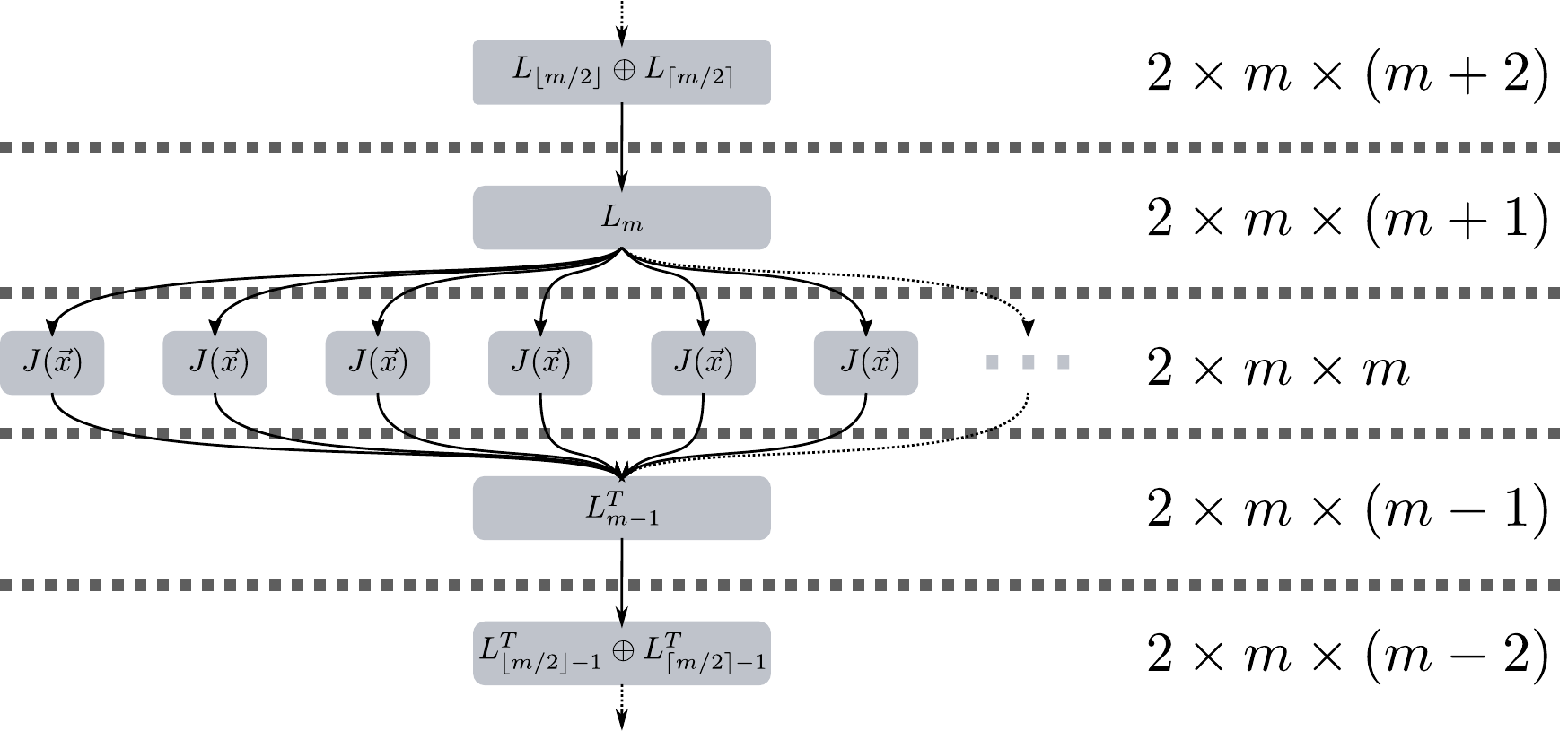}
\caption{Schematic representation of the hierarchy of SLOCC classes for generic states in $2 \times m \times n$ level systems for fixed $m$. The here depicted matrix pencils correspond to representative states of the generic SLOCC classes in case $m \neq n$. In case $m=n$, the union of the SLOCC classes corresponding to the depicted matrix pencils forms the generic set of states.
 A similar picture arises if $n$ is fixed instead.}
\label{fig:generichierarchy}
\end{figure}

\section{Common resource states}
\label{sec:cr}
In the previous section we showed that any generic state in $2\times m\times (n+1)$ can be transformed probabilistically to any generic state in $2\times m\times n$. Stated differently, it is sufficient to increase $n$ only by one in order to find a generic state that can be transformed to every  generic state in a lower dimensional Hilbert space. We call such a state a \textit{resource state} for a set of states in $2\times m\times n$, which it can be transformed to. Note that here a state is a representative of its SLOCC class in its respective dimension. A formal definition of resource states and their optimality is given below. 
The notion of optimal common resource states has been introduced and motivated in \cite{GuCh16}. The general idea is that for some quantum information task, different entangled states might be needed. Hence, if the parties would share the common resource of the required states, they could obtain any of the required states when needed by LOCC operations only. In contrast to \cite{GuCh16}, where resource states under deterministic LOCC are considered, in this work, we consider resource states under probabilistic transformations.

\begin{definition} 
A common resource (CR) state for a set of states $S$ is a state that can be transformed probabilistically to any state contained in $S$. A common resource state $\ket{\psi}$ is called \emph{optimal} common resource state (OCR), if for any other common resource state $\ket{\phi}$ for $S$, it holds that either $\ket{\phi}$ can be probabilistically transformed to $\ket{\psi}$, or neither $\ket{\psi}$ nor $\ket{\phi}$ can be probabilistically transformed into the other.
\end{definition}
Note that a common resource state can only be found in a higher-dimensional Hilbert space unless we consider the trivial case where some state in $S$ already serves as common resource for $S$. 
Note further that the optimal common resource state is not necessarily unique. This is due to the fact that either one might increase the dimensions in different ways to find common resource states, or, even when considering a fixed increase of dimensions, one might find common resource states in different SLOCC classes.
To give an example, consider $2 \times 2 \times 3$ states (see \cite{ChMi10}). A optimal common resource for $\mathcal{H} = \mathbb{C}^2\times \mathbb{C}^2\times\mathbb{C}^3$ can be found in dimensions $2 \times 2 \times 4$ as well as in $2 \times 3 \times 3$. Moreover, in $2 \times 3 \times 3$ one can in fact find three SLOCC-inequivalent common resources, which are all optimal. However, any common resource state found in $2 \times 3 \times 4$ 
can be transformed to a common resource of lower dimension and hence cannot be optimal.

It has been shown that such a common resource state always exists if one of the dimensions is increased sufficiently. In particular in \cite{DuSh09}, it has been shown that given a Hilbert space $\mathcal{H} = \mathbb{C}^{d_1}\otimes \mathbb{C}^{d_2}\otimes \dots \otimes \mathbb{C}^{d_N}$, where $d_1\leq d_2\leq\dots \leq d_N$, there exists a state which can be transformed to all the other states by LOCC iff 
\begin{align}\label{DuanEq}
\prod_{i=1}^{N-1}d_i\leq d_N.
\end{align}
In case Eq. (\theequation) holds, the $N$-th party (with the highest local dimension) can share a $d_i$-dimensional singlet state with each $i$-th party. In order to obtain any state, the $N$-th party can then locally prepare the desired state and perform a teleportation protocol using the shared singlet states. This upper bounds the increase in dimensions that has to be considered when looking for common resource states. 

In this section, we extend our result from the previous section, where we considered generic sets of states. In particular, we look for common resource states for $S$ containing all states in $\mathcal{H} = \mathbb{C}^2\times \mathbb{C}^m\times\mathbb{C}^ n$ for fixed $m$ and $n$, i.e., where in addition to the generic set of states also sets of measure zero are included.
In the following, we show that in contrast to common resources for a generic set of states, where in order to find a resource state it suffices to increase some dimension by one, this increase is not sufficient to find common resource states for the full set of states. 
To find such common resource states, we consider an increase of the highest dimension, $n$, i.e., we look for common resource states in $\mathbb{C}^2\times \mathbb{C}^m \times \mathbb{C}^{\tilde{n}}$, where $\tilde{n} > n$ \footnote{Note that one could as well consider an increase in $m$, or an increase in both $m$ and $n$. This is however beyond the scope of this paper.}. 
Due to the above mentionend bound, it is clear that a common resource state exists in case $\tilde{n} = 2m$, as then the condition in Eq. (\theequation) is satisfied. However, the question we address here is whether we can find a common resource state for which $\tilde{n} < 2m$. Again, a difference between the two cases $m=n$ and $m \neq n$ becomes apparent. In the following theorems, we prove that in case $m=n$ such a common resource state exsits for $\tilde{n} = 2m-2$. Moreover, this increase of dimension (of the third system) is optimal. In contrast to that, in case $m \neq n$, the teleportation bound is tight, i.e., there exists no common resource state whenever $\tilde{n} < 2m$.

The following $m \times (m-2)$ matrix pencil will be of relevance throughout the remainder of this section,
\begin{align}
\label{eq:mmocr}
\mathcal{P}_\psi= (m-3) L_1 \oplus L_2 \oplus M^1(0) = \begin{pmatrix}
\lambda & \mu & \cdot & \cdot & \cdot & \cdot & \cdot & \cdot & \cdot & \cdot & \cdot & \cdot \\
\cdot & \cdot & \lambda & \mu & \cdot & \cdot & \cdot & \cdot & \cdot & \cdot & \cdot & \cdot \\
\cdot & \cdot & \cdot & \cdot & \ddots & \ddots & \cdot & \cdot & \cdot & \cdot & \cdot & \cdot \\
\cdot & \cdot & \cdot & \cdot & \cdot & \cdot & \lambda & \mu & \cdot & \cdot & \cdot & \cdot \\
\cdot & \cdot & \cdot & \cdot & \cdot & \cdot & \cdot & \cdot & \lambda & \mu & \cdot & \cdot \\
\cdot & \cdot & \cdot & \cdot & \cdot & \cdot & \cdot & \cdot & \cdot & \lambda & \mu & \cdot \\
\cdot & \cdot & \cdot & \cdot & \cdot & \cdot & \cdot & \cdot & \cdot & \cdot & \cdot &  \lambda\\
\end{pmatrix}.
\end{align}
The reason for that is that this pencil corresponds to the $2 \times m \times (2m-2)$ state $\ket{\psi}$ which turns out to be a common resource for all $2 \times m \times m$ states, which we will show in the following theorem. 

\begin{theorem}
\label{theo:mmcr}
The $2 \times m \times (2m-2)$ state $\ket{\psi}$ that corresponds to the matrix pencil given in Eq. (\ref{eq:mmocr}) is a common resource for $2\times m\times m$ states, where $m \geq 4$, i.e., $\ket{\psi}$ can be transformed to any $2\times m\times m$ state via non-invertible SLOCC.
\end{theorem}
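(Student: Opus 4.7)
The plan is to apply Theorem~\ref{theorem:traforules} iteratively, or equivalently to exhibit a single non-invertible $C \in \mathbb{C}^{m \times (2m-2)}$ of full row rank together with invertible operators $A,B$, and verify that the composition sends $\mathcal{P}_\psi$ to the KCF of any prescribed $m\times m$ target pencil $\mathcal{P}_\phi$. By Theorem~\ref{theo:fractional} we may assume $\mathcal{P}_\phi$ is in KCF with only finite eigenvalues, and Alice's fractional transformation on the source lets us move the single source eigenvalue (from $M^1(0)$) to any point of $\mathbb{C}$. It therefore suffices to construct, for each such target, a matrix $C$ with $\mathcal{P}_\psi C^T$ strictly equivalent to the target.

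The core of the construction is the column-by-column viewpoint used in Section~IV: each column of $C^T$ gives one output column of the reduced pencil via $v \mapsto \mathcal{P}_\psi v$. A direct calculation shows that the following atomic moves are available. (i) A single source $L_1$ can be retained as $L_1$, or collapsed to an arbitrary $M^1(y)$-block. (ii) Two adjacent source $L_1$-blocks can be merged into $L_2$ (losing one column), into $L^T_1$ (losing three columns), into $M^2(y)$ (losing two columns), or into $M^1(y_1)\oplus M^1(y_2)$ with any eigenvalues (losing two columns). (iii) More generally, $k$ consecutive $L_1$-blocks can be packaged into $L_k$, $L^T_{k-1}$, $M^k(y)$, or any smaller combination thereof, by choosing the columns of $C^T$ as suitable supports on pairs of neighbouring $L_1$'s (mirroring the generic-to-generic construction underlying Eq.~(\ref{eq:gentrafoexample})). (iv) The source $L_2$-block admits all analogous moves, with one extra row and one extra column at its disposal; in particular it can be retained as $L_2$, split into $M^1(y_1)\oplus M^1(y_2)$, collapsed to $M^2(y)$, or fused with adjacent $L_1$'s into a larger $L$, $L^T$ or Jordan block.

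With these atomic moves in hand, the third step is to show that every $m\times m$ target KCF $\bigoplus_i L_{\epsilon_i}\oplus\bigoplus_j L^T_{\nu_j}\oplus J$ (with $a = b$ since the pencil is square, and $\sum_i\epsilon_i + \sum_j\nu_j + a + q = m$) can be realized. One partitions the $m-2$ source $L$-blocks plus the $M^1$ slot into disjoint groups, one per target block, respecting row counts: each target $L_{\epsilon_i}$-block consumes $\epsilon_i$ source rows, each $L^T_{\nu_j}$-block consumes $\nu_j+1$ source rows, and each Jordan block in $J$ consumes its own size in rows. Because the source has exactly $m$ rows and the target total is $\sum_i\epsilon_i+\sum_j(\nu_j+1)+q = m+b-a+a = m$ by the identity $a=b$, the row budget always matches. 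The harder constraint is the column budget: the source provides $2m-2$ columns while the target needs $m$, so the column losses from the chosen atomic moves must sum to exactly $m-2$. The main obstacle I anticipate is therefore a careful case analysis on the target KCF verifying that this column deficit can always be realized, with the single $L_2$-block of the source playing the role of a ``balance-wheel'' that absorbs a unit of mismatched mass whenever the target combines an $L^T$-block with a Jordan block of multiplicity $\ge 2$. I expect the argument to follow the same pattern as Lemma~\ref{lemma:Ldistribution}, relying only on the KCF machinery of Section~II and on Lemma~\ref{lemma:dmequalone} to identify the resulting nullspace structure.
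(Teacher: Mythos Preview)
Your approach is essentially the same as the paper's: build each target KCF block from a disjoint group of source blocks via explicit column operations, and verify that the row counts always balance. The paper organizes this slightly differently, first treating the easier resources $mL_1$ (size $2m$) and $(m-1)L_1\oplus M^1(0)$ (size $2m-1$) before descending to $(m-3)L_1\oplus L_2\oplus M^1(0)$, and then running a three-way case split on whether the target contains some $L_\epsilon$ with $\epsilon\ge 2$, some $L_1$, or no $L$-block at all. Your ``column-budget'' bookkeeping is just a reformulation of the same accounting.

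There is, however, one genuine gap in your list of atomic moves. Your item~(iv) asserts that the source $L_2$ can be ``split into $M^1(y_1)\oplus M^1(y_2)$'' without restriction. This is false when $y_1=y_2$: the discussion after Lemma~\ref{lemmanulltogeneric} shows that any $2\times 2$ pencil obtained from $L_2$ by Claire's operation has $D_1=1$, whereas $M^1(x)\oplus M^1(x)$ has $D_1=x\mu+\lambda\neq 1$. Consequently your case analysis would fail precisely for targets whose $J$-part requires the $L_2$ to produce two coinciding size-one eigenvalue blocks. The only such target pencil that cannot be rescued by rerouting through the $M^1(0)$ or the $L_1$'s is $\mathcal{P}_\phi=mM^1(x)$, and the paper disposes of this by observing that the corresponding state has Alice's system in a product with the rest, so it is not a genuinely tripartite $2\times m\times m$ target. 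You need to insert this observation explicitly; once you do, your proof goes through along the same lines as the paper's.
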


\begin{proof}
Although we could directly prove the statement considering the matrix pencil given in Eq. (\ref{eq:mmocr}), we will prove this theorem in three steps for readability. First, we will explicitly give a $2\times m\times 2m$ state which reaches all $2\times m\times m$ states. Based on this state, we will then construct a $2\times m\times (2m-1)$ state which can be used for the same task. Finally, we will argue that the dimension can be reduced by one more by showing that the $2\times m\times (2m-2)$ corresponding to the matrix pencil $\mathcal{P}_\psi$ given in Eq. (\ref{eq:mmocr}) is a common resource for all $2\times m\times m$ states.

A common resource $\ket{\psi}$ for $2\times m\times m$ states can be trivially found in $2\times m\times 2m$, namely $\ket{\phi^+_2}_{AC} \otimes \ket{\phi^+_m}_{BC}$. Associating $\ket{0}_C \otimes \ket{i}_C$ to $\ket{i}_C$ and $\ket{1}_C \otimes \ket{i}_C$ to $\ket{m+i}_C$, the corresponding matrix pencil is
\begin{align}\label{Eq:LOCCPencil}
 \mathcal{P}_\psi=\begin{pmatrix}
\mu & \cdot & \cdot & \lambda & \cdot & \cdot\\
\cdot & \ddots & \cdot & \cdot & \ddots & \cdot\\
\cdot & \cdot & \mu & \cdot & \cdot & \lambda\\
\end{pmatrix},
\end{align}
which is strictly equivalent to its KCF $m L_1$. As explained in the introduction of this section, it can reach all states in $2 \times m \times m$ (even by deterministic LOCC), as Claire can use the state to perform teleportation. 

We will present now an alternative proof of this statement by considering the corresponding matrix pencils.
Let us assume that $\ket{\psi}$ is in SKCF. We will show that given the matrix pencil $\mathcal{P}_\psi = m L_1$, any $m \times m$ matrix pencil $\mathcal{P}_\phi$ can be obtained applying Theorem \ref{theorem:traforules} $m$ times consecutively. In particular, we will explicitly show that the $L_1$ blocks in $\mathcal{P}_\psi$ can be combined in such a way that arbitrary blocks, $L$, $L^T$, and $J$ can be created for $\mathcal{P}_\phi$. Furthermore, we will count the number of required $L_1$ blocks and see that $m$ of them present in $\mathcal{P}_\psi$ suffice.

First note that $L_2$ can be obtained by adding the first column of some $L_1$ to the last column of another $L_1$ block and discarding this column afterwards. This procedure can be iterated to generate $L_\epsilon$, consuming $\epsilon$ $L_1$ blocks.
An $L^T_\nu$ block can be generated by first generating $L_{\nu+1}$ and then discarding the first and the last column of $L_{\nu+1}$. This procedure consumes $\nu+1$ of the $L_1$ blocks in total.
As the matrix pencil $\mathcal{P}_\phi$ is square and we do only consider transformations to pencils for which it holds that $h=0$ and $g=0$ in their KCF, we have that the number of $L$ blocks equals the number of $L^T$ blocks in $\mathcal{P}_\phi$. Let us assume w.l.o.g. that the matrix pencil $\mathcal{P}_\phi$ consists of null-space blocks of total dimension $k \times k$, where $k = \sum_{i=1}^{a} \epsilon_i + \sum_{i=1}^{a} (\nu_i + 1)$ and of a $J$ block of size $(m-k) \times (m-k)$.
As explained above, the null-space blocks can be created by consuming $k$ of the $L_1$ blocks of $\mathcal{P}_\psi$. We will now show that the $m-k$ remaining $L_1$ blocks can be used to create an arbitrary $J$ block. First note, that due to dimensionality reasons, the size signatures of the eigenvalues sum up to $m-k$. An arbitrary finite eigenvalue $x_i$ (infinite eigenvalue) with size signature 1 can be created consuming a single $L_1$ block by adding the second column to the first one with coefficient $x_i$ and discarding the second column afterwards (discarding the first column), respectively. Moreover, an existing $M^{e^i_{j}}(x_i)$ block can be enlarged to $M^{e^i_{j}+1}(x_i)$ consuming a single $L_1$ block arranged to the upper left, i.e., $L_1 \oplus M^{e^i_{j}}(x_i)$. To this end, the second column of $L_1$ is added to the first column of the existing $M^{e^i_{j}}(x_i)$ block before it is added to the first column of the $L_1$ block with coefficient $x_i$ and discarded afterwards. Similarly, an $N^{e^\mu_{j}}$ block can be enlarged. Thus, in total $m-k$ $L_1$ blocks are consumed in order to create an arbitrary $J$ block of size $(m-k) \times (m-k)$. Hence, $\mathcal{P}_\psi = m L_1$ can be transformed to an arbitrary $m \times m$ matrix pencil and thus, the $2 \times m \times 2m$ state $\ket{\psi}$ can be transformed to any $2\times m \times m$ state.

Let us now show that the $2 \times m \times (2m-1)$ state $\ket{\psi}$, whose corresponding matrix pencil is
\begin{align}
\mathcal{P}_\psi= (m-1) L_1\oplus M^1(0) = \begin{pmatrix}
\lambda & \mu & \cdot & \cdot & \cdot & \cdot & \cdot\\
\cdot & \cdot & \ddots & \ddots & \cdot & \cdot & \cdot\\
\cdot & \cdot & \cdot & \cdot & \lambda & \mu & \cdot\\
\cdot & \cdot & \cdot & \cdot & \cdot & \cdot & \lambda\\
\end{pmatrix},
\end{align}
can perform the same task. Note that the first $m-1$ $L_1$ blocks are the same as before. However, the last $L_1$ block is replaced by $M^1(0)$. We will distinguish the two cases that $\mathcal{P}_\phi$ has eigenvalues and that $\mathcal{P}_\phi$ has no eigenvalues. In both cases we show that $\mathcal{P}_\psi$ given in Eq. (\theequation) can be transformed into $\mathcal{P}_\phi$.
In the first case, as a first step, Alice's action is used to transform the eigenvalue 0 present in $\mathcal{P}_\psi$ to some eigenvalue $x$ that is present in $\mathcal{P}_\phi$. From this point on, the same procedure as described above is used to increase the size signature of the eigenvalue $x$ or to obtain the remaining blocks present in $\mathcal{P}_\phi$. In the second case, the pencil $\mathcal{P}_\phi$ does not contain a $J$ block, hence it is a direct sum of $L$ and $L^T$ blocks only. Due to dimensionality reasons, at least one $L^T$ block is present in $\mathcal{P}_\phi$. It is easy to see that in this case, the $M^1(0)$ block in $\mathcal{P}_\psi$, can be used together with $\nu$ $L_1$ blocks to obtain $L_\nu^T$ by first using the $\nu$ $L_1$ blocks to create $L_{\nu}$ as described above and finally adding the $M^1(0)$ block below the last column of the created $L_{\nu}$ bock and discarding the first column of this block. From this point on, the procedure described in the $2 \times m \times 2m$ 
scenario can be used in order to obtain the remaining blocks present in $\mathcal{P}_\phi$.
 
Let us now show that we can further reduce the dimension by one by considering the $2 \times m \times (2m-2)$ state $\ket{\psi}$, whose corresponding matrix pencil $\mathcal{P}_\psi$ is given in Eq. (\ref{eq:mmocr}).
As it is more involved to see that $\ket{\psi}$ is indeed a common resource to all $2 \times m \times m$ states and, thus, the theorem holds, we will consider the following three classes of matrix pencils separately. Note that each matrix pencil $\mathcal{P}_\phi$ belongs to at least one of these classes,
\begin{enumerate}[(i)]
	\item  there is at least one $L_\epsilon$ block with $\epsilon \geq 2$ present in $\mathcal{P}_\phi$,
	\item  there is at least one $L_1$ block present in $\mathcal{P}_\phi$,
	\item  there is no $L$ block present in $\mathcal{P}_\phi$.
\end{enumerate}
Let us first deal with case (i). In that case, the block $L_2$ together with $\epsilon - 2$ $L_1$ blocks in $\mathcal{P}_\psi$ can be used to create the block $L_\epsilon$ as explained above. Due to dimensionality reasons, there exists also a $L^T_\nu$ in $\mathcal{P}_\phi$ and we can use the $M^1(0)$ block together with $\nu$ $L_1$ blocks to create it. We are left with $m - 3 - (\epsilon-2) - \nu$ $L_1$ blocks which are used to create the remaining arbitrary $[m - (\epsilon + \nu + 1) ]\times[ m - (\epsilon + \nu + 1)]$ sized matrix pencil, as explained above.
Let us now consider case (ii). We keep one of the $L_1$ blocks in $\mathcal{P}_\psi$ unchanged, as this block is also present in $\mathcal{P}_\phi$. Again, due to dimensionality reasons, there exists a $L^T_\nu$ block in $\mathcal{P}_\phi$, which can be created using $L_2$ together with $\nu - 1$ $L_1$ blocks. We are left with $m - 3 -\nu$ $L_1$ blocks and the $M^1(0)$ block which are used to create the remaining arbitrary $[m - (2 + \nu)] \times [m - (2 + \nu)]$ sized matrix pencil, as explained above in the $[2 \times m \times( 2m-1)]$ scenario.
In the remaining case, case (iii), no $L$ block and hence (as $\mathcal{P}_\phi$ is square) also no $L^T$ block is present in $\mathcal{P}_\phi$, i.e., it only remains to be shown that $\mathcal{P}_\psi$ can be used to reach an arbitrary $J$ block of size $m \times m$. 
Due to the discussion above Theorem \ref{theo:fractional}, we only have to consider finite eigenvalues. Let us show that with one exception (we elaborate on that below), $\mathcal{P}_\psi$ can indeed be used to reach an arbitrary $J$ block of size $m \times m$.
First, Alice can transform $M^1(0)$ to $M^1(x)$ for an arbitrary $x$. The $L_2$ block can either be used to create $M^2(x)$ or $M^1(x_1) \oplus M^1(x_2)$, where $x_1 \neq x_2$ as proven in Lemma \ref{lemmanulltogeneric} and right below it. Together with $M^1(0)$, $L_2$ can also be used to create $M^3(x)$. As explained above, the remaining $L_1$ blocks can be used to create the remaining part of the $J$ block by either creating new $M^1(x)$ blocks, or by increasing the size signature of existing ones.
Let us note here, that with the method explained here, it is possible to reach any $J$, except pencils of the form $\mathcal{P}_\phi = m M^1(x)$. The reason for that is that $L_2$ cannot be used to create $M^1(x) \oplus M^1(x)$. However, this is not a problem as this matrix pencil corresponds to a state where Alice is not entangled with the other parties. 
This completes the proof.
\end{proof}

Let us remark here that Theorem \ref{theo:mmcr} does not hold for $m=3$, as
$
 \begin{pmatrix}
 \lambda & \mu & \cdot & \cdot \\
\cdot & \lambda & \mu & \cdot \\
 \cdot & \cdot & \cdot &  \lambda\\
\end{pmatrix}
$
can be used to reach any $3 \times 3$ matrix pencil except
$
 \begin{pmatrix}
 \lambda & \mu & \cdot  \\
\cdot & \cdot & \mu  \\
 \cdot & \cdot & \lambda \\
\end{pmatrix}
$. Moreover, no other $2 \times 3 \times 4$ state is a common resource for $2 \times 3 \times 3$, which can be easily verified by considering the full SLOCC hierarchy up to $2 \times 3 \times 6$ states which has been derived in Ref. \cite{ChMi10}. 
However, a common resource for $2 \times 3 \times 3$ states can be found in $\mathbb{C}^{2} \otimes \mathbb{C}^{3} \otimes \mathbb{C}^{5}$ instead. For example, the $2 \times 3 \times 5$ state corresponding to the matrix pencil$
 \begin{pmatrix}
 \lambda & \mu & \cdot & \cdot & \cdot \\
 \cdot & \cdot & \lambda & \mu & \cdot \\
 \cdot & \cdot & \cdot &\cdot &  \lambda\\
\end{pmatrix}
$ can be used to reach any $2 \times 3 \times 3$ state. 

We are going to show next that the resource state introduced in Theorem \ref{theo:mmcr} is an optimal resource for $\mathbb{C}^2\otimes \mathbb{C}^m\otimes \mathbb{C}^{m}$ states.

\begin{theorem}
\label{theo:mmocr}
The common resource state given in Theorem \ref{theo:mmcr} is optimal, i.e., no common resource state which reaches any $2\times m\times m$ state for $m \geq 4$ exists in $\mathbb{C}^2\otimes \mathbb{C}^m\otimes \mathbb{C}^{2m-3}$ or lower dimensions.  
\end{theorem}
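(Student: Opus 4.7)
The strategy is proof by contradiction: suppose a common resource $\ket{\psi} \in \mathbb{C}^{2}\otimes\mathbb{C}^{m}\otimes\mathbb{C}^{\tilde n}$ for the $2\times m \times m$ states exists with $\tilde n \leq 2m-3$. I would concentrate on the most favourable case $\tilde n = 2m-3$, in which the source has maximally many $L$-blocks; the cases $\tilde n < 2m-3$ follow by the same argument with $a = \tilde n - m$ substituted for $m-3$. Because non-invertible local operations cannot increase local ranks and the $2\times m \times m$ set contains fully entangled targets, $\ket{\psi}$ must itself be truly entangled, forcing $\operatorname{rk}(\mathcal{P}_\psi) = m$ and hence $b = 0$. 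Combined with $a - b = \tilde n - m$ this pins down
\[
\mathcal{P}_\psi = \bigoplus_{i=1}^{m-3} L_{\epsilon_i} \oplus J, \qquad \sum_{i} \epsilon_i + q = m, \qquad q \in \{0,1,2,3\},
\]
where $q$ is the size of $J$ and $\epsilon_i \geq 1$.

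The heart of the proof is to exhibit, for each such KCF of $\mathcal{P}_\psi$, a specific $m \times m$ pencil $\mathcal{P}_\phi$ unreachable from $\mathcal{P}_\psi$ by any iteration of Theorem \ref{theorem:traforules} combined with Alice's linear fractional action from Theorem \ref{theo:fractional}. I would rely on two structural observations distilled from the proof of Theorem \ref{theo:mmcr}: \emph{(i)} producing an $L_\nu^T$ block in the target requires a chain of column reductions that first assembles an $L_{\nu+1}$ intermediate out of source $L$-blocks (via merges of the type permitted by Lemma \ref{lemma:Ldistribution}), and $J$-block rows cannot be absorbed into $L$-block rows under Claire's column operations; \emph{(ii)} Alice preserves the multiplicity pattern of the eigenvalues of $J$, and while Claire's non-invertible operations on $L$-blocks can introduce new distinct eigenvalues (Lemma \ref{lemmanulltogeneric}), they cannot enlarge the multiplicity of an existing eigenvalue of $J$. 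In the extremal case $q = 3$ (so all $\epsilon_i = 1$) the natural unreachable target is $\mathcal{P}_\phi = L_{m-2} \oplus L_1^T$: by (i) the only ``$L$-row budget'' available in the source is $\sum_i \epsilon_i = m - 3 < m - 2$, so no $L_{m-2}$ can ever appear in a reachable pencil. For $q \in \{0,1,2\}$ I would use analogous tailored targets, either containing an $L$-block whose size just exceeds $\sum_i \epsilon_i$, or, exploiting (ii), one whose eigenvalue multiplicity pattern cannot be assembled from the multiplicities in $J$ together with the at most $\sum_i \epsilon_i$ simple eigenvalues producible from $L$-block reductions.

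The principal obstacle is making the ``$J$ cannot donate rows to $L$-blocks'' principle of (i) fully rigorous. I would try to establish that $\sum_i \epsilon_i$, the total $L$-block row contribution, is non-increasing under Claire's column operations; this would immediately bound the maximal $L$-block size in any reachable pencil by $\sum_i \epsilon_i(\mathcal{P}_\psi)$. The natural route is to analyse $\mathcal{P}' = \mathcal{P} C^T$ and note that $\ker_{\mathbb{C}(\mu,\lambda)} \mathcal{P}'$ embeds as a $\mathbb{C}(\mu,\lambda)$-subspace of $\ker_{\mathbb{C}(\mu,\lambda)} \mathcal{P}$ via $C^T$, then control the minimal-index filtration of the subspace under this embedding. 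A secondary obstacle is the sub-case analysis on the Jordan subtype of $J$ for $q \geq 1$ (e.g.\ $M^q(y)$ versus $\bigoplus_j M^{q_j}(y_j)$), where the chosen unreachable target must either work uniformly across Jordan types or be adapted per type, with high-multiplicity Jordan blocks in $J$ separately ruled out via the eigenvalue-multiplicity argument in (ii).
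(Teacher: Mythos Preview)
Your proposal has a genuine gap: the entire argument hinges on the unproven ``$L$-row budget'' invariant in observation (i), namely that $\sum_i \epsilon_i$ is non-increasing under Claire's column reductions, and you yourself flag this as the principal obstacle without resolving it. The sketch via the embedding of $\ker_{\mathbb{C}(\mu,\lambda)}\mathcal{P}'$ into $\ker_{\mathbb{C}(\mu,\lambda)}\mathcal{P}$ controls the \emph{number} of minimal indices but not their sum, so additional work on the degree filtration would be needed and is not supplied. Observation (ii) is also shaky as stated: from $L_1 \oplus M^1(x)$ one can reach $M^1(x)\oplus M^1(x)$ by collapsing $L_1$ to $M^1(x)$, so multiplicities of existing eigenvalues \emph{can} increase under column reductions (that this particular target is not fully entangled does not save the general claim). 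Finally, the treatment of $q\in\{0,1,2\}$ is left at the level of ``analogous tailored targets'', which is where most of the case work actually lies.

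The paper's route is quite different and avoids these obstacles entirely. Rather than tracking block-structure invariants, it works with the divisibility properties of the polynomials $D_k(\mu,\lambda)$ of reachable pencils, which can be read off directly from the row structure of $\mathcal{P}_\psi$ after column operations. Concretely: sources with an $L^T$ block are excluded since then $D_m=0$; sources with two distinct eigenvalues force every reachable $D_m$ to be divisible by both linear factors, so $M^m(x)$ is unreachable; sources with an eigenvalue of multiplicity $\ge 2$ force $(\mu x+\lambda)^2 \mid D_m$, so the generic diagonal pencil is unreachable. This leaves only $q\le 1$, which by dimension forces either some $L_\epsilon$ with $\epsilon\ge 3$ or an $L_2\oplus L_2$; in both cases one shows directly that any reachable pencil has $D_m=0$ or $D_2=1$, whence $(m-1)M^1(x_1)\oplus M^1(x_2)$ (which has $D_2=x_1\mu+\lambda$) is unreachable. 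The obstruction targets are thus \emph{regular} (pure $J$-block) pencils, not the $L_{m-2}\oplus L_1^T$ type you propose, and the arguments are short determinant computations rather than a new monotonicity lemma.
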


\begin{proof}
To prove this theorem, we consider the full set of matrix pencils $\mathcal{P}_\psi$ corresponding to $2 \times m \times (2m-3)$ states $\ket{\psi}$ in SKCF. Due to the discussion above Theorem \ref{theo:fractional}, we only have to consider finite eigenvalues. Then, we successively eliminate classes of matrix pencils from this set until there are no candidates for a common resource left and we hence prove the theorem.

First, we exclude matrix pencils $\mathcal{P}_\psi$ that contain at least one $L^T$ block, as for such matrix pencils we have $D_m = 0$ (see proof of Lemma \ref{lemma:dmequalone}). Note that $D_m = 0$ is equivalent to the fact that there does not exist $m$ linearly independent columns in $\mathcal{P}_\psi$. As the columns of any matrix pencil $\mathcal{P}_\phi$ to which $\mathcal{P}_\psi$ can be transformed are linear superpositions of the columns of $\mathcal{P}_\psi$, it follows that states corresponding to such $\mathcal{P}_\psi$ can never reach states corresponding to matrix pencils for which $D_m \neq 0$. Thus, such matrix pencils cannot be a common resource to all $2\times m\times m$ states.

Let us now exclude matrix pencils $\mathcal{P}_\psi$ with at least two distinct eigenvalues $x_1$ and $x_2$. For any matrix pencil $\mathcal{P}_\phi$ that can be reached from such a $\mathcal{P}_\psi$, it holds that either $D_m=0$, or both $\mu x_1 + \lambda$ and $\mu x_2 + \lambda$ divide $D_m$. This can be seen as follows. If $\mathcal{P}_\psi$ has at least two distinct eigenvalues $x_1$ and $x_2$, then the pencil $\mathcal{P}_\psi$ (in KCF) contains at least one row, $k$, ($l$) that vanishes everywhere except for a single entry $\mu x_1 + \lambda$ ($\mu x_2 + \lambda$), respectively. 
Therefore, it holds that the entries of the $k$th ($l$th) row of any matrix pencil $\mathcal{P}_\phi$, to which $\mathcal{P}_\psi$ can be transformed, are multiples of $\mu x_1 + \lambda$ ($\mu x_2 + \lambda$), respectively. Hence, any $m$-minor of $\mathcal{P}_\phi$ either vanishes or is divisible by both $\mu x_1 + \lambda$ and $\mu x_2 + \lambda$. Hence, this also holds for $D_m$. This, however, implies that states corresponding to matrix pencils with an eigenvalue of algebraic multiplicity $m$, e.g., $\mathcal{P}_\phi = M^m(x)$, cannot be reached, as for such matrix pencils there cannot exist two distinct roots of $D_m$.

Let us now consider matrix pencils $\mathcal{P}_\psi$ that have an eigenvalue $x$ with an algebraic multiplicity of at least two (but arbitrary size signature), i.e., $\mathcal{P}_\psi$ either contains at least one $M^{e}(x)$ block with $e \geq 2$ or at least two $M^{1}(x)$ blocks, and show that they can be excluded. 
In the following, we will show that for any matrix pencil $\mathcal{P}_\phi$ that can be reached from $\mathcal{P}_\psi$, it holds that either $D_m=0$, or $(\mu x + \lambda)^2$ divides $D_m$. From that it follows that states corresponding to matrix pencils with $m$ distinct eigenvalues cannot be reached.
Let us consider $\mathcal{P}_\psi$ (here, w.l.o.g. we assume that $M^{e_j}(x)$ are the last blocks and $x=0$ to simplify notation) and the most general matrix pencil $\mathcal{P}_\phi$, which can be reached from $\mathcal{P}_\psi$, which can be written as
\begin{equation}
\begin{tikzpicture}[baseline=(current  bounding  box.center),
style1/.style={
  matrix of math nodes,
  every node/.append style={text width=#1,align=center,minimum height=3ex},
  nodes in empty cells,
  left delimiter=(,
  right delimiter=),
  },
]
\matrix[style1=0.25cm] (1mat)
{
  & & & & & & \\
  & & & & & & \\
  & & & & & & \\
  & & & & & \lambda & w \vphantom{\lambda}\\
  \vphantom{\lambda} & & & & & \vphantom{\lambda}& \lambda \\
};

\draw[solid]
  (1mat-1-1.north west) -- (1mat-1-6.north east);
\draw[solid]
  (1mat-3-1.south west) -- (1mat-3-6.south east);
\draw[solid]
  (1mat-1-1.north west) -- (1mat-3-1.south west);
\draw[solid]
  (1mat-1-6.north east) -- (1mat-3-6.south east);
  
\draw[solid]
  (1mat-4-6.north west) -- (1mat-4-7.north east);
\draw[solid]
  (1mat-5-6.south west) -- (1mat-5-7.south east);
\draw[solid]
  (1mat-4-6.north west) -- (1mat-5-6.south west);
\draw[solid]
  (1mat-4-7.north east) -- (1mat-5-7.south east);

\node[font=\huge] 
  at ([xshift=8pt,yshift=0pt]1mat-2-3) {$\cdots$}; 

\draw[decoration={brace,raise=12pt},decorate]
  (1mat-1-7.north east) -- 
  node[right=15pt] {$m$} 
  (1mat-5-7.south east); 
\draw[decoration={brace,mirror,raise=5pt},decorate]
  (1mat-5-1.south west) -- 
  node[below=7pt] {$2m-3$} 
  (1mat-5-7.south east);  

\node at ([xshift=-20pt,yshift=-1.2pt]1mat.west) {$\mathcal{P}_\psi =$};  
\end{tikzpicture}, \quad
\begin{tikzpicture}[baseline=(current  bounding  box.center),
style1/.style={
  matrix of math nodes,
  every node/.append style={text width=#1,align=center,minimum height=3ex},
  nodes in empty cells,
  left delimiter=(,
  right delimiter=),
  },
]
\matrix[style1=1.7cm] (1mat)
{
  \hphantom{a_m \lambda + b_m w }& & & \hphantom{a_m \lambda + b_m w }\\
  & & & \\
  \hphantom{a_m \lambda + b_m w }& & & \hphantom{a_m \lambda + b_m w }\\
  a_1 \lambda + b_1 w & a_2 \lambda + b_2 w & \cdots & a_m \lambda + b_m w \\
  b_1 \lambda & b_2 \lambda & \cdots & b_m \lambda \\
};

\draw[solid]
  (1mat-1-1.north west) -- (1mat-1-4.north east);
\draw[solid]
  (1mat-3-1.south west) -- (1mat-3-4.south east);
\draw[solid]
  (1mat-1-1.north west) -- (1mat-3-1.south west);
\draw[solid]
  (1mat-1-4.north east) -- (1mat-3-4.south east);
  
\node[font=\huge] 
  at ([xshift=18pt,yshift=0pt]1mat-2-2) {$\cdots$}; 
  
\draw[decoration={brace,raise=12pt},decorate]
  (1mat-1-4.north east) -- 
  node[right=15pt] {$m$} 
  (1mat-5-4.south east); 
\draw[decoration={brace,mirror,raise=5pt},decorate]
  (1mat-5-1.south west) -- 
  node[below=7pt] {$m$} 
  (1mat-5-4.south east);

\node at ([xshift=-20pt,yshift=-1.2pt]1mat.west) {$\mathcal{P}_\phi =$};  
\end{tikzpicture}.
\end{equation}
Here, $w$ either equals $\mu$ or vanishes.
Let us now calculate $D_m$ of $\mathcal{P}_\phi$. As we will see it suffices to consider the last two rows to verify the claim.
\begin{align}
D_m &= \det{\mathcal{P}_\phi} = \sum_{\sigma \in S_m} \operatorname{sgn}(\sigma) \prod_{i=1}^{m} {\mathcal{P}_\phi}_{i,\sigma_i} = \sum_{\sigma \in S_m} \operatorname{sgn}(\sigma) \prod_{i=1}^{m-2} {\mathcal{P}_\phi}_{i,\sigma_i} (a_{\sigma_{m-1}} \lambda + b_{\sigma_{m-1}} w) b_{\sigma_m} \lambda  \nonumber \\
&=\frac{1}{2} \sum_{\sigma \in S_m} \left[ \operatorname{sgn}(\sigma) \prod_{i=1}^{m-2} {\mathcal{P}_\phi}_{i,\sigma_i} (a_{\sigma_{m-1}} \lambda + b_{\sigma_{m-1}} w) b_{\sigma_m} \lambda \right. \nonumber \\
 &\qquad \left. + \operatorname{sgn}(\sigma \circ (m,m-1)) \prod_{i=1}^{m-2} {\mathcal{P}_\phi}_{i,\sigma \circ (m, m-1)_{i}} (a_{\sigma \circ (m, m-1)_{m-1}} \lambda + b_{\sigma \circ (m, m-1)_{m-1}} w) b_{\sigma \circ  (m, m-1)_{m}} \lambda \right] \nonumber\\
&=\frac{1}{2} \sum_{\sigma \in S_m} \left[ \operatorname{sgn}(\sigma) \prod_{i=1}^{m-2} {\mathcal{P}_\phi}_{i,\sigma_i} (a_{\sigma_{m-1}} \lambda + b_{\sigma_{m-1}} w) b_{\sigma_m} \lambda   - \operatorname{sgn}( \sigma) \prod_{i=1}^{m-2} {\mathcal{P}_\phi}_{i, \sigma_i} (a_{\sigma_{m}} \lambda + b_{ \sigma_{m}} w) b_{ \sigma_{m-1}} \lambda \right] \nonumber\\
&=\frac{\lambda^2}{2} \sum_{\sigma \in S_m} \left[ \operatorname{sgn}(\sigma) \prod_{i=1}^{m-2} {\mathcal{P}_\phi}_{i,\sigma_i} (a_{\sigma_{m-1}}  b_{\sigma_{m}}  - a_{\sigma_{m}}  b_{\sigma_{m-1}})  \right]  ,
\end{align}
where $S_m$ denotes the symmetric group of order $m$, $(i, j)$ denotes transposition of $i$ and $j$ and $\sigma \circ \tau$ denotes composition of $\sigma$ and $\tau$. As can be easily seen, $D_m$ either vanishes, or it is divisible by $\lambda^2$. Thus, $\mathcal{P}_\phi$ with $m$ distinct divisors cannot be reached and hence, such pencils $\mathcal{P}_\psi$ cannot be a common resource. 

We have hence reduced the considered set of matrix pencils $\mathcal{P}_\psi$ to candidates that contain no $L^T$ block and either no $J$ block or a $J$ block of size $1 \times 1$. Due to dimensionality reasons such matrix pencils either contain at least one $L_\epsilon$ block with $\epsilon \geq 3$ or $L_2 \oplus L_2$. In the following, we will complete the proof by showing that neither of these two classes of matrix pencils serve as common resource.

Let us first consider matrix pencils $\mathcal{P}_\psi$ that contain an $L_\epsilon$ block with $\epsilon \geq 3$, which we assume to be the first block w.l.o.g.. We will now show that for any matrix pencil $\mathcal{P}_\phi$ that can be reached from $\mathcal{P}_\psi$, it holds that either $D_m = 0$ or $D_2 = 1$, which implies that matrix pencils of the form $(m-1) M^1(x_1) \oplus M^1(x_2)$, where $x_1 \neq x_2$ \footnote{Recall that Alice would not be entangled with the other parties if $x_1 = x_2$}, cannot be reached as for such pencils $D_m = (x_1 \mu + \lambda)^{m-1} (x_2 \mu + \lambda) \neq 0$ and $D_2 = x_1 \mu + \lambda \neq 1$.
To this end, we consider the first three rows of a reachable matrix pencil $\mathcal{P}_\phi$, which will suffice to verify the claim. If there are less than three columns that are linearly independent in the first three rows, then $D_m = 0$, as in this case there cannot exist $m$ linearly independent columns in $\mathcal{P}_\phi$. Otherwise, let us consider the $3 \times 3$ submatrix of $\mathcal{P}_\phi$ obtained by considering the first three rows of three such linearly independent columns. We find that the greatest common divisor of the $2$-minors of this matrix equals 1 which implies that $D_2 =1$ as any 2-minor of this submatrix is also a $2$-minor of $\mathcal{P}_\phi$. To this end, note that any column in this submatrix is a superposition of the first four columns of $\mathcal{P}_\psi$ truncated to the first three rows (as they are the only columns which are non-vanishing in this subspace), i.e., a superposition of the colums of $L_3$ (note that we always consider the first three rows of $\mathcal{P}_
\psi$, even if the $L_\epsilon$ block in $\mathcal{P}_\psi$ is not $L_3$, but $L_\epsilon$ with $\epsilon > 3$ instead). This submatrix 
is strictly equivalent to a matrix that is obtained by adding the $i$th column of $L_3$ to the other columns with arbitrary coefficients, and deleting the $i$th column afterwards for some $i \in \{1,2,3,4\}$. It can be easily verified that for any $i$ and any choice of coefficients, there exists one $2$-minor proportional to $\lambda^2$ and one $2$-minor proportional to $\mu^2$ and hence $D_2 = 1$. This proves that pencils containing $L_\epsilon$ for $\epsilon \geq 3$ cannot be a common resource.

Let us finally consider matrix pencils $\mathcal{P}_\psi$ that contain $L_2 \oplus L_2$, which we assume to be the first blocks w.l.o.g.. Similarly as above, we will show that for any matrix pencil $\mathcal{P}_\phi$, that is reachable from such a $\mathcal{P}_\psi$, it holds that either $D_m = 0$ or $D_2 = 1$ and hence such matrix pencils $\mathcal{P}_\psi$ cannot be a common rersource as explained above. 
To this end, we consider the first four rows of a reachable matrix pencil $\mathcal{P}_\phi$, which will suffice to verify the claim. If there are less than four columns that are linearly independent in the first four rows, then $D_m = 0$. Otherwise, let us consider the $4 \times 4$ submatrix of $\mathcal{P}_\phi$ obtained by considering the first four rows of four such linearly independent columns. Similarly as above, we show that the greatest common divisor of the $2$-minors of this matrix equals $1$, which implies that $D_2 =1$.
Similarly as above, this submatrix is strictly equivalent to a matrix that is obtained by adding the $i$th and the $j$th column of $L_2 \oplus L_2$ to the other columns with arbitrary coefficients, and deleting the $i$th and the $j$th column afterwards for some $i,j \in \{1,2,3,4\}$, $i\neq j$. It can be easily verified that for any $i$, $j$, and any choice of coefficients, there exists one $2$-minor proportional to $\lambda^2$ and one $2$-minor proportional to $\mu^2$ and hence $D_2 = 1$, which proves that there does not exist any $2\times m\times (2m-3)$ state that is common resource to all $2\times m\times m$ states.

It follows also that no common resource state, $\ket{\psi}$, can exists in lower dimensions, as otherwise there would exist at least one $2\times m\times (2m-3)$ state that can reach $\ket{\psi}$ and thus would be a common resource for $2 \times m \times m$ states, which would contradict our finding above.
This completes the proof.
\end{proof}

As an illustrative example we present the optimal common resource state for all $2\times 4\times 4$ states in Appendix \ref{sec:crex}.

Let us now consider the optimal common resource state for $2 \times m \times n$ states where $n > m$. We will see that in contrast to the $2 \times m \times m$ scenario, where we found an optimal resource state in $\mathbb{C}^2\otimes \mathbb{C}^m\otimes \mathbb{C}^{2m-2}$, here, the trivial $2 \times m \times 2m$ state that can be used to perform teleportation is optimal. In other words, no common resource for $2 \times m \times n$ states exists in $\mathbb{C}^2\otimes \mathbb{C}^m\otimes \mathbb{C}^{2m-1}$ or lower dimensions.

\begin{theorem}
\label{theo:mnocr}
A common resource state which reaches any $2\times m\times n$ state for $m \geq 2$, $n > m$ exists in $\mathbb{C}^2\otimes \mathbb{C}^m\otimes \mathbb{C}^{2m}$. It is optimal, i.e., no common resource exists in $\mathbb{C}^2\otimes \mathbb{C}^m\otimes \mathbb{C}^{2m-1}$ or lower dimensions.  
\end{theorem}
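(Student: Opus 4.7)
The target set to be realised is $S=\bigcup_{m<n\le 2m}\{\text{states with local ranks }2\times m\times n\}$, the upper bound $n\le 2m$ being the paper's standing assumption for true tripartite entanglement. The key observation is that $S$ contains, as its $n=2m$ slice, all truly entangled states of the full Hilbert space $\mathbb{C}^{2}\otimes\mathbb{C}^{m}\otimes\mathbb{C}^{2m}$ itself. Accordingly the proof splits naturally into an existence statement at Claire-dimension $2m$ and an optimality statement ruling out Claire-dimension at most $2m-1$.

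For existence I would pick the teleportation state $\ket{\psi}_{\rm CR}=\ket{\phi^{+}_{2}}_{AC}\otimes\ket{\phi^{+}_{m}}_{BC}\in\mathbb{C}^{2}\otimes\mathbb{C}^{m}\otimes\mathbb{C}^{2m}$. Since Eq.~\eqref{DuanEq} is saturated, $d_{1}d_{2}=2m=d_{3}$, the cited result implies that this state reaches every state of its ambient Hilbert space by LOCC (and hence a fortiori by probabilistic SLOCC): Claire locally prepares the desired target on an ancilla and teleports the Alice- and Bob-factors to them through the two maximally entangled resources she shares with them. To cover a target $\ket{\phi}\in\mathbb{C}^{2}\otimes\mathbb{C}^{m}\otimes\mathbb{C}^{n}$ with $m<n<2m$, I would isometrically embed $\ket{\phi}$ in $\mathbb{C}^{2}\otimes\mathbb{C}^{m}\otimes\mathbb{C}^{2m}$, reach the embedded state by the same protocol, and then have Claire project onto the $n$-dimensional support of her marginal to recover the nominal target.

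For optimality I would invoke a rank obstruction on Claire's operator. Assume for contradiction that $\ket{\psi}\in\mathbb{C}^{2}\otimes\mathbb{C}^{m}\otimes\mathbb{C}^{k}$ with $k\le 2m-1$ is a common resource for $S$. Since $n=2m$ is in the admissible range, $\ket{\psi}$ must in particular reach some truly tripartite entangled target $\ket{\phi}\in\mathbb{C}^{2}\otimes\mathbb{C}^{m}\otimes\mathbb{C}^{2m}$ via local operators $A\otimes B\otimes C\ket{\psi}\propto\ket{\phi}$. But then $C$ is a linear map $\mathbb{C}^{k}\to\mathbb{C}^{2m}$ of rank at most $k<2m$, which forces $\rho_{C}^{\phi}=\tr_{AB}\ketbra{\phi}$ to have rank at most $k<2m$, contradicting the full Claire-rank $2m$ required of any truly tripartite entangled state on $\mathbb{C}^{2}\otimes\mathbb{C}^{m}\otimes\mathbb{C}^{2m}$. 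Hence no common resource exists in Claire-dimension $\le 2m-1$. The whole argument is essentially a dimension count; the only delicate point---where one must take care---is in explicitly flagging that $n=2m$ genuinely belongs to the admissible range of target dimensions, since it is precisely that extremal case in which the Claire-rank obstruction bites (for smaller fixed $n$ one can in fact exhibit non-trivial resources such as $(m-2)L_{1}\oplus L_{2}$ in $2\times m\times(2m-1)$, so the union structure of $S$ is genuinely used).
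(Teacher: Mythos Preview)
Your optimality argument rests on a misreading of the statement. The theorem is not about a common resource for the \emph{union} $S=\bigcup_{m<n\le 2m}\{2\times m\times n\text{ states}\}$; it asserts that for each \emph{fixed} $n$ with $m<n$ there is no common resource for the set of $2\times m\times n$ states inside $\mathbb{C}^{2}\otimes\mathbb{C}^{m}\otimes\mathbb{C}^{2m-1}$ (compare the abstract: ``for any $n<2m$ it is impossible to reach all states in $\C^2\otimes\C^m\otimes\C^{\tilde n}$ whenever $\tilde n>m$''). Your Claire-rank obstruction bites only at the extremal value $n=2m$; for any fixed $n\le 2m-1$ it is vacuous, and this is exactly the regime where the content lies.

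Worse, your parenthetical remark that $(m-2)L_{1}\oplus L_{2}$ in $2\times m\times(2m-1)$ \emph{is} a common resource for smaller fixed $n$ is false, and disproving it is the heart of the paper's argument. The paper proceeds by a KCF case split on an arbitrary candidate $\mathcal P_\psi$ of size $m\times(2m-1)$. If $\mathcal P_\psi$ carries any eigenvalue $x$, then every reachable $\mathcal P_\phi$ has $D_m=0$ or $(\mu x+\lambda)\mid D_m$, so the pure right-nullspace targets $\bigoplus_i L_{\epsilon_i}$ (for which $D_m=1$ by Lemma~\ref{lemma:dmequalone}) are unreachable. A dimension count then leaves $(m-2)L_{1}\oplus L_{2}$ as the sole $J$-free candidate, and a direct matrix computation shows it cannot reach the specific $2\times m\times(m+1)$ target $\mathcal P_\phi=L_{1}\oplus(m-1)M^{1}(0)$: writing both pencils as $(\lambda\identity\mid\mu\,\cdot\,)$ and comparing the $\lambda$- and $\mu$-coefficients of $B\mathcal P_\psi C^T=\mathcal P_\phi$ forces a row of $B^{-1}$ to vanish. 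Since for every $n>m$ one can exhibit a $2\times m\times n$ state that reaches this particular $\mathcal P_\phi$, the obstruction propagates to every fixed target dimension. Your existence half (teleportation from $mL_1$) is correct and coincides with the paper's.
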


\begin{proof}
As discussed above, the $2\times m\times 2m$ state $\ket{\psi}$ corresponding to $\mathcal{P}_\psi = m L_1$ can be transformed to any other state of lower dimension, in particular to any $2 \times m \times n$ state. This can even be achieved deterministically as Claire can use $\ket{\psi}$ to perform teleportation.

Let us now prove that this state is optimal. To this end, we consider the set of matrix pencils $\mathcal{P}_\psi$ corresponding to $2 \times m \times (2m-1)$ states $\ket{\psi}$ in SKCF and prove the statement in two steps. First, we show that a common resource $\mathcal{P}_\psi$ may not contain a $J$-block. Due to dimensionality reasons, only one candidate remains, which we prove not to be a common resource in a second step.

Let us consider matrix pencils $\mathcal{P}_\psi$ with at least one eigenvalue $x$. For any matrix pencil $\mathcal{P}_\phi$ that can be reached from such a $\mathcal{P}_\psi$ it holds that either $D_m=0$, or $\mu x + \lambda$ divides $D_m$, which can be proven utilizing the same arguments as in the proof of Theorem \ref{theo:mmocr}. In particular, we use that the entries of some row of any reachable $\mathcal{P}_\phi$ are multiples of $\mu x + \lambda$ only. Thus, states corresponding to matrix pencils that are a sum of right nullspace blocks only, $\mathcal{P}_\phi = \bigoplus_i L_{\epsilon_i}$, cannot be reached, as for such $\mathcal{P}_\phi$, it holds that $D_m=1$ due to Lemma \ref{lemma:dmequalone}.

Let us now consider matrix pencils $\mathcal{P}_\psi$ without eigenvalues, i.e., without a $J$ block. Due to dimensionality reasons there exists only one such matrix pencil, $\mathcal{P}_\psi = (m-2) L_1 \oplus L_2$. We will now complete the proof by showing that the $2 \times m \times (m+1)$ state corresponding to $\mathcal{P}_\phi = L_1 \oplus (m-1) M^1(0)$ cannot be reached from $\ket{\psi}$, as this implies that, moreover, there exists at least one $2 \times m \times n$ state that cannot be reached from $\ket{\psi}$. Thus, $\ket{\psi}$ cannot be a common resource for $2 \times m \times n$ states.
Instead of considering the matrix pencils in KCF, we consider the following strictly equivalent pencils for convenience

\begin{equation}
\begin{tikzpicture}[baseline=(current  bounding  box.center),
style1/.style={
  matrix of math nodes,
  every node/.append style={text width=#1,align=center,minimum height=1.0ex},
  nodes in empty cells,
  left delimiter=(,
  right delimiter=),
  },
]
\matrix[style1=0.25cm] (1mat)
{
  \lambda & & &  \mu& & & \vphantom{\mu}\\
  & \lambda& & &   \mu & &\\
  & & \ddots & & &   \ddots & \\
  \vphantom{\mu}& & & \lambda \vphantom{\mu} & \vphantom{\mu} & & \mu\\
};
\draw[solid]
  (1mat-1-4.north east) -- (1mat-4-4.south east);

\draw[decoration={brace,raise=12pt},decorate]
  (1mat-1-7.north east) -- 
  node[right=15pt] {$m$} 
  (1mat-4-7.south east); 
\draw[decoration={brace,mirror,raise=5pt},decorate]
  (1mat-4-1.south west) -- 
  node[below=7pt] {$m$} 
  (1mat-4-4.south east);  
 \draw[decoration={brace,mirror,raise=5pt},decorate]
  (1mat-4-5.south west) -- 
  node[below=7pt] {$m-1$} 
  (1mat-4-7.south east);  

\node at ([xshift=-20pt,yshift=-1.2pt]1mat.west) {$\mathcal{P}_\psi =$};  
\end{tikzpicture}, \quad
\begin{tikzpicture}[baseline=(current  bounding  box.center),
style1/.style={
  matrix of math nodes,
  every node/.append style={text width=#1,align=center,minimum height=3ex},
  nodes in empty cells,
  left delimiter=(,
  right delimiter=),
  },
]
\matrix[style1=0.25cm] (1mat)
{
  \lambda & & &  \vphantom{\mu}&\mu\\
  & \lambda& & & 0 \\
  & & \ddots & &  \vdots\\
  \vphantom{\mu}& & & \lambda \vphantom{\mu} & 0 \vphantom{\mu} \\
};
\draw[solid]
  (1mat-1-4.north east) -- (1mat-4-4.south east);

\draw[decoration={brace,raise=12pt},decorate]
  (1mat-1-5.north east) -- 
  node[right=15pt] {$m$} 
  (1mat-4-5.south east); 
\draw[decoration={brace,mirror,raise=5pt},decorate]
  (1mat-4-1.south west) -- 
  node[below=7pt] {$m$} 
  (1mat-4-4.south east);  
 \draw[decoration={brace,mirror,raise=5pt},decorate]
  (1mat-4-5.south west) -- 
  node[below=7pt] {$1$} 
  (1mat-4-5.south east);  

\node at ([xshift=-20pt,yshift=-1.2pt]1mat.west) {$\mathcal{P}_\phi =$};  
\end{tikzpicture}.
\end{equation}
Let us now assume that there exist operators $B$ and $C$ such that $B \mathcal{P}_\psi C^T = \mathcal{P}_\phi$ and see that this leads to a contradiction. As $B \mathcal{P}_\psi C^T = \mathcal{P}_\phi$ has to hold for all $\lambda$ and $\mu$, equivalently, the following two equalities hold
\begin{equation}
\label{eq:lambdacoeff}
\begin{tikzpicture}[baseline=(current  bounding  box.center),
style1/.style={
  matrix of math nodes,
  every node/.append style={text width=#1,align=center,minimum height=1.0ex},
  nodes in empty cells,
  left delimiter=(,
  right delimiter=),
  },
]
\matrix[style1=0.25cm] (1mat)
{
  & & \vphantom{1}& & &\vphantom{1}\\
  & & & & &\\
  & & \vphantom{1}& & &\vphantom{1}\\
};

\draw[solid]
  (1mat-1-3.north east) -- (1mat-3-3.south east);

\draw[decoration={brace,raise=12pt},decorate]
  (1mat-1-6.north east) -- 
  node[right=15pt] {$m$} 
  (1mat-3-6.south east); 
\draw[decoration={brace,mirror,raise=5pt},decorate]
  (1mat-3-1.south west) -- 
  node[below=7pt] {$m$} 
  (1mat-3-3.south east);  
 \draw[decoration={brace,mirror,raise=5pt},decorate]
  (1mat-3-4.south west) -- 
  node[below=7pt] {$m-1$} 
  (1mat-3-6.south east);  

\node[font=\huge] 
  at ([xshift=0pt,yshift=0pt]1mat-2-2) {$\identity$};
 \node[font=\huge] 
  at ([xshift=0pt,yshift=0pt]1mat-2-5) {$0$};

\node at ([xshift=-20pt,yshift=-1.2pt]1mat.west) {$B$};  
\node at ([xshift=+40pt,yshift=-1.2pt]1mat.east) {$C^T = $};  
\end{tikzpicture} 
\begin{tikzpicture}[baseline=(current  bounding  box.center),
style1/.style={
  matrix of math nodes,
  every node/.append style={text width=#1,align=center,minimum height=1ex},
  nodes in empty cells,
  left delimiter=(,
  right delimiter=),
  },
]
\matrix[style1=0.25cm] (1mat)
{
  & & \vphantom{1}& \vphantom{1}\\
  & & &\\
  & & \vphantom{1}& \vphantom{1}\\
};

\draw[solid]
  (1mat-1-3.north east) -- (1mat-3-3.south east);

\node[font=\huge] 
  at ([xshift=0pt,yshift=0pt]1mat-2-2) {$\identity$};
  \node[font=\large] 
  at ([xshift=0pt,yshift=0pt]1mat-2-4) {$\vec{0}$};

\draw[decoration={brace,raise=12pt},decorate]
  (1mat-1-4.north east) -- 
  node[right=15pt] {$m$} 
  (1mat-3-4.south east); 
\draw[decoration={brace,mirror,raise=5pt},decorate]
  (1mat-3-1.south west) -- 
  node[below=7pt] {$m$} 
  (1mat-3-3.south east);  
\draw[decoration={brace,mirror,raise=5pt},decorate]
  (1mat-3-4.south west) -- 
  node[below=7pt] {$1$} 
  (1mat-3-4.south east);  

\end{tikzpicture},
\end{equation}
\begin{equation}
\label{eq:mucoeff}
\begin{tikzpicture}[baseline=(current  bounding  box.center),
style1/.style={
  matrix of math nodes,
  every node/.append style={text width=#1,align=center,minimum height=1.0ex},
  nodes in empty cells,
  left delimiter=(,
  right delimiter=),
  },
]
\matrix[style1=0.25cm] (1mat)
{
  & & \vphantom{1}& & &\vphantom{1}\\
  & & & & &\\
  & & \vphantom{1}& & &\vphantom{1}\\
};

\draw[solid]
  (1mat-1-3.north east) -- (1mat-3-3.south east);

\draw[decoration={brace,raise=12pt},decorate]
  (1mat-1-6.north east) -- 
  node[right=15pt] {$m$} 
  (1mat-3-6.south east); 
\draw[decoration={brace,mirror,raise=5pt},decorate]
  (1mat-3-1.south west) -- 
  node[below=7pt] {$m-1$} 
  (1mat-3-3.south east);  
 \draw[decoration={brace,mirror,raise=5pt},decorate]
  (1mat-3-4.south west) -- 
  node[below=7pt] {$m$} 
  (1mat-3-6.south east);  

\node[font=\huge] 
  at ([xshift=0pt,yshift=0pt]1mat-2-2) {$0$};
 \node[font=\huge] 
  at ([xshift=0pt,yshift=0pt]1mat-2-5) {$\identity$};

\node at ([xshift=-20pt,yshift=-1.2pt]1mat.west) {$B$};  
\node at ([xshift=+40pt,yshift=-1.2pt]1mat.east) {$C^T = $};  
\end{tikzpicture} 
\begin{tikzpicture}[baseline=(current  bounding  box.center),
style1/.style={
  matrix of math nodes,
  every node/.append style={text width=#1,align=center,minimum height=1ex},
  nodes in empty cells,
  left delimiter=(,
  right delimiter=),
  },
]
\matrix[style1=0.25cm] (1mat)
{
  & & \vphantom{1}& \vphantom{1}\\
  & & &\\
  & & \vphantom{1}& \vphantom{1}\\
};

\draw[solid]
  (1mat-1-3.north east) -- (1mat-3-3.south east);

\node[font=\huge] 
  at ([xshift=0pt,yshift=0pt]1mat-2-2) {$0$};
  \node[font=\large] 
  at ([xshift=0pt,yshift=0pt]1mat-2-4) {$\vec{e}_0$};

\draw[decoration={brace,raise=12pt},decorate]
  (1mat-1-4.north east) -- 
  node[right=15pt] {$m$} 
  (1mat-3-4.south east); 
\draw[decoration={brace,mirror,raise=5pt},decorate]
  (1mat-3-1.south west) -- 
  node[below=7pt] {$m$} 
  (1mat-3-3.south east);  
\draw[decoration={brace,mirror,raise=5pt},decorate]
  (1mat-3-4.south west) -- 
  node[below=7pt] {$1$} 
  (1mat-3-4.south east);  

\end{tikzpicture}.
\end{equation}
Note that Eq. (\ref{eq:lambdacoeff}) implies that $B$ is invertible as the matrix on the right hand side of Eq. (\ref{eq:lambdacoeff}) is of rank $m$. Furthermore, Eq. (\ref{eq:lambdacoeff}) and Eq. (\ref{eq:mucoeff}) imply that $C^T$ has both of the following forms respectively,
\begin{equation}
\begin{tikzpicture}[baseline=(current  bounding  box.center),
style1/.style={
  matrix of math nodes,
  every node/.append style={text width=#1,align=center,minimum height=3.0ex},
  nodes in empty cells,
  left delimiter=(,
  right delimiter=),
  },
]
\matrix[style1=0.25cm] (1mat)
{
  & & & &\\
  & & & &\\
  & & & &\\
  & & & &\\
  & & & &\\
  & & & &\\
  & & & &\\
};

\draw[solid]
  (1mat-1-4.north east) -- (1mat-4-4.south east);
\draw[solid]
  (1mat-4-1.south west) -- (1mat-4-5.south east);

\node[font=\large] 
  at ([xshift=5pt,yshift=-5pt]1mat-2-2) {$B^{-1}$};
\node[font=\large] 
  at ([xshift=0pt,yshift=-5pt]1mat-2-5) {$\vec{0}$};
\node[font=\large] 
  at ([xshift=0pt,yshift=0pt]1mat-6-3) {$\text{arbitrary}$}; 

\draw[decoration={brace,raise=12pt},decorate]
  (1mat-1-5.north east) -- 
  node[right=15pt] {$m$} 
  (1mat-4-5.south east); 
 \draw[decoration={brace,raise=12pt},decorate]
  (1mat-5-5.north east) -- 
  node[right=15pt] {$m-1$} 
  (1mat-7-5.south east); 
\draw[decoration={brace,mirror,raise=5pt},decorate]
  (1mat-7-1.south west) -- 
  node[below=7pt] {$m$} 
  (1mat-7-4.south east);  
 \draw[decoration={brace,mirror,raise=5pt},decorate]
  (1mat-7-5.south west) -- 
  node[below=7pt] {$1$} 
  (1mat-7-5.south east);  

\node at ([xshift=-25pt,yshift=-1.2pt]1mat.west) {$C^T = $};  
\end{tikzpicture} 
\begin{tikzpicture}[baseline=(current  bounding  box.center),
style1/.style={
  matrix of math nodes,
  every node/.append style={text width=#1,align=center,minimum height=3.0ex},
  nodes in empty cells,
  left delimiter=(,
  right delimiter=),
  },
]
\matrix[style1=0.25cm] (1mat)
{
  & & & &\\
  & & & &\\
  & & & &\\
  & & &  &\\
  & & & &\\
  & & & &\\
  & & & &\\
};

\draw[solid]
  (1mat-4-4.north east) -- (1mat-7-4.south east);
\draw[solid]
  (1mat-3-1.south west) -- (1mat-3-5.south east);

\node[font=\huge] 
  at ([xshift=5pt,yshift=-5pt]1mat-5-2) {$0$}; 
\node[font=\large] 
  at ([xshift=0pt,yshift=0pt]1mat-2-3) {$\text{arbitrary}$}; 

\draw[decoration={brace,raise=12pt},decorate]
  (1mat-1-5.north east) -- 
  node[right=15pt] {$m-1$} 
  (1mat-3-5.south east); 
 \draw[decoration={brace,raise=12pt},decorate]
  (1mat-4-5.north east) -- 
  node[right=15pt] {$m$} 
  (1mat-7-5.south east); 
\draw[decoration={brace,mirror,raise=5pt},decorate]
  (1mat-7-1.south west) -- 
  node[below=7pt] {$m$} 
  (1mat-7-4.south east);  
 \draw[decoration={brace,mirror,raise=5pt},decorate]
  (1mat-7-5.south west) -- 
  node[below=7pt] {$1$} 
  (1mat-7-5.south east);  

\node at ([xshift=-25pt,yshift=-1.2pt]1mat.west) {$, \ C^T = $};  
\end{tikzpicture}. 
\end{equation}
Now, note that Eq. (\theequation) implies that the last row of $B^{-1}$ vanishes, which contradicts the fact that $B$ and thus also $B^{-1}$ are invertible. This completes the proof.
\end{proof}

\section{Conclusions}

In conclusion, we have investigated the hierarchy of pure quantum states representing SLOCC classes  in $2 \times m \times n$-systems. We used matrix pencils and their Kronecker
normal form to  identify SLOCC classes and to study most general non-invertible transformations between 
them \cite{ChMi10}. This allowed us to find and parametrize generic SLOCC classes in every dimension and to find possible non-invertible transformations between them. Moreover, we identified resource states in higher dimensions, which can be  used to probabilistically generate all possible (including zero measure sets) SLOCC classes in a lower dimensional system. These results lead to a coarse graining of SLOCC classes. The identification of such resource states is 
also relevant from the point of view of state discrimination as will be explained in the following. It has 
been shown that if a state $\ket{\Phi}$ on a multipartite high-dimensional system 
is a resource state for all states in a lower-dimensional multipartite system, 
then its complex conjugate $\ket{\Phi^*}$ is a universal resource for unambiguous 
state discrimination in the lower-dimensional system \cite{bandhia}. More precisely,
any state out of a fixed set of linearly independent states $\{\ket{\psi_i}\}$ in the low-dimensional system can be correctly identified with non-vanishing probability of success (and a fail to do so can be detected), 
if the parties share in addition the resource $\ket{\Phi^*}$.

There are several directions in which our work may be generalized. First, it may 
be useful to find new polynomial invariants characterizing the SLOCC classes \cite{GoWa13,OsSi12}. These invariants can be used to construct entanglement measures such as the so-called three-tangle.
Moreover, it would be appealing to use the results obtained here to gain more insight into the entanglement properties of three-partite systems. Apart from the relation among the SLOCC classes which is induced via the hierarchy studied here, a more detailed investigation of entanglement seems possible. In particular, it would be desirable to study the derived resource states further and to understand their entanglement properties better. Furthermore, for some of the considered SLOCC classes it might prove promising to study LOCC transformations within them, as the symmetries of states in those classes indicate that the structure of possible LOCC transformations lies somewhere between the very simple bipartite LOCC structure and truly multipartite LOCC structure.  Finally, it would be very interesting to go beyond the somehow artificial restriction of $2 \times m \times n$-dimensional systems and consider general three-partite systems. This, however, probably requires a significant extension of the present theory of matrix pencils.

\begin{acknowledgments}
We thank Christina Ritz, Matthias Kleinmann, and Markus Grassl for discussions. We are very grateful to Stefan Johansson as he shared the  software package guptri, which amongst other things can be used to compute KCF of linear matrix pencils \cite{JoSite} with the algorithms presented in Ref. \cite{DeKa93} and \cite{DeKa93(2)}. M.G. acknowledges each and every  member  of B.K.'s group and Emanuele Grimaldo  in Innsbruck for their amazing hospitality and constant support for the three months of her visit and, additionally, M.G. would like to thanks B.K and O.G for letting her have this experience. M.H. and B.K. acknowledge financial support from the Austrian Science Fund 
(FWF) grants Y535-N16 and DK-ALM: W1259-N27. M.G. and O.G acknowledge financial
support from the DFG and the ERC (Consolidator Grant 683107/TempoQ). Additionally, 
M.G. would like to acknowledge funding from the Gesellschaft der Freunde und 
F\"orderer der Universit\"at Siegen.
\end{acknowledgments}

\appendix

\section{The union of the SLOCC classes of the set of states corresponding to generic pencils is of full measure}
\label{app:generic}

We show here that the SLOCC classes of states corresponding to generic matrix pencils are of full measure. More precisely, we restate and prove here Theorem \ref{theo:genstates}, which has been presented and discussed in the main text.

\begingroup
\def\thetheorem{\ref{theo:genstates}}
\begin{theorem}
The set of full rank states in $2\times m\times n$ belonging to a SLOCC class with a representative whose corresponding matrix pencil is generic, is of full measure.
\end{theorem}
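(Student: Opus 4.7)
The plan is to transfer the full-measure statement for generic matrix pencils (Theorem~\ref{ThGenPencils}) to a full-measure statement for SLOCC classes of states, using the linear state--pencil bijection.

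First I would observe that, once a product basis is fixed, the map $\ket{\psi}\mapsto\mathcal{P}_\psi$ identifies $\C^{2}\otimes\C^{m}\otimes\C^{n}$ (a complex vector space of dimension $2mn$) linearly, and therefore measure-preservingly, with the space of complex $m\times n$ matrix pencils. The locus of full-rank states is Zariski-open, since the rank conditions $\operatorname{rank}\rho_A=2$, $\operatorname{rank}\rho_B=m$, $\operatorname{rank}\rho_C=n$ correspond to non-vanishing of determinantal polynomials; hence it is automatically of full measure. So it suffices to prove that the pencils belonging to a $GL_m\times GL_n$-orbit of minimal codimension form a full-measure subset of $\C^{2mn}$: any such pencil is its own generic SLOCC representative.

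Second I would invoke Theorem~\ref{ThGenPencils} together with the codimension discussion from Ref.~\cite{DeEd95} that precedes it. For $n=m$ the minimal-codimension orbits are exactly those whose pencils have $m$ pairwise distinct eigenvalues; for $n\neq m$ there is a single zero-codimension orbit, namely the direct sum of right-nullspace blocks prescribed there. In both cases every non-generic orbit has complex dimension strictly less than $2mn$. Stratifying $\C^{2mn}$ by the discrete KCF data (the partitions giving minimal indices and eigenvalue size signatures) yields countably many constructible strata, each of which is either the generic one or is contained in a proper algebraic subvariety. The union of the non-generic strata therefore has Lebesgue measure zero, so the generic locus is of full measure at the pencil level.

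Finally I would check that genericness is an SLOCC-invariant property, so that the full-measure set just identified is exactly the set of states with a generic SLOCC representative. For $n\neq m$ the generic pencil consists purely of right-nullspace blocks (Lemma~\ref{lemma:dmequalone}), and Lemma~\ref{lemma:undoalice} shows that any invertible action of Alice on such a pencil can be absorbed into suitable $B$ and $C$; the SLOCC class therefore coincides with a single $GL_m\times GL_n$-orbit. For $n=m$, Theorem~\ref{theo:fractional} tells us that Alice's action preserves minimal indices and eigenvalue size signatures and acts on the eigenvalues only by a linear fractional transformation, which sends $m$ pairwise distinct eigenvalues on $\C\cup\{\infty\}$ to another such configuration. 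In either case having a generic SLOCC representative is equivalent to the pencil of the state itself being generic, and the theorem follows. I expect the main obstacle to be this last step: ensuring that genericness, defined through minimal codimension of $GL_m\times GL_n$-orbits, descends to SLOCC classes once Alice's linear fractional action on eigenvalues is taken into account. The two regimes $n=m$ (an $(m-3)$-parameter family of generic SLOCC classes) and $n\neq m$ (a single generic class) have different flavours, but both reduce to invariance of the relevant KCF data under the restricted actions guaranteed by Theorem~\ref{theo:fractional} and Lemma~\ref{lemma:undoalice}.
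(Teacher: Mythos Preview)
Your proposal is correct and reaches the same conclusion, but the route differs from the paper's. The paper proves full measure by exhibiting explicit polynomial equations that cut out the complement: for $m=n$ it uses the discriminant of $\det(S-\lambda R)$ to characterise pencils with a repeated eigenvalue, and for $n>m$ it uses the resultant of two specific $m$-minors to characterise pencils with $D_m\neq 1$, then invokes Lemma~\ref{lemma:undoalice} together with the orbit-dimension count from \cite{DeEd95} to single out the one full-dimensional $SLOCC_{\vec{\epsilon}}$ among the finitely many remaining. Your argument replaces these explicit polynomials with a stratification by discrete KCF data and a general dimension/constructibility argument, leaning more heavily on the algebraic-geometric content of \cite{DeEd95}. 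Your version is conceptually cleaner and treats both cases more uniformly, while the paper's is more self-contained and produces concrete defining equations for the non-generic locus.

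One point worth tightening: for $m=n$, every individual $GL_m\times GL_n$-orbit, including a generic one, has positive codimension in $\C^{2mn}$, so the relevant comparison is between \emph{bundles} (your strata indexed by discrete data), not orbits. You do stratify correctly, but the sentence ``every non-generic orbit has complex dimension strictly less than $2mn$'' should read ``every non-generic stratum''; as written it is trivially true of all orbits in the square case and does not yet give what you need. The missing input---that the bundle of pencils with $m$ distinct eigenvalues is the unique full-dimensional stratum---is exactly what Theorem~\ref{ThGenPencils} asserts, so citing it at that point closes the argument.
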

\addtocounter{theorem}{-1}
\endgroup
\begin{proof}
Let us first consider the case $m=n$, then we consider the case $m \neq n$.
Recall that the generic matrix pencils of size $m \times m$ are those with $m$ distinct eigenvalues,  $\vec{x} \in \mathbb{C}^{m}$. Here, we will denote the corresponding SLOCC classes by $SLOCC_{\vec{x}}$. We will now show that the union of those SLOCC classes, $\bigcup_{\vec{x}} SLOCC_{\vec{x}}$, is of full measure.
Due to the definition of $SLOCC_{\vec{x}}$, for any state $\ket{\psi}$ in one of those SLOCC classes, there exist invertible operators $A$, $B$, and $C$ such that $(A \otimes B \otimes C) \ket{\psi} =   \left[ \ket{0}_A (R \otimes \identity) + \ket{1}_A (S \otimes \identity) \right] \ket{\phi^+_m}_{BC}$. Here, the matrix $R$ is invertible and the matrix $R^{-1} S$ has $m$ distinct eigenvalues. Recall that, alternatively, the operator $A$ could also be used to fix three of the eigenvalues to, e.g., $0$, $1$, and $\infty$. Introducing the two conditions:
\begin{enumerate}[(i)]
	\item $\det{R}\neq 0$,
	\item  all $m$ eigenvalues of $R^{-1}S$ are distinct, i.e., $\operatorname{char}(R^{-1}S)$ has no multiple roots, where $\operatorname{char}$ denotes the characteristic polynomial of a matrix,
\end{enumerate}
we  have
\begin{align}
	\bigcup_{\vec{x}} SLOCC_{\vec{x}} = &\left\{ \ket{\psi} : \exists A \in SL(2,\mathbb{C}), B \in SL(m,\mathbb{C}), C \in SL(n,\mathbb{C}) \text{ such that } (A \otimes B \otimes C) \ket{\psi} = \right. \\
	&\left.  \left[ \ket{0}_A (R \otimes \identity) + \ket{1}_A (S \otimes \identity) \right] \ket{\phi^+_m}_{BC}, \text{such that (i) and (ii) are fulfilled} \right\}.
\end{align}
To show that $\bigcup_{\vec{x}} SLOCC_{\vec{x}}$ is of full measure, we consider the sets
\begin{align}
F^B = \left\{ \ket{\psi} : \ket{\psi} =  \left[ \ket{0}_A (R \otimes \identity) + \ket{1}_A (S \otimes \identity) \right] \ket{\phi^+_m}_{BC}, \text{ such that (i) and (ii) are fulfilled} \right\},
\end{align}
where $\ket{0}$, $\ket{1}$, and $\ket{\phi^+_m}$ are defined with respect to an arbitrary local, orthonormal basis $B$, here. Obviously, $F^B \subseteq \bigcup_{\vec{x}} SLOCC_{\vec{x}}$ for any chosen basis $B$. We will now show that $F^B$ is of full measure for any $B$, which directly implies that $\bigcup_{\vec{x}} SLOCC_{\vec{x}}$ is of full measure. To this end, we consider $\mathcal{H} \setminus F^B$ and show that this set is of measure zero.
Note that
\begin{align}
\mathcal{H} \setminus F^B = \left\{ \ket{\psi} : \ket{\psi} =  \left[ \ket{0}_A (R \otimes \identity) + \ket{1}_A (S \otimes \identity) \right] \ket{\phi^+_m}_{BC}, \text{ such that either (i) or (ii) are not satisfied} \right\}
\end{align}
is a subset of $\mathcal{H}$ that is characterized by nontrivial polynomial equations in the coefficients of the state $\ket{\psi}$. This can be easily seen as follows. Obviously, the equation $\det{R}=0$ corresponds to a polynomial equation in the coefficients of the state. If $\det{R} \neq 0$, the fact that $\operatorname{char}(R^{-1}S)$ has at least one multiple root is equivalent to the condition that the discriminant $\Delta$ of $\det(S - \lambda R)$, a polynomial in $\lambda$, vanishes. The discriminant $\Delta$ of a polynomial $f$ is given as $\Delta(f) = \prod_{i < j} (\lambda_i - \lambda_j)^2$, where $\lambda_i$ are the roots of $f$. In fact, the discriminant is a polynomial in the coefficients of $f$ and hence, $\Delta(\det(S - \lambda R))$ is a polynomial of the coefficients of the state $\ket{\psi}$ \cite{GeKa94}. Finally, a subset of $\mathcal{H}$ which is characterized by nontrivial polynomial equations is indeed of measure zero. This completes the proof for $m=n$.

Let us now consider the case $n>m$. First, we will show that the union of SLOCC classes which correspond to matrix pencils that are a direct sum of right null space blocks only, is of full measure. Then, we show that the dimension of the SLOCC-orbit of the state corresponding to a generic matrix pencil is strictly larger than the orbit of the other representatives of SLOCC classes in this finite union. This will prove that the SLOCC class corresponding to a generic matrix pencil is of full measure.
Let us denote the above mentioned union of SLOCC classes as $\bigcup_{\vec{\epsilon}} SLOCC_{\vec{\epsilon}}$, where $\vec{\epsilon} \in \mathbb{N}^{n-m}$. The vector $\vec{\epsilon}$ corresponds to the minimal indices of the matrix pencil that is corresponding to $SLOCC_{\vec{\epsilon}}$, i.e., the sizes of the right nullspace blocks.
Using Lemma \ref{lemma:dmequalone} and the fact that if $D_m=1$ for a matrix pencil corresponding to some state $\ket{\psi}$, then $D_m=1$ for the matrix pencil corresponding to $A \otimes B \otimes C \ket{\psi}$ for any local invertible operators $A$, $B$, and $C$, we can write
\begin{align}
	\bigcup_{\vec{\epsilon}} SLOCC_{\vec{\epsilon}} = &\left\{ \ket{\psi} : \ket{\psi} = \left[ \ket{0}_A (R \otimes \identity) + \ket{1}_A (S \otimes \identity) \right] \ket{\phi^+_m}_{BC}, \text{ such that for } \mathcal{P}(R,S) \text{ it holds that } D_m=1 \right\}.
\end{align}
For a given pair of matrices $(R,S)$, let us denote the determinant of the $m \times m$ submatrix obtained by deleting all but the first $m$ columns of $\mu R + \lambda S$ by $d_1$ and the determinant of the $m \times m$ submatrix obtained by deleting all but the last $m$ columns of $\mu R + \lambda S$ by $d_2$.
To show that $\bigcup_{\vec{\epsilon}} SLOCC_{\vec{\epsilon}}$ is of full measure, we consider the sets
\begin{align}
	F^B = &\left\{ \ket{\psi} : \ket{\psi} = \left[ \ket{0}_A (R \otimes \identity) + \ket{1}_A (S \otimes \identity) \right] \ket{\phi^+_m}_{BC}, \text{ such that } \operatorname{gcd}(d_1,d_2)=1 \right\},
\end{align}
where $\ket{0}$, $\ket{1}$, and $\ket{\phi^+_m}$ are defined with respect to an arbitrary local, orthonormal basis $B$, here. Obviously, $F^B \subseteq \bigcup_{\vec{\epsilon}} SLOCC_{\vec{\epsilon}}$ for any chosen basis $B$, as $\operatorname{gcd}(d_1,d_2)=1$ is a sufficient condition for $D_m=1$. We will now show that $F^B$ is of full measure for any $B$, which directly implies that $\bigcup_{\vec{\epsilon}} SLOCC_{\vec{\epsilon}}$ is of full measure. To this end, we consider $\mathcal{H} \setminus F^B$ and show that this set is of measure zero.
Introducing the two conditions
\begin{enumerate}[(i)]
	\item $\operatorname{gcd}(d_1, d_2) \neq 1$ and neither $d_1$ nor $d_2$ vanish,
	\item $d_1$ vanishes or $d_2$ vanishes,
\end{enumerate}
we have
\begin{align}
	 \mathcal{H} \setminus F^B = &\left\{ \ket{\psi} : \ket{\psi} = \left[ \ket{0}_A (R \otimes \identity) + \ket{1}_A (S \otimes \identity) \right] \ket{\phi^+_m}_{BC}, \text{ such that either condition (i) or (ii) holds} \right\}.
\end{align}
We will now show that this set is characterized by nontrivial polynomial equations in the coefficients of the state $\ket{\psi}$. This can be seen as follows. Obviously, condition (ii) corresponds to a non-trivial polynomial equation in the coefficients of the state $\ket{\psi}$. In order to see that condition (i) also does, we consider the determinants $d_1$ and $d_2$.

Note that in case condition (i) holds, as explained before, they are non-vanishing homogenous polynomials in $\lambda$ and $\mu$ and uniquely factorize as
\begin{align}
	d_1 \propto (x_1 \mu + \lambda) \ldots (x_{m-t_1} \mu + \lambda) \mu^{t_1}\\
	d_2 \propto (y_1 \mu + \lambda) \ldots (y_{m-t_2} \mu + \lambda) \mu^{t_2}.
\end{align}
As $\operatorname{gcd}(d_1, d_2) \neq 1$, they must share at least one common factor, that is either a common factor $\mu$ or some $(x_k \mu + \lambda)$ in the factorization of $d_1$ and some $(y_l \mu + \lambda)$ in the factorization of $d_2$ with $x_k=y_l$. This is the case iff these polynomials share at least one common root $(\mu_0, \lambda_0)$ other than $(0,0)$. Let us consider the resultant $\operatorname{res}$ of $d_1$ and $d_2$. The resultant of two homogenous polynomials $f = f_0 \mu^m + f_1 \mu^{m-1} \lambda + \ldots + f_m \lambda^m$ and $g = g_0 \mu^m + g_1 \mu^{m-1} \lambda + \ldots + g_m \lambda^m$ is defined as
\begin{align}
\operatorname{res}(f,g) = \det \begin{pmatrix}\
f_m & f_{m-1} & \cdots & & f_0 & & &\\
 & f_m & f_{m-1} & \cdots & & f_0 & &\\
 & & \ddots & & & & \ddots &\\
 & & & f_m & f_{m-1} & \cdots & & f_0\\
g_m & g_{m-1} & \cdots & & g_0 & & &\\
 & g_m & g_{m-1} & \cdots & & g_0 & &\\
 & & \ddots & & & & \ddots &\\
 & & & g_m & g_{m-1} & \cdots & & g_0
\end{pmatrix}.
\end{align}
The fact that ${d}_1$ and ${d}_2$ have some common root other than $(0,0)$ is equivalent to the fact that their resultant, $\operatorname{res}(d_1, d_2)$, vanishes \cite{GeKa94}. Again, this translates to a polynomial equation in the coefficients of the state $\ket{\psi}$. We hence have that the set $\mathcal{H} \setminus F^B$ is completely characterized by nontrivial polynomial equations and thus is of measure zero.
Thus, we have that the set $\bigcup_{\vec{\epsilon}} SLOCC_{\vec{\epsilon}}$, which is the union of states of finitely many SLOCC classes, is of full measure. 
Recall that the generic matrix pencil is $\mathcal{P}(R,S)=(d-(m\ \operatorname{mod}\ d))L_{\left\lfloor m/d \right\rfloor}\oplus (m\ \operatorname{mod}\ d)L_{\left\lceil m/d\right\rceil }$ and the corresponding SLOCC class is $SLOCC_{\vec{\epsilon}'}$ with $\epsilon'_1 = \ldots = \epsilon'_{(d-(m\ \operatorname{mod}\ d))} = \left\lfloor m/d \right\rfloor$, and if $m\ \operatorname{mod}\ d \neq 0$, $\epsilon'_{(d-(m\ \operatorname{mod}\ d) + 1 )} = \ldots = \epsilon'_d = \left\lceil m/d\right\rceil$.
In the following, we will complete the proof by showing that $SLOCC_{\vec{\epsilon}}$ for any $\vec{\epsilon} \neq \vec{\epsilon}'$ is of measure zero and thus also the finite union $\bigcup_{\vec{\epsilon} \neq \vec{\epsilon}'} SLOCC_{\vec{\epsilon}}$ is of measure zero. From this it follows that $SLOCC_{\vec{\epsilon'}}$ is of full measure. Let us now consider the orbit $\mathcal{O}_{\vec{\epsilon}} = \{ \identity \otimes \tilde{B} \otimes \tilde{C} \ket{\psi}, \text{ where } \ket{\psi} \text{ is representative of } SLOCC_{\vec{\epsilon}}, \tilde{B} \in GL(m, \mathbb{C}), \tilde{C} \in GL(n,\mathbb{C}) \}$. In \cite{DeEd95}, the dimensions of these orbits $\operatorname{dim}\{\mathcal{O}_{\vec{\epsilon}}\}$ have been characterized. In particular, it has been shown that $\operatorname{dim}\{\mathcal{O}_{\vec{\epsilon}'}\} = 2 m n$ and $\operatorname{dim}\{\mathcal{O}_{\vec{\epsilon}}\} < 2 m n$ for any $\vec{\epsilon} \neq \vec{\epsilon}'$, where the complex dimension is counted. Now recall Lemma \ref{lemma:undoalice}, which states that for 
states $\ket{\psi}$ corresponding to matrix pencils that only contain $L_{\epsilon}$ and $L_{\nu}$ blocks, any action $A$ by the first party can be undone by the other parties. This implies that the orbits under $A\otimes B \otimes C$ coincide with the orbits $\identity \otimes B \otimes C$, i.e.,
$\{ A \otimes B \otimes C \ket{\psi}, \text{ where } \ket{\psi} \text{ is representative of } SLOCC_{\vec{\epsilon}}, A \in GL(2, \mathbb{C}), B \in GL(m, \mathbb{C}), C \in GL(n,\mathbb{C}) \} = \mathcal{O}_{\vec{\epsilon}}$. Hence, in particular, also the dimensions coincide. As the dimension of the orbit for $SLOCC_{\vec{\epsilon}}$ is strictly smaller than the dimension of the orbit for $SLOCC_{\vec{\epsilon}'}$ for any $\vec{\epsilon} \neq \vec{\epsilon}'$, we have that any such $SLOCC_{\vec{\epsilon}}$ is of measure zero and hence $SLOCC_{\vec{\epsilon}'}$ is of full measure. This completes the proof.

 \end{proof}

\section{Proof of Lemma \ref{lemma:Ldistribution}}
\label{app:Ldistribution}

Here, we restate Lemma \ref{lemma:Ldistribution} introduced in the main text in order to prove Theorem \ref{GenericToGeneric} and present a proof of the lemma.

\begingroup
\def\thetheorem{\ref{lemma:Ldistribution}}
\begin{lemma}
A state $\ket{\psi}$ in $2 \times m \times n$ with $\mathcal{P}_\psi = \bigoplus_{i=1}^{n-m} L_{\epsilon_i}$ can be transformed to a state $\ket{\phi}$ in $2 \times m \times (n-1)$ with $\mathcal{P}_\phi = \bigoplus_{i=1}^{n-m-1} L_{\epsilon'_i}$ via local operations for any $n \geq m + 2$ if the following condition holds.
There exsits $j \in \{1, \ldots, n-m-1\}$ such that for all $i \in \{1, \ldots, j-1\}$ $\epsilon_i = \epsilon_i'$ and for all $i \in \{j, \ldots, n-m-1\}$ $\epsilon_{i+1} \leq \epsilon_i'$, where we assume $(\epsilon_i)_i$ and $(\epsilon'_i)_i$ to be sorted in ascending order.
\end{lemma}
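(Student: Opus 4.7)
I will invoke Theorem~\ref{theorem:traforules} and explicitly construct an $n \times (n-1)$ matrix $C^T$ of column rank $n-1$ such that $\mathcal{P}_\psi C^T$ is strictly equivalent to $\mathcal{P}_\phi$. The verification will then follow a three-step KCF identification, mirroring the ``alternative proof'' presented after the worked $(2{\times}7{\times}10) \to (2{\times}7{\times}9)$ example: (i) argue $D_m(\mathcal{P}_\psi C^T) = 1$ and invoke Lemma~\ref{lemma:dmequalone} to conclude that the KCF of $\mathcal{P}_\psi C^T$ consists solely of right null-space blocks $L_{\tilde\epsilon_i}$; (ii) exhibit $n-m-1$ linearly independent polynomial vectors in its right null space of respective degrees $\epsilon'_1, \dots, \epsilon'_{n-m-1}$ and invoke Lemma~\ref{lemma:minimalindicesbound} to deduce $\tilde\epsilon_i \leq \epsilon'_i$ for each $i$; (iii) use the dimensional identities $\sum_i \tilde\epsilon_i = m = \sum_i \epsilon'_i$ (forced by the fact that there must be exactly $n-m-1$ $L$-blocks totaling $m$ rows) to conclude $\tilde\epsilon_i = \epsilon'_i$, hence strict equivalence to $\mathcal{P}_\phi$ by Lemma~\ref{Kronecker}.

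\textbf{Construction of $C^T$.} Let $b_i = \sum_{\ell<i}(\epsilon_\ell + 1)$ and $b'_i = \sum_{\ell<i}(\epsilon'_\ell + 1)$ denote the starting column indices of the $i$-th block of $\mathcal{P}_\psi$ and $\mathcal{P}_\phi$, and write $\vec{x}^{(\ell)}$ (resp.\ $\vec{x}'^{(\ell)}$) for the canonical null-space vector of the $\ell$-th block of $\mathcal{P}_\psi$ (resp.\ $\mathcal{P}_\phi$) placed at its proper offset, as in Eq.~(\ref{eq:nullvector}). For $i < j$, I will let $C^T$ act as the identity embedding on the $(\epsilon'_i+1)$ columns of the $i$-th block, so that the first $j-1$ blocks are passed through unchanged. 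For $i \in \{j, \dots, n-m-1\}$, set $a_i = \epsilon'_i - \epsilon_i \geq 0$ and $d_i = \epsilon'_i - \epsilon_{i+1} \geq 0$ (both nonnegative by the hypothesis, using that the $\epsilon_\ell$ are sorted in ascending order so $\epsilon_i \leq \epsilon_{i+1} \leq \epsilon'_i$) and define the $(\epsilon'_i+1)$ corresponding columns of $C^T$ as the unique constant linear combinations of $\vec e_{b_i}, \dots, \vec e_{b_i + \epsilon_i}$ and $\vec e_{b_{i+1}}, \dots, \vec e_{b_{i+1} + \epsilon_{i+1}}$ that yield
\begin{equation}
C^T \vec{x}'^{(i)} \;=\; \mu^{a_i}\,\vec{x}^{(i)} \;+\; \sigma_i \,\lambda^{d_i}\,\vec{x}^{(i+1)},
\end{equation}
with signs $\sigma_i \in \{\pm 1\}$ chosen to be compatible with the alternating signs of Eq.~(\ref{eq:nullvector}). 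Such a constant $C^T$ exists because the right-hand side is, monomial-by-monomial in $(\mu, \lambda)$, a fixed $\mathbb{C}$-linear combination of standard basis vectors of $\mathbb{C}^n$; this is precisely the pattern illustrated in Eq.~(\ref{eq:gentrafoexample}) of the worked example.

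\textbf{Verification and main obstacle.} Three properties then need to be checked. First, rank $(C^T) = n-1$: the $j-1$ initial identity columns span disjoint coordinate axes, and the remaining columns, by the explicit form above, are linearly independent of one another and of the initial columns, as can be read off block-by-block (each ``merging'' block of columns has the same profile as in Eq.~(\ref{eq:gentrafoexample})). Second, the lifted vectors $C^T \vec{x}'^{(i)}$ lie in $\ker \mathcal{P}_\psi C^T$ (since $\mathcal{P}_\psi \vec{x}^{(\ell)} = 0$ for every $\ell$), have polynomial degrees exactly $\epsilon'_i$ (both summands on the right-hand side do, by choice of $a_i, d_i$), and are $\mathbb{C}[\mu,\lambda]$-linearly independent: any hypothetical relation would have to cancel the trailing $\sigma_i \lambda^{d_i} \vec{x}^{(i+1)}$ term of the largest-index lifted vector, but no other term is supported on block $i+1$ of $\mathcal{P}_\psi$ with a pure $\lambda^{d_i}$ leading coefficient. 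Third, $D_m = 1$ will be shown by exhibiting two specific $m$-minors whose gcd is $1$: the minor obtained by discarding the leftmost column of each block of $\mathcal{P}_\psi C^T$ yields an upper-triangular pencil of determinant $\pm \lambda^m$, while discarding the rightmost column of each block yields $\pm \mu^m$. The main obstacle will be the careful bookkeeping of supports, offsets, and signs in the ``merging'' region: ensuring that the prescribed action on $\vec{x}'^{(i)}$ is realized by an \emph{actual} constant matrix of the desired column rank, and that the two selected $m$-minors really reduce to $\pm\lambda^m$ and $\pm\mu^m$ (so that Lemma~\ref{lemma:dmequalone} applies). Once these checks are carried out, combining (i)--(iii) gives the strict equivalence of $\mathcal{P}_\psi C^T$ and $\mathcal{P}_\phi$, and Theorem~\ref{theorem:traforules} supplies the concrete local operator performing the transformation.
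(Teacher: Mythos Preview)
Your overall strategy is correct and is precisely the ``alternative proof'' route that the paper sketches in the main text right after the $2\times 7\times 10$ example. The $C^T$ you describe (defined via its action on the canonical null vectors) is the same matrix the paper writes explicitly in Eq.~(\ref{eq:mapc}), and your three-step KCF identification via Lemma~\ref{lemma:dmequalone}, Lemma~\ref{lemma:minimalindicesbound}, and the dimension count $\sum_i \tilde\epsilon_i = m = \sum_i \epsilon'_i$ is exactly what the paper says yields an alternative proof.

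The paper's actual Appendix~B proof, however, takes a different and somewhat more direct verification route. After writing down the same $C^T$, instead of computing $D_m$ and invoking the minimal-index machinery, it \emph{explicitly constructs} an $m\times m$ matrix $\tilde B$ (Eq.~(\ref{eq:mapb})) satisfying $\tilde B\,\mathcal{P}_\phi = \mathcal{P}_\psi C^T$; the existence of such a $\tilde B$ is immediate from the redistributed-rows picture of Eq.~(\ref{eq:prettypencil}). The only thing left is to show $\tilde B$ is invertible, which the paper does by proving $\operatorname{rk}(\mathcal{P}_\psi C^T)=m$: it transforms $\mathcal{P}_\psi C^T$ via invertible column operations to a lower block-triangular pencil $\tilde{\mathcal{P}}$ (this is the step that erases the ``grey shaded areas'' in Eq.~(\ref{eq:prettypencil})) and then exhibits a single nonvanishing $m$-minor. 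This avoids Lemma~\ref{lemma:minimalindicesbound} entirely and sidesteps the bookkeeping of null-space degrees.

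One technical correction to your plan: the $m\times m$ submatrix obtained by simply deleting the leftmost (or rightmost) column of each sector of $\mathcal{P}_\psi C^T$ is \emph{not} triangular in general --- the overlapping copies of $L_{\epsilon_{i+1}}$ straddle sector boundaries and create off-diagonal entries. Its determinant does turn out to be $\pm\mu^m$ (resp.\ $\pm\lambda^m$), but establishing this cleanly requires the same block-triangularization column operations the paper uses (these preserve $D_m$); only after that reduction does the relevant minor become a product of triangular block determinants. You correctly flag this as ``the main obstacle,'' and it is --- just be aware that the fix is the column-elimination procedure described in the Appendix~B proof, not a direct triangularity claim.
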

\addtocounter{theorem}{-1}
\endgroup

\begin{proof}
First of all, let us note that it suffices to prove the lemma for $j=1$, as any other case can be reduced to that case by acting trivially on the subspace of $\mathcal{P}_\psi$ which contains the blocks $L_{\epsilon_1}, \ldots, L_{\epsilon_{j-1}}$. Hence, we will assume $j=1$ in the following, which implies that $\epsilon_{i+1} \leq \epsilon_i'$ holds for all $i \in \{1, \ldots, n-m-1\}$.   
To prove the statement we will explicitly derive the local operations which perform the transformation. More precisely, we will construct an $(n-1) \times n$ matrix $C$ which has rank $n-1$ such that $\mathcal{P}_\psi C^T$ is strictly equivalent to $\mathcal{P}_\psi C^T$ in Eq. (\ref{eq:prettypencil}), where the gray areas vanish. Then we will show that the rank of this matrix pencil, $\mathcal{\tilde{P}}$, is $m$. We will then explicitly construct $\tilde{B} \in M_m$ such that $\tilde{B} \mathcal{P}_\phi = \mathcal{P}_\psi C^T$.
As the rank of $\mathcal{P}_\psi C^T$ is maximal (and coincides with the rank of the matrix pencil $\mathcal{P}_\phi$), $\tilde{B}$ must be invertible, which shows that there exist $B$ and $C$ such that $\identity \otimes B \otimes C \ket{\psi} = \ket{\phi}$.

Let us write $d = n - m$, where $d\geq 2$ and let us assume that the matrix pencils $\mathcal{P}_\psi$ and $\mathcal{P}_\phi$ are in KCF and, furthermore, w.l.o.g. the blocks $L_{\epsilon_{1}}, \ldots, L_{\epsilon_{d}}$ present in $\mathcal{P}_\psi$ and $L_{\epsilon'_{1}}, \ldots, L_{\epsilon'_{d-1}}$ present in $\mathcal{P}_\phi$ are arranged in order of ascending size. Let us now consider an operation $C$ by the third party given in terms of the $n \times (n-1)$ matrix 
\begin{align}
\label{eq:mapc}
C^T=\sum_{i=1}^{d-1} \mathcal{I}_{p_i,p'_i}(\epsilon_i+1) + \mathcal{I}_{p_{i+1},p'_{i}+\epsilon'_i-\epsilon_{i+1}}(\epsilon_{i+1}+1),
\end{align}
where $p_i = \sum_{j=1}^{i-1} (\epsilon_j + 1)$, $p'_i = \sum_{j=1}^{i-1} (\epsilon'_j + 1)$, and $\mathcal{I}_{k,l}(\epsilon)$ is an $(m+d) \times (m+d-1)$ matrix with all entries vanishing except an $\epsilon \times \epsilon$ identity submatrix with its upper left corner placed at the coordinate $(k,l)$.
This operator transforms the pencil $\mathcal{P}_\psi$ to an $m \times (n-1)$ sized pencil $\mathcal{P}_\psi C^T$ given by
\begin{equation}
\label{eq:prettypencil}
\begin{tikzpicture}[baseline=(current  bounding  box.center),
style1/.style={
  matrix of math nodes,
  every node/.append style={text width=#1,align=center,minimum height=2ex},
  nodes in empty cells,
  left delimiter=(,
  right delimiter=),
  },
]
\matrix[style1=0.15cm] (1mat)
{
  & & & & & & & & & & & & & & & & & & & & & & & & & & & & & & &  \\
  & & & & & & & & & & & & & & & & & & & & & & & & & & & & & & &  \\
  & & & & & & & & & & & & & & & & & & & & & & & & & & & & & & &  \\
  & & & & & & & & & & & & & & & & & & & & & & & & & & & & & & &  \\
  & & & & & & & & & & & & & & & & & & & & & & & & & & & & & & &  \\
  & & & & & & & & & & & & & & & & & & & & & & & & & & & & & & &  \\
  & & & & & & & & & & & & & & & & & & & & & & & & & & & & & & &  \\
  & & & & & & & & & & & & & & & & & & & & & & & & & & & & & & &  \\
  & & & & & & & & & & & & & & & & & & & & & & & & & & & & & & &  \\
  & & & & & & & & & & & & & & & & & & & & & & & & & & & & & & &  \\
  & & & & & & & & & & & & & & & & & & & & & & & & & & & & & & &  \\
  & & & & & & & & & & & & & & & & & & & & & & & & & & & & & & &  \\
  & & & & & & & & & & & & & & & & & & & & & & & & & & & & & & &  \\
  & & & & & & & & & & & & & & & & & & & & & & & & & & & & & & &  \\
  & & & & & & & & & & & & & & & & & & & & & & & & & & & & & & &  \\
  & & & & & & & & & & & & & & & & & & & & & & & & & & & & & & &  \\
  & & & & & & & & & & & & & & & & & & & & & & & & & & & & & & &  \\
  & & & & & & & & & & & & & & & & & & & & & & & & & & & & & & &  \\
  & & & & & & & & & & & & & & & & & & & & & & & & & & & & & & &  \\
  & & & & & & & & & & & & & & & & & & & & & & & & & & & & & & &  \\
  & & & & & & & & & & & & & & & & & & & & & & & & & & & & & & &  \\
  & & & & & & & & & & & & & & & & & & & & & & & & & & & & & & &  \\
  & & & & & & & & & & & & & & & & & & & & & & & & & & & & & & &  \\
  & & & & & & & & & & & & & & & & & & & & & & & & & & & & & & &  \\
  & & & & & & & & & & & & & & & & & & & & & & & & & & & & & & &  \\
};

\draw[solid]
  (1mat-1-1.north west) -- (1mat-1-3.north east);
\draw[solid]
  (1mat-2-1.south west) -- (1mat-2-3.south east);
\draw[solid]
  (1mat-1-1.north west) -- (1mat-2-1.south west);
\draw[solid]
  (1mat-1-3.north east) -- (1mat-2-3.south east);
  
\draw[solid]
  (1mat-3-2.north west) -- (1mat-3-4.north east);
\draw[solid]
  (1mat-4-2.south west) -- (1mat-4-4.south east);
\draw[solid]
  (1mat-3-2.north west) -- (1mat-4-2.south west);
\draw[solid]
  (1mat-3-4.north east) -- (1mat-4-4.south east);
  
\draw[solid]
  (1mat-3-5.north west) -- (1mat-3-7.north east);
\draw[solid]
  (1mat-4-5.south west) -- (1mat-4-7.south east);
\draw[solid]
  (1mat-3-5.north west) -- (1mat-4-5.south west);
\draw[solid]
  (1mat-3-7.north east) -- (1mat-4-7.south east);
  
\draw[solid]
  (1mat-5-6.north west) -- (1mat-5-9.north east);
\draw[solid]
  (1mat-7-6.south west) -- (1mat-7-9.south east);
\draw[solid]
  (1mat-5-6.north west) -- (1mat-7-6.south west);
\draw[solid]
  (1mat-5-9.north east) -- (1mat-7-9.south east);
  
\draw[solid]
  (1mat-5-10.north west) -- (1mat-5-13.north east);
\draw[solid]
  (1mat-7-10.south west) -- (1mat-7-13.south east);
\draw[solid]
  (1mat-5-10.north west) -- (1mat-7-10.south west);
\draw[solid]
  (1mat-5-13.north east) -- (1mat-7-13.south east);
  
\draw[solid]
  (1mat-8-10.north west) -- (1mat-8-14.north east);
\draw[solid]
  (1mat-11-10.south west) -- (1mat-11-14.south east);
\draw[solid]
  (1mat-8-10.north west) -- (1mat-11-10.south west);
\draw[solid]
  (1mat-8-14.north east) -- (1mat-11-14.south east);
  
\draw[solid]
  (1mat-16-26.north west) -- (1mat-16-31.north east);
\draw[solid]
  (1mat-20-26.south west) -- (1mat-20-31.south east);
\draw[solid]
  (1mat-16-26.north west) -- (1mat-20-26.south west);
\draw[solid]
  (1mat-16-31.north east) -- (1mat-20-31.south east);

\draw[solid]
  (1mat-21-27.north west) -- (1mat-21-32.north east);
\draw[solid]
  (1mat-25-27.south west) -- (1mat-25-32.south east);
\draw[solid]
  (1mat-21-27.north west) -- (1mat-25-27.south west);
\draw[solid]
  (1mat-21-32.north east) -- (1mat-25-32.south east);
  
\draw[dashed]
  (1mat-1-4.north east) -- (1mat-25-4.south east);
\draw[dashed]
  (1mat-1-9.north east) -- (1mat-25-9.south east);
\draw[dashed]
  (1mat-1-14.north east) -- (1mat-25-14.south east);
\draw[dashed]
  (1mat-1-25.north east) -- (1mat-25-25.south east);

\draw[dotted]
  (1mat-1-3.north east) -- (1mat-25-3.south east);
\draw[dotted]
  (1mat-1-6.north east) -- (1mat-25-6.south east);
\draw[dotted]
  (1mat-1-10.north east) -- (1mat-25-10.south east);
\draw[dotted]
  (1mat-1-15.north east) -- (1mat-25-15.south east);
\draw[dotted]
  (1mat-1-27.north east) -- (1mat-25-27.south east);

\fill [black ,opacity=0.15] (1mat-3-7.north east) rectangle (1mat-4-7.south west);
\fill [black ,opacity=0.15] (1mat-5-10.north east) rectangle (1mat-7-14.south west);
\fill [black ,opacity=0.15] (1mat-16-27.north east) rectangle (1mat-20-32.south west);    

\node[font=\large] 
  at ([xshift=6pt,yshift=-6pt]1mat-1-1.east) {$L_{\epsilon_1}$};  
\node[font=\large] 
  at ([xshift=6pt,yshift=-6pt]1mat-3-2.east) {$L_{\epsilon_2}$};  
\node[font=\large] 
  at ([xshift=6pt,yshift=-6pt]1mat-3-5.east) {$L_{\epsilon_2}$};  
\node[font=\large] 
  at ([xshift=0pt,yshift=0pt]1mat-6-7.east) {$L_{\epsilon_3}$};  
\node[font=\large] 
  at ([xshift=0pt,yshift=0pt]1mat-6-11.east) {$L_{\epsilon_3}$};  
\node[font=\large] 
  at ([xshift=6pt,yshift=-6pt]1mat-9-11.east) {$L_{\epsilon_4}$};
\node[font=\large] 
  at ([xshift=0pt,yshift=0pt]1mat-18-28.east) {$L_{\epsilon_{d-1}}$};
\node[font=\large] 
  at ([xshift=0pt,yshift=0pt]1mat-23-29.east) {$L_{\epsilon_{d}}$};

\node[font=\huge] 
  at (1mat-12-20) {$\ddots$};
  \node[font=\huge] 
  at (1mat-17-20) {$\ddots$};      

\draw[decoration={brace,raise=5pt},decorate]
  (1mat-1-1.north west) -- 
  node[above=7pt] {$\epsilon_{1} + 1$} 
  (1mat-1-3.north east);
\draw[decoration={brace,raise=5pt},decorate]
  (1mat-1-4.north west) -- 
  node[above=7pt] {$\epsilon_{2} + 1$} 
  (1mat-1-6.north east);
\draw[decoration={brace,raise=5pt},decorate]
  (1mat-1-7.north west) -- 
  node[above=7pt] {$\epsilon_{3} + 1$} 
  (1mat-1-10.north east);
\draw[decoration={brace,raise=5pt},decorate]
  (1mat-1-11.north west) -- 
  node[above=7pt] {$\epsilon_{4} + 1$} 
  (1mat-1-15.north east);
\draw[decoration={brace,raise=5pt},decorate]
  (1mat-1-28.north west) -- 
  node[above=7pt] {$\epsilon_{d}$} 
  (1mat-1-32.north east);
   
\draw[decoration={brace,mirror,raise=5pt},decorate]
  (1mat-4-2.south west) -- 
  node[below=7pt] {$\epsilon_{2} + 1$} 
  (1mat-4-4.south east);
  \draw[decoration={brace,mirror,raise=5pt},decorate]
  (1mat-7-6.south west) -- 
  node[below=7pt] {$\epsilon_{3} + 1$} 
  (1mat-7-9.south east);
  \draw[decoration={brace,mirror,raise=5pt},decorate]
  (1mat-11-10.south west) -- 
  node[below=7pt] {$\epsilon_{4} + 1$} 
  (1mat-11-14.south east);

\draw[decoration={brace,mirror,raise=5pt},decorate]
  (1mat-25-1.south west) -- 
  node[below=7pt] {$\epsilon'_{1} + 1$} 
  (1mat-25-4.south east);
\draw[decoration={brace,mirror,raise=5pt},decorate]
  (1mat-25-5.south west) -- 
  node[below=7pt] {$\epsilon'_{2} + 1$} 
  (1mat-25-9.south east);
\draw[decoration={brace,mirror,raise=5pt},decorate]
  (1mat-25-10.south west) -- 
  node[below=7pt] {$\epsilon'_{3} + 1$} 
  (1mat-25-14.south east);
\draw[decoration={brace,mirror,raise=5pt},decorate]
  (1mat-25-26.south west) -- 
  node[below=7pt] {$\epsilon'_{d-1} + 1$} 
  (1mat-25-32.south east);
   
\node at ([xshift=-30pt,yshift=-1.2pt]1mat.west) {$\mathcal{P}_{\psi} C^T =$};  
\end{tikzpicture}.
\end{equation}
In other words, the operator $C^T$ copies the blocks $L_{\epsilon_i}$ and redistributes these copies to new columns as indicated in Eq. (\ref{eq:prettypencil}). Note that this transformation is achieved by column operations on the pencil solely. Hence, indeed only the third party has to apply some operation to perform the transformation. 
Let us look at the structure of the pencil $\mathcal{P}_{\psi} C^T$ in Eq. (\ref{eq:prettypencil}) in more detail. The columns of the pencil can be grouped into $d-1$ sectors of width $\epsilon'_i + 1$, respectively. In each of this sectors, two blocks, $L_{\epsilon_i}$ and $L_{\epsilon_{i+1}}$, occur. It is ensured that these blocks entirely fit into the sector, as due to the assumption $\epsilon_{i+1} \leq \epsilon'_i$ and thus also $\epsilon_{i} \leq \epsilon'_i$. Furthermore note, that the horizontal overlap of those two blocks is at least one column, as 
\begin{align}
	\epsilon'_i = m - \sum_{j=1, j \neq i}^{d-1} \epsilon'_j = m - \sum_{j=1}^{i-1} \epsilon'_j - \sum_{j=i+1}^{d-1} \epsilon'_j \leq m - \sum_{j=1}^{i-1} \epsilon_j - \sum_{j=i+1}^{d-1} \epsilon_{j+1} = m - \sum_{j=1, j\neq i, j \neq i+1}^{d} \epsilon_j = \epsilon_i + \epsilon_{i+1},
\end{align}
where we used the assumption $\epsilon_{j+1} \leq \epsilon'_{j}$ to obtain the inequality. This shows that the transformation from $\mathcal{P}_{\psi}$ to $\mathcal{P}_{\psi} C^T$ as given in Eq. (\ref{eq:prettypencil}) is always possible. The grey shaded areas will be explained later on.

In the following, we will show that this pencil, $\mathcal{P}_{\psi} C^T$, is strictly equivalent to $\mathcal{P}_{\phi}$. To this end, we first show that $\mathcal{P}_{\psi} C^T$ is a full rank matrix pencil, and then we show that there exists some invertible operation $B$ such that $B \mathcal{P}_{\psi} C^T = \mathcal{P}_{\phi}$. 
To see that $\mathcal{P}_{\psi} C^T$ is of full rank, we consider the pencil $\mathcal{P}_{\psi} C^T$ and apply one more transformation by the third party, which is invertible, to obtain a strictly equivalent pencil $\mathcal{\tilde{P}}$ which is of lower block-triangular form. We then show that $\mathcal{\tilde{P}}$ has full rank. 
Note that $\mathcal{P}_{\psi} C^T$ has a form where the first $\epsilon_1$ rows are zero for all except the first $\epsilon_1 + 1$ columns. We now give an iterative procedure transforming $\mathcal{P}_{\psi} C^T$ to the desired pencil $\mathcal{\tilde{P}}$ which has the property that for all $i$, the first $q_{i+1}$ rows are zero on all except the first $p_{i+1}$ columns, where $q_i = \sum_{j=1}^{i-1} \epsilon_j$ and the above definition for $p_i$ is used. In particular, $\mathcal{\tilde{P}}$ is the same matrix pencil as the one given in Eq. (\ref{eq:prettypencil}), but now the entries on the grey shaded areas vanish. As observed above, $\mathcal{P}_{\psi} C^T$ has the required form for the first $q_2 = \epsilon_1$ rows. 
Let us now assume that the pencil has the required form for the first $q_{i}$ rows. We will show that the pencil can be transformed to a pencil of required form for the first $q_{i+1}$ rows. To this end, we show that the columns $p_{i} + 1$ until $p'_{i}$ can be used to cancel the entries in the rows $q_{i}+1$ until $q_{i+1}$ of columns $p_{i+1}+1$ until $p'_{i} + \epsilon_i+1$, which constitute one of the grey shaded areas in Eq. (\ref{eq:prettypencil}). To see this, note that the columns $p_{i} + 1$ until $p'_{i}$ have no non-zero entries in the first $q_i$ rows due to the assumption. On rows $q_i+1$ until $q_{i+1}$, these columns have the same entries as the columns $p_{i+1}+1$ until $p'_{i} + \epsilon_i + 1$, as these are entries of two copies of the block $L_{\epsilon_i}$ which are positioned suitably. Hence, we can subtract the columns $p_{i} + 1$ until $p'_{i}$ from the columns $p_{i+1}+1$ until $p'_{i} + \epsilon_i + 1$ in order to cancel those entries. The resulting pencil now has the required 
form for the first $q_{i+1}$ rows. Moreover, the rest of the pencil remains unchanged under this operation, because the columns $p_{i} + 1$ until $p'_{i}$ vanish on all rows after $q_{i+1}$, as the first entry on such a row appears in column $p'_{i+1} - (\epsilon_{i+1} + 1) + 1$ and we have that $(p'_{i+1} - (\epsilon_{i+1} + 1) + 1) > p'_{i}$ due to the assumption $\epsilon'_i \geq \epsilon_{i+1}$.

Let us now observe important properties of the pencil $\mathcal{\tilde{P}}$. This pencil has lower block triangular form, i.e., the pencil has rectangular blocks on its diagonal, all entries to the upper right of the blocks in the diagonal vanish, and entries in the lower left region are arbitrary.
 The first $d-1$ blocks are cyclic column permutations of the blocks $L_{\epsilon_i}$, respectively. The last block is a $\epsilon_d \times \epsilon_d$ sized block given by the last $\epsilon_d$ columns of $L_{\epsilon_d}$. Let us now show that the rank of $\mathcal{\tilde{P}}$ is maximal, i.e., $r=m$. To this end, let us delete $d-1$ columns from the pencil, one from each of the first $d-1$ rectangular blocks in the diagonal, in order to obtain a $m \times m$ matrix of lower block-triangular form whose determinant is one $m$-minor of $\mathcal{\tilde{P}}$. The determinant of a block-triangular matrix equals the product of the determinants of the blocks on the diagonal \cite{HoJo13} and as all of these determinants are non-vanishing we obtain an non-vanishing $m$-minor, which proves that $\mathcal{\tilde{P}}$ is of full rank and hence also $\mathcal{P}_{\psi} C^T$ is.

We show now that there exists an $m \times m$ matrix $\tilde{B}$ such that $\tilde{B} \mathcal{P}_\phi = \mathcal{P}_{\psi} C^T$. To see this, note that rows of the pencil $\mathcal{P}_\phi$ can be easily redistributed in order to obtain the pencil $\mathcal{P}_{\psi} C^T$. In particular, the operator $\tilde{B}$ is given by
\begin{align}
\label{eq:mapb}
\tilde{B}=\sum_{i=1}^{d-1} \mathcal{I}_{q_i,q'_i}(\epsilon_i) + \mathcal{I}_{q_{i+1},q'_{i}+\epsilon'_i-\epsilon_{i+1}}(\epsilon_{i+1}),
\end{align}
where $q'_i = \sum_{j=1}^{i-1} \epsilon'_j$. Here, $\mathcal{I}_{k,l}(\epsilon)$ is a $m\times m$ matrix which is defined analogous to before.

As we have proven that the matrix pencil $\mathcal{P}_{\psi} C^T$ is of full rank, it follows that $\tilde{B}$ is invertible as otherwise the rank of $\mathcal{P}_{\psi} C^T$ must be smaller than $m$. Defining $B = \tilde{B}^{-1}$ we have that $\mathcal{P}_{\phi} = B \mathcal{P}_{\psi} C^T$ and hence $\ket{\phi} = \identity \otimes B \otimes C \ket{\psi}$, which completes the proof.
\end{proof}

Let us prove the following Lemma about minimal indizes.
\begin{lemma}
\label{lemma:minimalindicesbound}
Given a list of $p$ linearly independent (in the sense introduced in the preliminaries) homogenous polynomial vectors in $\mu$ and $\lambda$ $(\vec{y}_1, \vec{y}_2, \ldots, \vec{y}_p)$ of ascending degrees 
and given a matrix pencil $\mathcal{P}$ such that $\mathcal{P} \vec{y}_l = 0 \ \forall \; 1 \leq  l \leq p$. Let  
$\epsilon_1 \leq \epsilon_2 \leq \ldots \leq \epsilon_p$ be the first $p$ minimal indices of $\mathcal{P}$. Then it holds that $\epsilon_l \leq \operatorname{deg}(\vec{y}_l) \ \forall \; 1 \leq  l \leq p$, where $\operatorname{deg}(\vec{y}_l)$ denotes the degree of $\vec{y}_l$.
\end{lemma}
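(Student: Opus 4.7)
The plan is to expand each $\vec{y}_k$ as a polynomial combination of a canonical family of nullvectors read off from the KCF of $\mathcal{P}$, and then to extract the inequality from an elementary rank bookkeeping. Both sides of the claim are invariant under strict equivalence (the minimal indices by Lemma 3 of Ref.~\cite{ChMi10}, and the degrees of polynomial nullvectors under the constant change of basis $\vec{y}\mapsto C^{-T}\vec{y}$), so I may assume w.l.o.g.\ that $\mathcal{P}=\mathcal{P}_{KCF}$. In this form the right nullvectors coming from the $L_{\epsilon_i}$ blocks, namely the vectors $\vec{x}_i$ built from Eq.~(\ref{eq:nullvector}) and embedded with zeros elsewhere, form a generating set with the following key property: every homogeneous polynomial nullvector $\vec{y}$ of degree $d$ admits a unique expansion
\begin{equation*}
\vec{y}=\sum_{i=1}^{a} p_i(\mu,\lambda)\vec{x}_i,
\end{equation*}
where each $p_i$ is a homogeneous polynomial of degree $d-\epsilon_i$ (with the convention $p_i=0$ if $d<\epsilon_i$). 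This is visible directly from the block-diagonal structure of $\mathcal{P}_{KCF}$, because on the columns occupied by the $i$-th $L$-block the only homogeneous nullvectors of degree $d$ are scalar-polynomial multiples of $\vec{x}^{\epsilon_i}$ of the appropriate degree, and the $J$ and $L^T$ blocks contribute nothing to the right nullspace.

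With this expansion in hand, apply it to $\vec{y}_k=\sum_{i=1}^{a} p_{ki}(\mu,\lambda)\vec{x}_i$ for $k=1,\dots,p$. The homogeneity count gives the crucial one-sided implication: whenever $p_{ki}\neq 0$ one has $\deg(p_{ki})=\deg(\vec{y}_k)-\epsilon_i\geq 0$, and therefore $\epsilon_i\leq\deg(\vec{y}_k)$. Moreover, the uniqueness of the expansion means that the polynomial linear independence of $\vec{y}_1,\dots,\vec{y}_p$ used in the lemma is exactly the statement that the $p\times a$ matrix $M=(p_{ki})$ has rank $p$ over the field $\mathbb{C}(\mu,\lambda)$: a relation $\sum_k q_k\vec{y}_k=0$ translates, by uniqueness, into $\sum_k q_k p_{ki}=0$ for every $i$.

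The lemma follows by an elementary combinatorial step. For each fixed $l\leq p$, the upper $l\times a$ sub-block of $M$ still has rank $l$, so there exist column indices $i_1<i_2<\dots<i_l$ whose restrictions to the first $l$ rows are linearly independent. In particular, for each such $i_j$ there is at least one $k\leq l$ with $p_{k,i_j}\neq 0$, which by the degree bound above forces $\epsilon_{i_j}\leq\deg(\vec{y}_k)\leq\deg(\vec{y}_l)$, where the last inequality uses the assumed ascending ordering of the degrees of the $\vec{y}_k$. Since $i_1<\dots<i_l$ are $l$ distinct positive integers, $i_l\geq l$, and since the minimal indices are arranged in ascending order one obtains $\epsilon_l\leq\epsilon_{i_l}\leq\deg(\vec{y}_l)$, which is the desired inequality.

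The only delicate step is the canonical polynomial expansion together with its degree property; everything else is routine linear algebra. Establishing that expansion amounts to observing, block by block in the KCF, that each $L_{\epsilon_i}$ has a one-dimensional (over $\mathbb{C}(\mu,\lambda)$) right nullspace generated by the single homogeneous vector $\vec{x}^{\epsilon_i}$ of degree $\epsilon_i$, while the $J$ and $L^T$ blocks contribute trivially on the right. I expect this to be the main technical obstacle, but it is local to a single block and thus reduces to the already familiar analysis of $L_\epsilon$ performed around Eq.~(\ref{eq:lblock}).
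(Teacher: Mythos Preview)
Your proof is correct and takes essentially the same approach as the paper's: reduce to KCF, observe that the right nullspace is carried entirely by the $L_{\epsilon_i}$ blocks and is one-dimensional over $\mathbb{C}(\mu,\lambda)$ on each block, and then use linear independence to force enough distinct blocks to be hit, which yields the degree bound. Your version is somewhat more explicit than the paper's---in particular, packaging the block coefficients into the matrix $M=(p_{ki})$ and extracting the $l$ independent columns makes the final counting step cleaner than the paper's slightly terse ``for any set of $k$ linearly independent vectors one needs at least $k$ different $\vec{y}_l^k\neq 0$'' phrasing---but the underlying idea is the same.
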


\begin{proof}
W.l.o.g. we assume that the matrix pencil $\mathcal{P}$ is in KCF. Note that the $J$ block and the $L^T_{\nu_i}$ blocks do not have any vector in the right null-space. The vectors $\vec{y}_l$ can hence only be non-vanishing for the first $\sum_{j=1}^a (\epsilon_j + 1)$ entries, where $a$ denotes the number of right null-space blocks in $\mathcal{P}$. Moreover, for each $L_\epsilon$ there exists exactly one (linearly independent) vector in the nullspace. Note that the smallest possible degree of such a vector is $\epsilon$. Let us now write the vectors $\vec{y}_l$ as $\vec{y}_l = (\vec{y}^1_l, \ldots, \vec{y}^a_l, 0, \ldots)^T$, where the dimension of $\vec{y}^k_l$ is $\epsilon_k+1$.
Given the block diagonal form of $\mathcal{P}$ (in KCF) we have that $\mathcal{P} \vec{y}_l = \mathcal{P} (\vec{y}^1_l, \ldots, \vec{y}^a_l, 0, \ldots)^T = (L_{\epsilon_1}\vec{y}^1_l, \ldots, L_{\epsilon_a}\vec{y}^a_l, 0, \ldots)^T$.
The fact that each $L_{\epsilon_i}$ has only one vector in the null-space implies that for any set of $k$ linearly independent vectors $\{\vec{y}_l\}$ one needs to have at least $k$ different $\vec{y}_l^k \neq 0$. 
As mentioned before, for any $\vec{y}_l^k$ for which $L_{\epsilon_k} \vec{y}_l^k = 0$ it must hold that $\operatorname{deg}(\vec{y}_l^k) \geq \epsilon_k$ and therefore $\operatorname{deg}(\vec{y}_l) \geq \epsilon_k$. As the vectors $\left(\vec{y}_l\right)$ are sorted in order of increasing degree, the assertion $\operatorname{deg}(\vec{y}_l) \geq \epsilon_l$ for all $l$ follows.
\end{proof}

\section{Example of an optimal common resource state of $2\times 4\times 4$ states in $2\times 4\times 6$}
\label{sec:crex}

In order to illustrate the results and methods derived in Section \ref{sec:cr} we present here an example of an optimal common resource state, $\ket{\Omega}$, in $2 \times 4 \times 6$. We show here that 
\begin{equation}
\left|\Omega\right\rangle =\left|100\right\rangle +\left|001\right\rangle +\left|112\right\rangle +\left|013\right\rangle +\left|123\right\rangle +\left|024\right\rangle +\left|135\right\rangle 
\end{equation}
can be probabilistically transformed to any $2 \times 4 \times 4$ state.
Note that this state corresponds to the matrix pencil given in Eq. (\ref{eq:mmocr}). We have proven that $\ket{\Omega}$ is a common resource state for $2 \times 4 \times 4$ states in Theorem \ref{theo:mmcr} and shown its optimality in Theorem \ref{theo:mmocr}. 
Here, we explicitly show another possible way of obtaining any $2 \times 4 \times 4$ state through the intermediate step of $2 \times 4 \times 5$ states. Considering all possible KCF of matrix pencils corresponding to states in $2 \times 4 \times 4$ we obtain the following KCF representing all SLOCC classes in $2\times 4 \times 4$ (see also \cite{ChCh06})
\begin{equation}
\begin{array}{cccccccc}
 & \mathcal{P}_{1}(x)=M^1(0)\oplus M^1(1)\oplus M^1(x)\oplus N^1 &  &  &  & \mathcal{P}_{9}=M^2(0)\oplus M^1(1)\oplus N^1\\
 & \mathcal{P}_{2}=M^1(0)\oplus M^1(0) \oplus M^1(1)\oplus N^1 &  &  &  & \mathcal{P}_{10}=M^2(0)\oplus M^1(0)\oplus M^1(0)\\
 & \mathcal{P}_{3}=M^1(0)\oplus M^1(0) \oplus M^1(0)\oplus M^1(1) &  &  &  & \mathcal{P}_{11}=M^2(0)\oplus M^2(0)\\
 & \mathcal{P}_{4}=M^2(0) \oplus M^1(0) \oplus M^1(1) &  &  &  & \mathcal{P}_{12}=M^3(0)\oplus M^1(0)\\
 & \mathcal{P}_{5}=M^3(0)\oplus M^1(1) &  &  &  & \mathcal{P}_{13}=M^4(0)\\
 & \mathcal{P}_{6}=M^1(0)\oplus M^1(0) \oplus M^1(1)\oplus M^1(1) &  &  &  & \mathcal{P}_{14}=L_{1}\oplus L_{1}^{T}\oplus M^1(0)\\
 & \mathcal{P}_{7}=M^2(0)\oplus M^1(1)\oplus M^1(1) &  &  &  & \mathcal{P}_{15}=L_{1}\oplus L_{2}^{T}\\
 & \mathcal{P}_{8}=M^2(0)\oplus M^2(1) &  &  &  & \mathcal{P}_{16}=L_{2}\oplus L_{1}^{T}.
\end{array}
\end{equation}

In Fig. \ref{fig:mappings}, we explicitly present matrix pencils corresponding to a possible set of intermediate states in $2 \times 4 \times 5$ (which are all probabilistically reachable from $\ket{\Omega}$). Then, any $2 \times 4 \times 4$ state can be reached from at least one of the intermediate states as indicated by the transformations marked by arrows in Figure \ref{fig:mappings}. It can be easily verified that the marked transformations are indeed possible. Where applicable, we indicate the lemmata which can be used to prove this fact next to the arrows. 
\begin{figure}[!htb]
\includegraphics[scale=0.8]{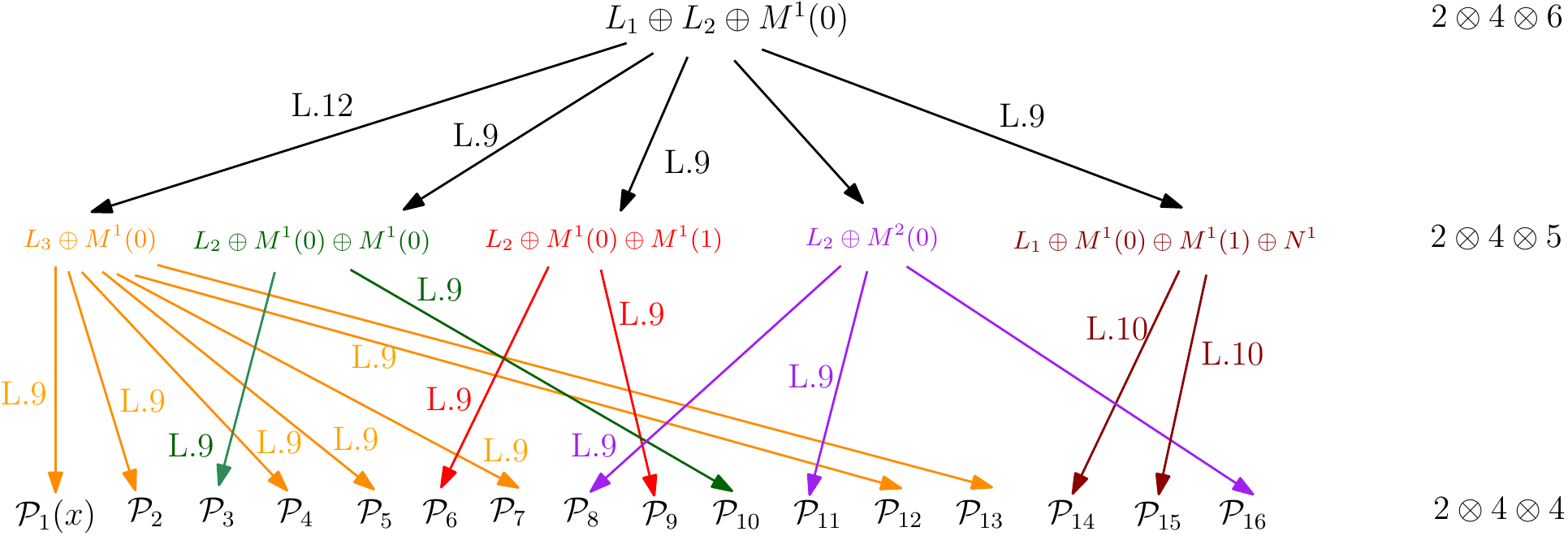}
\caption{Optimal common resource state of $2\times 4\times 4$ states in $2\times 4\times 6$. We present one possible common resource state in $2\times 4\times 6$ and show how any $2\times 4\times 4$ state can be reached through intermediate states in $2\times 4\times 5$. Where applicable, the labels next to the arrows indicate the lemmata which can be used to prove that the marked transformations are indeed possible. Note however that all transformations can be easily computed (without actually using the lemmata). Apart from the depicted transformations there exist many more. However, the ones shown here suffice to prove that the state $\ket{\Omega}$ is a common resource for all states in $\mathbb{C}^2 \otimes \mathbb{C}^4 \otimes \mathbb{C}^4$}
\label{fig:mappings}
\end{figure}

\end{document}